\documentclass[11pt]{article}
\usepackage{fullpage,enumitem,amssymb,xcolor}
\setlist{itemsep=.2em,topsep=.5em,parsep=.2em} 
\usepackage{fullpage}

\usepackage[utf8]{inputenc}
\usepackage[english]{babel}
\usepackage[hidelinks]{hyperref}

\usepackage{physics,url}
\usepackage{tikz}
\usetikzlibrary{positioning,calc}
\usepackage{cleveref}

\usepackage{amsmath,amsthm,amsfonts}
\usepackage{thmtools,thm-restate}
\newtheorem{theorem}{Theorem}[section]
\newtheorem{definition}[theorem]{Definition} 
\newtheorem{corollary}[theorem]{Corollary}
\newtheorem{lemma}[theorem]{Lemma}

\usepackage{graphicx}
\usepackage{xcolor}
\newcommand{\new}[1]{{#1}}
\usepackage{eqparbox}

\newcommand{\eps}{\varepsilon}
\newcommand{\cC}{ \mathcal{C}}
\newcommand{\checking}{\mathsf{Checking}}
\newcommand{\Cc}{\mathsf{C}}
\newcommand{\vac}{\bot}

\newcommand{\Span}{\mathrm{Span}}
\newcommand{\reff}{R_\mathrm{eff}}
\newcommand{\swap}{\mathsf{Swap}}
\newcommand{\send}{\mathsf{Send}}
\newcommand{\virtualsend}{\mathsf{VirtualSend}}
\newcommand{\flip}{\mathsf{Flip}}
\newcommand{\reverse}{\mathsf{Reverse}}
\DeclareMathOperator*{\E}{\mathbb{E}}

\newcommand{\set}[1]{\{#1\}}

\newcommand{\tk}{\mathsf{tk}}
\newcommand{\n}{\mathsf{n}}
\newcommand{\m}{\mathsf{m}}
\newcommand{\mm}{\mathsf{m'}}

\newcommand{\lmax}{\mathsf{l_{max}}}

\DeclareMathOperator{\polylog}{polylog}

\newcommand{\findany}{$\mathsf{FindAny}$}
\newcommand{\findmin}{$\mathsf{FindMin}$}
\newcommand{\quantumCover}{$\mathsf{QuantumCoverConstruction}$}

\usepackage{algorithm,algorithmicx,algpseudocode}
\algrenewcommand\algorithmicindent{0.5em}

\def\elected{\mbox{\small ELECTED}}
\def\nonelected{\mbox{\small NON-ELECTED}}

\begin{document}
\title{Tight Communication Bounds for Distributed Algorithms \\ in the Quantum Routing Model
%\fred{Arxiv note: Minor modifications compared to v1, in order to provide more details for the lower bounds. In order to prove the Query Complexity to Message Complexity Reduction Lemma (Lemma 7.3), we emphasize that we consider a slight variation of the adjacency array model with stronger queries.}
}

\author{Fabien Dufoulon\thanks{School of Computing and Communications, Lancaster University, Lancaster, UK.  Email: \href{mailto: f.dufoulon@lancaster.ac.uk}{\tt f.dufoulon@lancaster.ac.uk}.}
\and Frédéric Magniez\thanks{Université Paris Cité, CNRS, IRIF, Paris, France. Email:\href{mailto:frederic.magniez@irif.fr}{\tt frederic.magniez@irif.fr}. Research supported in part by the French PEPR integrated project EPiQ (ANR-22-PETQ-0007).}
\and Gopal Pandurangan\thanks{Department of Computer Science, University of Houston, Houston, TX 77204, USA. Email: \href{mailto: gopal@cs.uh.edu}{\tt gopal@cs.uh.edu}. Supported in part by ARO Grant W911NF-231-0191 and NSF grant CCF-2402837.}}
\date{}

\maketitle 

\begin{abstract}
We present new distributed quantum algorithms for fundamental distributed computing problems, namely, leader election, broadcast, Minimum Spanning Tree (MST), and Breadth-First Search (BFS) tree, in \emph{arbitrary} networks. These algorithms are (essentially) \emph{optimal} with respect to their communication (message) complexity in the {\em quantum routing model} introduced in Dufoulon, Magniez, and Pandurangan [PODC 2025]. 
The message complexity of our algorithms is $\tilde{O}(n)$ ($\tilde{O}$ hides a $\mathrm{polylog}\, n$ factor) for leader election, broadcast, and MST, and $\tilde{O}(\sqrt{mn})$  for BFS ($n$ and $m$ are the number of nodes and edges of the network, respectively). These message bounds are nearly \emph{tight} in the quantum routing model since we show almost matching corresponding \emph{quantum message lower bounds}. Our results significantly improve on the prior work of Dufoulon, Magniez, and Pandurangan [PODC 2025], who presented distributed quantum algorithms under the same model that had a message complexity of $\tilde{O}(\sqrt{mn})$ for leader election. 

Our algorithms demonstrate the significant communication advantage that quantum routing has over classical in distributed computing, since $\Omega(m)$  --- which can be as high as $\Omega(n^2)$ ---  is a well-established \emph{classical}  message lower bound for leader election, broadcast, MST, and BFS that applies even to randomized Monte-Carlo algorithms [Kutten, Pandurangan, Peleg, Robinson, and Trehan, JACM 2015]. Thus, our quantum algorithms can, in general, give a quadratic advantage in the communication cost for these fundamental problems.

A main technical tool we use to design our distributed algorithms is \emph{quantum walks based on electric networks}. We posit a framework for using quantum walks in the distributed setting to design communication-efficient distributed quantum algorithms. Our framework can be used as a ``black box'' to significantly reduce communication costs and may be of independent interest. Additionally, our lower-bound technique for establishing distributed quantum message lower bounds can also be applied to other problems.
\end{abstract}
%\maketitle

\thispagestyle{empty}
\clearpage

\tableofcontents
\thispagestyle{empty}
\clearpage
\setcounter{page}{1}

\section{Introduction and Overview}
\label{sec:intro}

The {\em communication (or message) complexity} is a critical measure of a distributed algorithm, which is defined by the {\em total number of messages} exchanged by the algorithm.  Minimizing the communication cost in distributed systems is crucial since it directly relates to energy consumption, network bandwidth usage, and overall running time and hence useful for many applications, see e.g., \cite{Woodruff,dongarra,message-byzantine,saia, AMP18}. 
There has been extensive work  
to \emph{optimize} message complexity even at the cost of increased running time or approximate quality of solution (see e.g., \cite{stoc17mst,tcssync,congested-podc2015,podc2021,itcs2024}). However, for many fundamental distributed computing problems  there are \emph{tight} lower bounds on the message complexity. In this paper, we show that one can overcome the \emph{classical} message lower bounds by taking advantage of the power of \emph{quantum} communication and routing.

\subsection{Context}

\noindent {\bf Broadcast.}
Consider the  {\em broadcast} problem, one of the most fundamental problems in distributed computing. We are given
an {\em arbitrary} network $G =(V,E)$ with $n$ nodes and $m$ edges, and the goal is to broadcast a data item initially residing
in some source node $s \in V $ to all nodes in the network.
The standard flooding algorithm, where each node (only) forwards the data item to all its neighbors when it first receives it, clearly requires $\Theta(m)$  messages. A natural question is whether one can accomplish broadcast using a significantly smaller number of messages.  (Note that $\Omega(n)$ is a trivial lower bound
on the message complexity of broadcast.)
It turns out that this is not possible, since it was established that  $\Omega(m)$ is a {\em classical lower bound} on the message 
complexity of broadcast for any distributed algorithm, even for {\em randomized Monte Carlo algorithms} (with constant success probability) and even if one wants the data item to reach only a majority of nodes~\cite{Kutten_2015_JACM}.
In fact, this lower bound shows something stronger: at least {\em a constant}  fraction of the $m$ edges of the network have to be used by any broadcast algorithm. This also shows that the broadcast bound holds regardless of the size of
a message.\footnote{In this paper, we assume messages of small size (typically $O(\log n)$ bits or qubits) as in the standard CONGEST model --- cf. \Cref{sec:model}.} Another important point to note is that the work of~\cite{Kutten_2015_JACM}  shows that the $\Omega(m)$ message lower bound holds  applies to any $m$ and $n$, i.e., for every  $m$, there exists graphs with  $\Theta(n)$ nodes and $\Theta(m)$ edges ($m= O(n^2))$, where the lower bound holds. Furthermore, the message lower bound applies {\em regardless} of the run time --- measured as the number of rounds or the {\em round complexity} --- of the distributed algorithm.
Thus, the above (classical) lower bound implies that, in general, since $m$ can be as high as $\Theta(n^2)$ (in dense graphs),
the broadcast communication cost is quadratically higher than $\Theta(n)$, the trivial lower bound.

\smallskip

\noindent {\bf Spanning Tree, MST, and BFS problems.}
The broadcast problem turns out to be equivalent in message complexity to the problem of constructing a {\em spanning tree} in a network, another fundamental distributed computing problem. Indeed, {\em once} a spanning tree is constructed, broadcast can be accomplished by sending the data item using (only) the edges of the spanning tree; clearly, this takes $O(n)$ messages. On the other hand, a broadcast algorithm can be used to construct a spanning tree without any extra messages since a node designates as its parent a node
from which it receives the item for the first time 
(if it receives from more than one at the same time, 
then it designates a unique one).  Thus, the $\Omega(m)$ message lower bound for broadcast also implies the same lower bound for spanning tree construction, and applies regardless of the round complexity and for every $m$.

The $\Omega(m)$ spanning tree lower bound also obviously applies to the construction of spanning trees with additional properties such as minimum spanning tree (MST) and the Breadth-First Search (BFS) tree, two other fundamental graph problems that have been extensively studied in distributed computing (cf.  \Cref{subsec:additionalRelatedWork}). This lower bound is tight, since there exists  $O(m)$ message algorithms for both MST and BFS \cite{Kutten_2015_JACM}.

\smallskip
\noindent {\bf Leader Election.}
The classical $\Omega(m)$ message lower bound also applies to {\em leader election}, one of the central problems in distributed computing studied
intensively for over five decades (see e.g., \cite{Lynch_1996_Book,peleg,Kutten_2015_JACM} and the references therein). Informally, leader election is the process by which nodes in a distributed network elect a single leader. Specifically, exactly one node must output the decision that it is the leader, and all other nodes must decide that they are non-leaders. The non-leader nodes need not be aware of the leader's identity.
This {\em implicit} version of leader election is standard (e.g., cf. \cite{Lynch_1996_Book}) and has been studied extensively.\footnote{In the {\em explicit} variant,
 every node must also know the identity of the unique leader.}  The $\Omega(m)$ message lower bound also applies (even) to implicit leader election and even for {\em randomized Monte Carlo algorithms} with constant success probability \cite{Kutten_2015_JACM} and {\em regardless of the round complexity of the algorithm}. Also, the bound applies to any $m$, i.e.,  given any $n$ and $m$, there exists a graph with $\Theta(n)$ nodes and $\Theta(m) = O(n^2)$ edges,
where the lower bound holds. The (implicit) leader election lower bound, in a sense, is more fundamental, since it implies the lower bounds for other problems such as broadcast, spanning tree construction, MST, and BFS \cite{Kutten_2015_JACM}.

The  $\Omega(m)$ message lower bound for all the above problems holds for graphs of {\em diameter greater than or equal to three} \cite{Kutten_2015_JACM}. However,  there are better bounds for graphs of diameter one (i.e., complete graphs) and diameter two:  in particular, for {\em leader election}, there are {\em tight} classical (randomized) message bounds of $\tilde{\Theta}(\sqrt{n})$ \cite{KPPRT15} and $\tilde{\Theta}(n)$\cite{ChatterjeePR20} for diameter one and diameter two respectively.\footnote{$\tilde{O}$ hides a $O(\polylog n)$ factor and $\tilde{\Omega}$ hides a $O(1/\polylog n)$  factor.} For spanning tree construction and broadcast, for complete and diameter two networks, $\tilde{\Theta}(n)$ is a tight bound \cite{congested-podc2015}, and for MST, $\Theta(n^2)$ is a tight bound \cite{korach,congested-podc2015}. 

\smallskip

\noindent {\bf Improving the $\Omega(m)$ classical lower bound.}
The $\Omega(m)$ classical message lower bound for all the above fundamental distributed computing problems in arbitrary networks raises the following fundamental question:
\begin{center}
    \fbox
    {
        \begin{minipage}{35em}
        Can we solve broadcast and other fundamental problems, such as spanning tree construction, MST, and BFS, with significantly less communication cost  (compared to $m$, the number of edges) in {\em arbitrary networks}   using the power of {\em quantum communication}?
        \end{minipage}
    }
\end{center}

\smallskip

\noindent {\bf Quantum communication advantage.}
The recent work of  \cite{DMP25} gave a partial answer to the above question for the problem of {\em leader election}.
This work posited a {\em quantum routing model}  of communication and presented distributed algorithms that are highly {\em communication-efficient}, in particular, those
that have {\em sublinear} in $m$ (or, even $n$) message complexity. Informally (see Section \ref{sec:model} for details), in the quantum routing model, a node can send a single message in {\em quantum superposition} to all its neighbors. Using this model, they give a quantum leader election algorithm in {\em complete networks} that, with high probability\footnote{Throughout, ``with high probability" means with probability at least $1-1/n^c$ for some constant $c$.}, elects a leader and has message complexity 
    $\tilde{O}(n^{1/3})$, beating the tight $\tilde{\Omega}(\sqrt{n})$ classical bound~\cite{KPPRT15,AMP18}. 
    For \emph{diameter-2 networks}, they present a quantum algorithm with message complexity $\tilde{O}(n^{2/3})$, beating the tight $\tilde{\Omega}(n)$ classical bound~\cite{ChatterjeePR20}.
For \emph{arbitrary networks} (which is the focus of this paper), they show that  leader election can be accomplished with message complexity 
    $\tilde{O}(\sqrt{mn})$. Note that this bound can be much smaller than the tight classical bound of $\Omega(m)$~\cite{Kutten_2015_JACM}, especially in dense graphs ($m = \Theta(n^2)$), where it gives a message complexity
    of $\tilde{O}(n^{1.5})$ messages.  
    However, they do not establish any nontrivial lower bounds and leave open the possibility that one can obtain stronger bounds in the quantum routing model.

\subsection{Classical Distributed Computing Model} 
\label{sec:model}
We first formally describe a standard distributed computing model in the classical setting, namely the synchronous CONGEST message-passing model (e.g., see 
\cite{peleg}). The quantum version of the model will be described in \Cref{sec:qmodel}.

We consider an arbitrary network  
represented as an undirected connected graph $G=(V,E)$ with $n=|V|$ nodes and $m=|E|$ edges. We will also denote by $D$ the network diameter.
Each node 
runs an instance of the same distributed algorithm.
The computation advances in {\em synchronous rounds} where, in every round, nodes
can send messages, receive messages that were sent in the same round by
neighbors in $G$,
and perform some local computation.
In the CONGEST model~\cite{peleg}, a node can send, in each round, at most one message of size $O(\log n)$ bits per edge.

All processors have access to {\em private} unbiased random bits. 
Also, we {\em do not} assume nodes have identities (using private randomness, nodes can generate
unique identifiers with high probability). 
Finally, throughout the paper, we assume that all nodes know a polynomial upper bound on $n$.

Messages are the only means of communication; in particular, nodes
cannot access the coin flips of other nodes, and do not share any memory.
Throughout this paper, we assume that all nodes are awake initially and  simultaneously start executing the algorithm. 

 We assume that nodes, initially, have only {\em local knowledge}, i.e., knowledge only of themselves; in other words, we assume the {\em clean network model} --- also called the {\em KT0 model} \cite{peleg}, which is standard and commonly used.\footnote{In Section \ref{subsec:additionalRelatedWork}, we discuss another well-studied model, namely the KT1 model, where nodes have distinct identities (IDs) and each node is assumed to know the identities of its neighbors as well. Clearly, this model does not apply to settings in which nodes lack IDs, which comes under the KT0 model.\label{ft:kt1}} Each node $v$ has $\deg(v)$ ports that it can use to communicate respectively with its $\deg(v)$ neighbors; each  port $p=(v,u)$ of $v$ is connected
 (exclusively) to a port $p'=(u,v)$ of its neighbor $u$. Finally, we denote by $N(v)$ the neighbors of $v$ in $G$.
 (Sometimes we refer to port $p$ as an integer, sometimes as $u\to v$.)

\subsection{Quantum Distributed Computing  Model}
\label{sec:qmodel}

The quantum version is analogous to the classical version described above in many respects,
while differing mainly in how messages can be sent.
In particular, we assume the {\em quantum CONGEST} model where each message consists
of $O(\log n)$ qubits \cite{ElkinKNP14}.  However, a node can send a message to {\em its neighbors} in {\em quantum superposition} as assumed in the model of quantum routing and quantum message complexity introduced in~\cite{DMP25} (and also subsequently used in \cite{legall2025, robinson2026}) that we present below. Our work takes a theoretical complexity perspective on that model. As for
a more detailed discussion on its physical feasibility, we refer to Appendix \ref{app:formal}.

\smallskip
\noindent {\bf Quantum routing model.}
We first recall the notion of quantum (or coherent) message routing introduced in~\cite{DMP25}.
This model allows a node to select {\em quantumly} which neighbors it communicates to --- that
is, its choice is in a {\em quantum superposition} --- instead of selecting via a random distribution as in the classical setting.  Of course, the message itself may consist of quantum bits.
Such a selection or control is typically referred to as \emph{coherent control} in quantum physics. 

 The quantum routing model can be viewed 
 as importing into distributed computing both the notion of quantum query complexity (see~\cite{hamoudi25} for a recent survey), where an oracle function can be queried in a superposition of inputs for the cost of a {\em single} query, and the notion of superpositions of trajectories~\cite{ck19}, which was originally defined and studied in the context of quantum Shannon theory. This work considered the case of superpositions of quantum channels~\cite{ck19}, in which a particle is sent in a coherent superposition of two or more transmission links, analogous to Young's double-slit experiment, in which a single particle can be in a superposition of two trajectories.
This has been further modeled by the notion of routed quantum circuits~\cite{vkb21}. Similar notions of accesses in superposition have been considered for quantum random access memory (QRAM) and quantum random access gates (QRAG). This has been considered for quantum algorithms~\cite{Amb07} and quantum programming languages~\cite{ACCRV23}. 

Since we aim for a message complexity that is sublinear in $m$ (the number of edges) it is crucial that we allow the possibility of controlling the message recipient quantumly. Thus, in the quantum routing model, a node may send a message to all (or a subset of its neighbors) in quantum superposition, and this is counted as {\em one} message. To illustrate the improvement in message complexity we can obtain by quantumly controlling messages, we consider a star graph on $n$ nodes where the central node has to locate a (unique) leaf node with a specific property. This search problem can be solved by using $O(\sqrt{n})$ quantum message complexity (with constant success probability) using the distributed Grover search algorithm (see \cite{DMP25} for details). This is in contrast to the $\Omega(n)$ classical lower bound. This is possible quantumly, since the central node can send a message in {\em superposition to all other nodes}  ---  which is counted as one message in the quantum routing model. 

We exploit a similar phenomenon to design communication-efficient distributed {\em quantum walks}. Intuitively, these quantum walks can detect the presence of a specific node quadratically more efficiently than classical walks (see \Cref{subsec:distQuantWalksOverview}). 
Just as in random walks, the quantum walk's token proceeds one step per (quantum-routed) message, but in superposition to all of its neighbors and so on.
Then, just as in classical random walks, the $t$ steps of the quantum walk incur $t$ messages. 
\smallskip

\noindent {\bf A formal description.}
Similar to the classical setting (see \Cref{sec:model}), each node $u$ has $\deg(u)$ ports, one for each node $v$ connected to $u$. For each such port, $u$ holds two registers ${u\to v}$ and ${u\gets v}$ --- respectively called \emph{emission and reception registers}. With some abuse of notations, we denote a port $p$ by the pair $p=(u,v)$, meaning that this is a port based in $u$ for communicating with $v$. 
Next, we formally describe the communication in the quantum routing model. For any node $u$, each communication consists of (1) message preparation, and (2) message emission. 

For message preparation, $u$ first sets up some superposition $\ket{\psi}=\sum_{p=(u,v)} \alpha_p \ket{p}$ of ports (and therefore of nodes) as well as some message $\ket{m}$ (where $\ket{m}$ could itself be port dependent). Note that since some $\alpha_p$ could be $0$, this also captures $u$ selecting a subset of ports, and furthermore, when only one port $p$ satisfies $\alpha_p\neq 0$, the selection is deterministic. Then, $u$ prepares the message (using unitaries called control-swap) as follows, where all explicit registers are located in $u$:
$$\Big(\ket{m}\otimes \Big(\sum_{p=(u,v)} \alpha_p \ket{p}\Big)\Big)_u\bigotimes_{p=(u,w)} \ket{\vac}_{u\to w}  
\mapsto  \sum_{p=(u,v)} \alpha_p \Big( \ket{\vac} \otimes \ket{p} \Big)_u \otimes\ket{m}_{u\to v}\bigotimes_{p=(u,w\neq v)} \ket{\vac}_{u\to w}.$$

For message emission, $u$ applies the $\send_{u}$ operator to the above obtained (local) quantum state. That operator models the message emission from $u$ to all of its neighbors, and is a tensor product of the operators for message emission over the ports of $u$. More precisely, for any port $p=(u,v)$ of $u$, the corresponding operator simply swaps ${u\to v}$ in $u$ with register ${v\gets u}$ in $v$: $\send_{u\to v} (\ket{m}_{u\to v}\ket{\vac}_{v\gets u})= \ket{\vac}_{u\to v}\ket{m}_{v\gets u}$. As for $\send_u$, it is simply $\send_u = \bigotimes_{p=(u,v)}  \send_{u\to v}$.
Then, the message emission of $u$ (where for the communication registers, we explicit only the emission registers of $u$, and the reception registers of $u$'s neighbors) results in 
$$ \send_{u}\left(\sum_{p=(u,v)} \alpha_p \Big( \ket{\vac} \otimes \ket{p} \Big)_u \otimes\ket{m}_{u\to v}\bigotimes_{p=(u,w\neq v)} \ket{\vac}_{u\to w}  \bigotimes_{p=(z,u)} \ket{\vac}_{z\gets  u}\right) $$
which turns out to be
$$    \sum_{p=(u,v)} \alpha_p \Big( \ket{\vac} \otimes \ket{p} \Big)_u 
    \bigotimes_{p=(u,w)} \ket{\vac}_{u\to w} \otimes \ket{m}_{v\gets u} \bigotimes_{p=(z\neq v,u)} \ket{\vac}_{z\gets  u}.$$
In this example, one can see that, in each term of the superposition, the message $m$ has been sent to a single node $v$.

Finally, note that the communication over the network is done by having all nodes prepare their message (if they have any), then apply the global operator for the emission of all messages sent across the network, which is simply $\send=\bigotimes_{u}\send_u$.

\smallskip

\noindent {\bf Quantum message complexity.}
We continue with the notion of quantum message complexity.

At any time, the system's configuration can be in a superposition of all possible behaviors of deterministic configurations. The transition from one configuration to another is done according to a unitary transformation, made of two steps: (1) Perform the same local unitary to each node on their local data, local memory, and reception/emission registers; (2) Send non-empty messages prepared in step (1), where, implicitly, a non-empty register selects a node to send a message.

The \emph{message complexity} $M$ of a distributed algorithm is simply the sum of the message complexities during each of its rounds.
Deterministically, a round of communication has \emph{message complexity} $C$ when the network carries at most $C$ messages of $O(\log(n))$ quantum bits in that round.
Quantumly, we extend this notion to a superposition of configurations.
A round of \emph{quantum} communication has \emph{message complexity} $M$ when the global state of the network 
is in a superposition of (deterministic) configurations with message complexity at most $M$.

\subsection{Problems}\label{sec:pbs}
For all the problems below, we assume that we are given an {\em arbitrary} (undirected) connected network $G=(V,E)$ with $n = |V|$ nodes and $m = |E|$ edges. Furthermore, for the MST problem, we assume that edges have weights bounded by a polynomial in $n$. 
The goal is to design distributed algorithms with as small a {\em message complexity} as possible. 

\smallskip
 \noindent {\bf Leader Election.}
Every node $u$ has a special variable $\texttt{status}_u$ that it can set
to a value in $\{\bot, \nonelected, \elected \}$; initially we assume
$\texttt{status}_u = \bot$.
An \emph{algorithm $A$ solves leader election in $T$ rounds}
if, from round $T$ onward, exactly one node has its status set to $\elected$ 
while all other nodes have theirs set to $\nonelected$. This is the requirement 
for standard (implicit) leader election. 

\smallskip
\noindent {\bf Broadcast.} Given a distinguished source node $s \in V$ that has a data item (say of size $O(\log n)$ bits),
the goal is to disseminate the item to all nodes in the network.

\smallskip
\noindent{\bf Spanning Tree (ST) and Minimum Spanning Tree (MST).} Given a connected graph, the goal is for each node to output (i.e., should locally know) the edges of a spanning tree {\em incident on itself}. For MST, we assume arbitrary edge weights (weights are at most a polynomial in $n$) and,  
 as is standard, we assume that they are unique (this makes
the MST unique) \cite{eatcsmst}. The goal is for each node to output the MST edges incident on itself. 

\smallskip
\noindent {\bf Breadth-First Spanning Tree (BFS).} Given a distinguished node $r \in V$, the goal is for each node to output the edges of a BFS tree rooted at $r$ that are {\em incident on itself}.

\subsection{Our Contributions}
\label{sec:contributions}

We present two sets of results: distributed quantum algorithms and quantum message lower bounds.

\smallskip

\noindent {\bf Distributed quantum algorithms.} 
We present new communication-optimal distributed quantum algorithms  in the quantum routing model for all the fundamental distributed computing problems discussed earlier, namely,  {\em leader election}, {\em broadcast}, {\em Minimum Spanning Tree (MST)}, and {\em Breadth-First Search (BFS) tree}, in {\em arbitrary} networks.
For leader election and MST, we give distributed 
quantum algorithms with message complexity of $\tilde{O}(n)$ (cf. \Cref{sec:LE}).
This also implies the same bounds for broadcast and spanning tree construction, since an MST is also an ST and can be used to do broadcast using an additional $O(n)$ messages. 
For BFS tree construction, we present a distributed quantum algorithm with message complexity $\tilde{O}(\sqrt{mn})$ (cf. \Cref{sec:BFS}).

Our results demonstrate the significant communication advantage that quantum routing has over classical in distributed computing since broadcasting, leader election and MST construction can be accomplished by sending only $\tilde{O}(n)$ messages and BFS in $\tilde{O}(n^{1.5})$ messages. This  is impossible, in general, in the classical setting, since $\Omega(m)$  --- which can be as high as $\Omega(n^2)$ ---  is a well-established
{\em classical}  message lower bound for leader election, broadcast, MST, and BFS  
that applies even to randomized Monte-Carlo algorithms \cite{Kutten_2015_JACM}. Thus, in general,  for leader election, broadcast, and MST, quantum routing capabilities can give a {\em quadratic} advantage in message complexity over classical, and for BFS, a $\sqrt{n}$ factor advantage.

Our results also improve over the work of Dufoulon, Magniez, and Pandurangan \cite{DMP25}, who presented distributed quantum algorithms with message complexity of $\tilde{O}(\sqrt{mn})$ for leader election and MST in arbitrary networks. For BFS, no sublinear-in-$m$ message complexity quantum algorithm was known, prior to our work, for arbitrary networks.

While we do not focus on optimizing the {\em round complexity} of our algorithms, our algorithms for leader election, MST and broadcast run in $\tilde{O}(n)$ rounds, and our algorithm for BFS runs in $\tilde{O}(D \sqrt{n})$ round ($D$ is the network diameter). 
These can be higher than the $\Omega(D)$ lower bound on the round complexity for all these problems except for MST, which has a $\tilde{\Omega}(D+\sqrt{n})$ lower bound (these respective round lower bounds apply for both classical
and quantum \cite{Kutten_2015_JACM,ElkinKNP14,GKM09}). However, in general, our $\tilde{O}(n)$ round complexity is optimal in graphs of diameter $\Theta(n)$. It is open whether we can design distributed quantum algorithms that are simultaneously optimal
with respect to  both message and round complexity  (of the respective problems) and we conjecture that this is not possible in the quantum setting, unlike in the classical setting where such algorithms are known for all the above problems \cite{Kutten_2015_JACM, stoc17mst} (cf. \Cref{sec:conclusion}).

The key technical tool we use to design our distributed algorithms for leader election and MST is {\em quantum walks} based on {\em electric networks}.
We propose a framework for utilizing quantum walks in a distributed setting to design communication-efficient distributed quantum algorithms. This can be useful for other distributed problems as well (cf. \Cref{sec:conclusion}). 

\smallskip
\noindent {\bf Quantum message lower bounds.} 
    We show that all our algorithms are indeed message-optimal (up to a $O(\polylog n)$ factor) for the respective problems by showing almost {\em matching}  quantum message lower bounds in the quantum routing model (cf. \Cref{sec:lb}). 
    To the best of our knowledge,  these are the first non-trivial {\em message} lower bounds (cf. Theorems \ref{lb:BFS} and \ref{lb:LE}) for  distributed quantum computing. 
    
    Specifically, we show two main results. First, any quantum distributed algorithm solving (even) implicit leader election on graphs of diameter at least three must send $\Omega(n)$ quantum messages. The same lower bound for MST and ST follows from this lower bound.
    Second, any quantum distributed algorithm for constructing a BFS tree must send $\Omega(\sqrt{mn})$ quantum messages. This also trivially implies the same lower bound for the Single Source Shortest Paths (SSSP) problem.

    Our technique for showing quantum message lower bounds may also be helpful for other problems, and is as follows. We first show lower bounds on the
    {\em quantum query complexity} of the graph problem of interest (say BFS or leader election) on a suitably constructed graph family in the {\em centralized setting}.
    %\fabien{Add detail here.}
    In this setting, we assume that the algorithm is given a query access to the {\em adjacency array} of the graph \cite{DHHM06}.
    %\new{(in fact, with slightly stronger queries than \cite{DHHM06})}. 
    Then we leverage a generic reduction lemma (cf. \Cref{lem:CentralizedToDistributedLowerBound}) 
    to ``lift" these lower bounds shown in the (centralized) query complexity model for an (unweighted) graph problem to the corresponding message complexity lower bounds.

\section{Technical Overview}
\label{sec:high-level}

We give the high-level idea behind our results, starting from our distributed quantum walks. Then, we describe our $\tilde{O}(n)$ communication MST (as well as leader election and broadcast) algorithm, which hinges on these distributed quantum walks. Finally, we give some intuition regarding our $\tilde{O}(\sqrt{nm})$ communication BFS algorithm as well as our quantum message complexity lower bounds. 

\subsection{Distributed Quantum Walks}
\label{subsec:distQuantWalksOverview}

We first consider a sequential setting and give some intuition regarding quantum walks that are based on the framework of {\em electric networks}. (Note that we give here a somewhat informal version of sequential quantum walks; interested readers can find the formal version in \Cref{electric}.) Then, we describe what obstacles we overcome for our distributed implementation of such quantum walks.

\smallskip
\noindent {\bf Sequential Quantum Walks and (Marked) Element Finding.} Both classical random walks, and quantum walks, can be designed and analyzed via the framework of electric networks.

In the classical setting, take a (single) random walk on some graph $G = (V,E)$ whose behavior is described by weights $(w_e)_{e \in E}$ on the edges of $G$: that is, at node $v$, the random walk moves along some edge $e \in E$ to neighbor $u$ with probability $w_{e}/w_v$ where $w_v = \sum_{e' \in E : e' \ni v} w_{e}$. Suppose that the walk starts at some node $r$, and that we are interested in the expected number of steps required to reach some subset $M \subseteq V$ of marked nodes --- that is, the \emph{hitting time} $H_{r,M}$. Then, an upper bound on this hitting time can be obtained by considering the corresponding electric network with the same graph structure as $G$, where the weight of some edge $e$ denotes the conductance (i.e., the inverse of the resistance) of that edge, and one unit flow $f$ goes from (source node) $r$ to the (sink) nodes of $M$. 
The energy dissipated by the flow is $\sum_{e \in E} f_e^2/w_e$. The minimum dissipated energy over all unit flows, which is achieved when the unit flow corresponds to electrical current and thus follows Kirchhoff's Laws, is the \emph{effective resistance} between $r$ and $M$. It was shown (see, e.g., \cite{CRRST97}) that the hitting time can be upper bounded by the effective resistance, that is, $H_{r,M}= O(RW)$ where $R$ is any upper bound on that effective resistance and $W$ is the sum of all weights in $G$.

We can quantize the above electric framework following the approach of Belovs \cite{belovs2013} (see Section \ref{subsec:additionalRelatedWork} for a more in-depth discussion), with some minor changes. To do so, we adapt the above walk as follows (see also~\cite{apers22,hari25,zur25} for similar but slightly different approaches).
First, we add a single ``artificial'' node denoted by $\overline{r}$ to $G$, connected to its only neighbor $r$ via an edge of weight $1/(C_1R)$, for some constant $C_1 \geq 1$ --- this artificial node plays the role of a (delayed) self-loop on $r$. Second, due to reversibility, the walk happens on edges --- capturing the current node on which the walk lies, but also the previous node so we can reverse the walk --- rather than on the vertices. More formally, vectors $\{\ket{u,v}, \ket{v,u} \mid \{u,v\} \in E\}$ form the computational basis of the vector space of the quantum walk.
Finally, a single step of the quantum walk --- denoted by operator $U$ --- consists in applying a diffusion operator $D$ followed by a swap operator $(-\swap)$. Intuitively, the diffusion serves as the quantum analog, when the walk is on $v \notin M$ to choosing a random incident edge $e$ with probability proportional to $w_{e}$ (and when $v \in M$, then the diffusion is the identity) whereas the swap moves the quantum walk over edge $e$ to the other endpoint (the minus sign is a technical detail that we can ignore here).

What properties hold for this quantum walk? 
First, the quantum walk operator $U$ is unitary, thus the eigenvalues of $U$ have unit modulus, that is, they are of the form $e^{i\alpha}$ where we can take, w.l.o.g., $\alpha \in (-\pi,\pi]$. (Note that as a result, quantum walks do not converge to a stationary distribution.) Furthermore, it turns out that 
the state $\ket{\sigma} = (\ket{r, \overline r} - \ket{\overline r, r})/\sqrt 2$ has a strong overlap with the 1-eigenspace of $U$ when $M \neq \emptyset$, whereas $\ket{\sigma}$ has a small overlap not only with the 1-eigenspace of $U$ but also with the span of all eigenvectors of $U$ with phase smaller than $\Theta(1/\sqrt{RW})$, when $M = \emptyset$. 

These properties of $\ket{\sigma}$ play a crucial role in how we use quantum walks. Indeed, they allow us to use quantum phase estimation on $\ket{\sigma}$ to detect the presence of marked nodes (with constant probability). More precisely, even though $\ket{\sigma}$ is not itself an eigenvector of $U$, when $M \neq \emptyset$ then the estimated phase of $\ket{\sigma}$ is 0, whereas otherwise (when $M = \emptyset$), the estimated phase is far away from 0. These two statements hold with some probability, that depends on the number of quantum walk steps executed within the phase estimation procedure. However, $\tilde{O}(\sqrt{RW})$ quantum walk steps suffice to obtain a high enough precision estimation (for our needs) on the phase of $\ket{\sigma}$.
Finally, we can combine this estimation procedure with a simple binary search over the nodes --- where only marked nodes within the considered ``node range'' still act as marked nodes in the walk --- to \emph{find a marked node}, at an additional $O(\log n)$ multiplicative factor.

\smallskip
\noindent {\bf Distributed Implementation.} The above sequential quantum walk can be naturally adapted, communication-efficiently, to the (distributed) quantum routing model. Indeed, that walk consists of a {\em single token} which moves once per step of the walk, and each such step can be implemented via a {\em single quantum-routed message} (cf. Lemma \ref{dqw}). Note that this remains the case even as the token of the quantum walk, initially located on edges incident to the root, lies in a superposition over an increasingly large support (i.e., edges that the token is on, upon measurement) as the walk takes more and more steps. 
However, adapting how such quantum walks find (marked) elements to the quantum distributed setting is non-trivial.
The first obstacle lies in the centralized quantum control required by the quantum phase estimation procedure. This centralized control can lead to a large message overhead. However, we circumvent this by restricting ourselves to a more basic primitive, that we call \emph{quantum phase detection}, which is sufficient for our purpose and easier to adapt to distributed computing. 

A second obstacle lies in the need, for our distributed implementation, to concurrently execute multiple distributed quantum walks that, at first glance, interfere with each other. Now, if these walks were defined on pairwise node-disjoint subgraphs of the communication graph $G$, then they would not interfere with each other and thus would correctly find a marked element per subgraph (given the above phase detection procedure). However, our distributed implementation executes quantum walks that satisfy weaker properties: that is, these quantum walks run on subgraphs that are edge-disjoint, except for edges that contain some walk's marked element. Nonetheless, we modify the naive, distributed quantum walk step to avoid interference between multiple such quantum walks, using the crucial property that diffusion operator is the identity on marked elements.  

\subsection{Leader Election and MST in $\tilde{O}(n)$ Quantum Communication}
\label{subsec:high-levelLE}

Consider the canonical spanning tree (ST) construction problem; it is not difficult in the distributed setting to modify the solution to ST so that it solves MST (and hence, leader election and broadcast). In the ST problem, we have to construct a spanning tree in an  arbitrary input graph. If we want to accomplish this in a communication-efficient way, then we apply the well-known Gallager-Humblet-Spira (GHS) distributed algorithm (see e.g., \cite{GHS, peleg, eatcsmst}). This can also be  thought of as a {\em tree merging algorithm} \cite{eatcsmst}. 

\smallskip

\noindent {\bf Context.}
The GHS algorithm begins with $n$ singleton clusters, one for each node. In every phase, it merges some clusters so that the number of clusters goes down by a constant factor. A cluster (say $C$) is merged with another cluster by finding an {\em outgoing edge}, i.e., an edge with one endpoint in $C$ and the other in $V\setminus C$.
Thus the key step in the GHS algorithm that is non-trivial to implement communication-efficiently is finding an outgoing edge.
Classically, this can be implemented by each node in the cluster communicating through all of its incident edges to figure out if the other endpoint lies outside the cluster. This essentially gives rise
to a $\tilde{\Theta}(m)$ messages algorithm and this is (essentially) message optimal \cite{Kutten_2015_JACM}.

Quantumly, one can do better by using {\em Grover search} (see e.g., \cite{Grover96,bbht98,DHHM06}). 
Indeed, each node can perform a Grover search to find an outgoing edge (if any exists) among its incident edges in the quantum routing model \cite{DMP25}. Plugging this quantum outgoing edge search primitive into the GHS distributed algorithm gives rise, as shown by \cite{DMP25}, to a $\tilde{O}(\sqrt{mn}) = \tilde{O}(n^{1.5})$ messages ST algorithm. 

\smallskip
\noindent {\bf Challenges.}
Solving ST using only $\tilde{O}(n)$ message complexity boils down to finding an outgoing edge, for any cluster $C$, using only $\tilde{O}(|C|)$ messages in the quantum routing framework. At first glance, this looks plausible since any cluster must have an $f = \Omega(1/|C|^2)$ fraction of outgoing edges\footnote{Indeed, if we let $E_{out}$ and $E_{in}$ be respectively the number of outgoing edges and that of edges between two nodes in $C$, then the fraction of outgoing edges is $E_{out}/(E_{in}+E_{out}) \geq E_{out}/(|C|^2+E_{out}) \geq 1/(|C|^2+1)$.}, thus a Grover search on the edges incident to the cluster need only send a message over (or query) $\Theta(\sqrt{1/f}) = \Theta(|C|)$ of the incident edges to find an outgoing edge with constant probability. 

However, such a Grover search is not easy to implement in the distributed setting. Indeed, 
if each node runs its own Grover search, then $\tilde{O}(\sqrt{mn})$ messages are sent, just as in \cite{DMP25}. On the other hand, one could run a centralized Grover search, where the root node of the cluster controls the search over the cluster's incident edges.
However, this centralized search comes at a \emph{locality cost}. In particular,
the cluster $C$ may have a large diameter (up to $\Theta(|C|)$) and thus one has to pay, \emph{for each query of the Grover search}, the message cost of communicating back and forth from the root.
As a result, this centralized Grover search could take up to $O(|C|^2)$ messages.

\smallskip
\noindent {\bf Our Approach.}
We overcome the locality cost of a centralized search over a cluster's incident edges through a novel use of
{\em quantum walks}. 
Quantum walks can be thought of as the quantum counterpart of classical random walks \cite{ADZ93}.

More precisely, to find the outgoing edge for any cluster $C$, we consider the corresponding subgraph $H$ induced by the set of edges incident to (any node in) $C$, and ``mark'' the nodes of $H$ that are outside $C$, if there exist any. Then, first note that if $H$ contains at least $|C|^2+1$ edges, finding an outgoing edge can be done simply by looking for some node in $C$ with degree at least $|C|+1$ within $H$ and checking among its incident edges for some outgoing edge. 

On the other hand, if $H$ contains less than $|C|^2$ edges, then we can search for a marked node (if one exists) by using a distributed quantum walk on $H$ that starts at the root of the cluster $C$ and uses the following edge weights: for each edge in the cluster's spanning tree, assign a weight that is equal to its distance from the root, and for each other edge in $H$, a unit weight. Then, the edge weights add up to $W = O(|C|^2)$ in total, and $R = O(\sum_{i=1}^{|C|} 1/i) = O(\log |C|)$ upper bounds the effective resistance.\footnote{Note that this non-uniform assignment of edge weights is crucial. Indeed, if edges in $H$ all have unit weights then $W = O(|C|^2)$ and $R = O(|C|)$ only. The resulting upper bound on the quantum walk steps is only $O(|C|^{3/2})$.} In which case, the quantum version of the above walk takes $\tilde{O}(\sqrt{R W}) = \tilde{O}(|C|)$ steps (cf. Section \ref{subsec:distQuantWalksOverview}), and its distributed implementation takes as many messages. 

One downside of the walk is that it only detects the presence of an outgoing edge, rather than find one. Still, we can execute a binary search over the edges incident to $C$, but excluding the edges in the cluster's spanning tree. For each reduced subgraph (i.e., reduced set of edges, and cluster's spanning tree) that we obtain, we execute a (distributed) quantum walk. This induces only an $O(\log n)$ multiplicative blow-up.

%Fabien: change here
\subsection{BFS in $\tilde{O}(\sqrt{mn})$ Quantum Communication}
\label{sec:high-levelBFS}

The traditional distributed algorithm for building a BFS tree does so via flooding the communication graph. This builds the BFS tree layer per layer, in an ``inside-out'' fashion. More precisely, the root sends messages to its neighbors, which enter the first layer of the BFS. These newly joined nodes in turn send messages to their neighbors, and among them, those neighbors not already in the BFS join the second layer, and so on. However, it is not clear how quantum routing can help significantly reduce the message complexity of this basic BFS algorithm. 

Our distributed quantum BFS algorithm achieves $\tilde{O}(\sqrt{mn})$ communication by changing the above scheme in two major ways. The core reason for theses changes lies in an \emph{efficient reduction} of the BFS algorithm to search problems over nodes' 1-hop neighborhoods. Indeed, it is well-known that a search problem over $x$ elements can be solved with a quadratic improvement in the quantum setting: Grover Search uses only $O(\sqrt x)$ quantum queries over the elements, whereas classically, the search problem needs $\Theta(x)$ queries. The quantum routing model enables the use of distributed Grover Search to perform a communication-efficient search over a node's 1-hop neighborhood~\cite{DMP25}.

Our first change is to build the BFS tree layer per layer but in an ``outside-in'' fashion. More precisely, whenever the BFS tree has been built up to some layer $i \geq 1$, then \emph{nodes outside the BFS tree send messages and find out if one of their neighbors is in layer $i$}. 
This can be done via distributed Grover Searches: each node not yet in the BFS tree executes such a search over its 1-hop neighborhood. As a result, we grow the BFS tree by one layer using $\tilde{O}(\sqrt {\deg(v)})$ messages per node $v$, where $\deg(v)$ denotes the degree of $v$, and thus $\tilde{O}(\sqrt{mn})$ messages overall (by the Cauchy-Schwarz inequality).
Unfortunately, this one change does not suffice, since the resulting message complexity grows linearly in the depth of the BFS tree. Nonetheless, we can use this algorithm to build BFS trees up to some small $\polylog n$ depth using only $\tilde{O}(\sqrt{mn})$ messages, which we crucially use in our second change.

Our second change aims to significantly reduce the number of times each node outside the BFS runs Grover Search to join the BFS, down to $O(\log n)$ times. To do so, yet compute a BFS tree correctly, we follow the scheme from \cite{DPRS25}. More precisely, we compute a sparse neighborhood cover, that is, a collection of (not necessarily disjoint) trees such that each tree has depth $O(\log n)$, each node is in at most $O(\log^2 n)$ of them, and the (closed) neighborhood of any node $v$ is entirely contained within one of these trees. Given such a cover, we can ensure that all nodes learn whenever they are close to the current outermost layer of the computed tree, using $\tilde{O}(n)$ messages only. As for computing such a cover, this can be done by plugging in the algorithm described above, that uses $\tilde{O}(\sqrt{mn})$ messages to build low-depth (i.e., $\polylog n$ depth) BFS trees, into the sparse neighborhood cover construction due to Elkin \cite{Elkin06}.

\subsection{Quantum Communication Lower Bounds}
\label{sec:high-levelLB}

Our quantum communication lower bounds leverage a simple, generic reduction from the quantum query complexity of any graph problem (see e.g., \cite{DHHM06}) on some unweighted graph $G$, in a centralized setting where one initially knows the degree of all nodes in $G$ and accesses $G$ via adjacency arrays \new{--- which amounts, essentially, to the adjacency array model \cite{DHHM06} but with slightly stronger queries ---} to the quantum communication complexity of the same problem in a distributed setting where $G$ is the communication graph. In addition to the reduction, our lower bounds rely on two separate quantum query complexity lower bounds. 
%\new{(Note that our lower bounds also hold for the original adjacency array model.)}\fred{Maybe suppress this note.} 
The first quantum query complexity lower bound shows that graph connectivity on graphs of diameter 3 and above requires $\Omega(n)$ quantum queries, and the second one that computing a BFS requires $\Omega(\sqrt{mn})$ quantum queries. 
To obtain these two lower bounds, we use the adversary method with non-negative weights \cite{Ambainis02}. Note that since we are initially given the nodes' degrees, our lower bound graphs must ensure that the nodes' degrees do not provide any information.

For graph connectivity, we leverage lower bound graphs typically used for (classical) message complexity lower bounds of distributed leader election. More precisely, we consider two families of graphs: the bad instances, which are made up of two disconnected cliques of size $n/2$ each, and the good instances, which are connected diameter-3 $n$-node graphs obtained from bad instances via a small change on the adjacency arrays: taking one edge per clique, say $\{a,b\}$ and $\{c,d\}$, and crossing them, i.e., replacing the two edges by \emph{bridge edges} $\{a,c\}$ and $\{b,d\}$. Intuitively, a good and bad instance being related via such a small, hard to detect change --- the two bridge edges are ``hidden'' within $\Theta(n^2)$ edges --- in the adjacency arrays makes it hard for the algorithm to tell the two instances apart with few queries. Whereas classically, this implies that connectivity requires $\Omega(n^2)$ queries, in the quantum setting, these lower bound graphs combined with the adversary method imply that $\Omega(n)$ queries are needed.

For BFS, we construct a different family of lower bound graphs. For any fixed node size $n$ and degree $d \leq n-1$, the lower bound $n$-node $nd$-edge graphs essentially have the same node and edge structure (see Figure \ref{fig:bfs}), but differ in how edges are locally ordered. For example, if we take a graph $G$ and permute two edges incident to some node $v$ within the node's local ordering, we generate a different lower bound graph $G'$, in the sense that $G$ and $G'$ differ in the adjacency array of $v$. Then, a natural definition of what it means to solve BFS in this setting is the following: compute a set of edges that defines a BFS tree rooted in some input node $r$, as well as each such edge's position within the child endpoint's local ordering. Our lower bounds' graph structure is simple but ensures that we have a matching $M$ of $n$ edges, ``hidden'' among the total $m=nd$ edges, such that any BFS computed in that lower bound graph must \emph{contain (and thus detect) all edges of $M$}. Via the adversary method, we show that this leads to an $\Omega(\sqrt {nm})$ quantum queries for computing BFS.

\section{Additional Related Work and Comparison}
\label{subsec:additionalRelatedWork}

\noindent {\bf Classical distributed setting.}
The problems addressed in this paper, namely leader election, broadcast,  MST, and BFS construction, have been studied extensively in  {\em classical} distributed computing for several decades, both with respect to optimizing both message and round complexity, the two fundamental performance measures of distributed algorithms. For example, we refer to \cite{Lynch_1996_Book, peleg, KPPRT15, Kutten_2015_JACM, eatcsmst, stoc17mst} and the references therein. To summarize, the above results show that all of the above problems have a {\em tight} $\Theta(m)$ message complexity. As far as round complexity is concerned, all of the above problems, except MST, have a  tight  $\Theta(D)$ round complexity, where $D$ is the network diameter; MST, on the other hand, has a tight
$\tilde{\Theta}(D+ \sqrt{n})$ round complexity.  There also exist {\em singularly-optimal} algorithms that  simultaneously have optimal message and round complexity for all of these problems \cite{Kutten_2015_JACM, stoc17mst, elkin2017}.

\smallskip

\noindent {\bf The KT1 variant.}
It is worth noting that the tight $\Theta(m)$ message complexity applies to the clean network model (\Cref{sec:model}), also known as the KT0 model, which is assumed here and widely used. 
As mentioned in \Cref{sec:model} (see Footnote \ref{ft:kt1}), the above problems have also been studied in an alternate model, called the {\em KT1 model}, which is stronger in the sense that nodes are assumed to have unique IDs and   have knowledge of the  IDs of their neighbors (in addition to their own IDs).
This assumption can entail a significant amount of initial knowledge; in particular, a node potentially needs to know $\Theta(n)$ information if its degree is $\Theta(n)$.
If KT1 knowledge is assumed, then all the above problems (leader election, broadcast and MST), {\em with the notable exception of BFS},
can be solved (classically) in $\tilde{O}(n)$ messages (using randomization)\cite{KKT}, but taking $\tilde{O}(n)$ rounds (in the CONGEST model). It is remarkable to point out that these are the same bounds achieved by our quantum algorithms as well, but  {\em without any knowledge of the neighbor IDs}. 

For BFS, obtaining a $o(m)$ message complexity in KT1 CONGEST, in particular obtaining a $\tilde{O}(n)$ complexity, is {\em open in the classical setting}.\footnote{In KT1 LOCAL, where there is no bound on the message sizes, all the problems considered here can be solved singularly optimally in $\tilde{O}(n)$ messages and $\tilde{O}(D)$ rounds \cite{DPRS25}.} In contrast, we show that in the quantum setting, one can achieve $\tilde{O}(\sqrt{mn}) = \tilde{O}(n^{1.5})$ message complexity for BFS {\em without any initial knowledge}. As mentioned earlier,  a $o(m)$
message complexity result is not known (even) in
the classical KT1 CONGEST model. 

\smallskip

\noindent {\bf Quantum distributed setting.}
While the focus of this paper is on optimizing the message complexity, there have been some previous works in distributed quantum algorithms that focus on the {\em round} complexity.
One of the earliest studies of round complexity in quantum distributed computing in the synchronous message-passing model 
was  in the LOCAL model, where there is no bound on the message size per round \cite{DP08,GKM09}.  Separations between the computational powers (with respect to round complexity) of the classical and quantum versions of the model have been reported for some non ``natural'' problems~\cite{GKM09,GallNR19,Balliu25}, but other papers have also reported limited improvement for other problems (e.g., approximate graph coloring~\cite{Coiteux-RoyDGKG24}). 

In the quantum setting, it was shown in~\cite{ElkinKNP14}  that the quantum CONGEST model 
is not more powerful than the classical CONGEST model for many important graph optimization problems including MST. Nonetheless, it was later shown in~\cite{LM18} that computing a network's diameter can be solved faster in the quantum setting. Since then, other quantum speed-ups have been discovered, in particular, for subgraph detection~\cite{CFGLO22,ApeldoornV22,FraigniaudLMT24}.

Regarding work on optimizing the  {\em message} complexity of distributed quantum algorithms, the development is more recent \cite{roget2025, DMP25,legall2025,robinson2026}.
This can be understood by several recent developments of sublinear message complexity randomized (and thus classical) protocols, but also by several physics experiments demonstrating that quantum routing is physically achievable (cf. \Cref{app:formal} for more discussion on this).

A notable work on improving the message complexity of distributed quantum algorithms is that of \cite{icalp24}. This work improves the message complexity of fault-tolerant (crash) consensus in complete networks, but assumes (as do all previously mentioned works that focus on round complexity) that message exchanges (classical or quantum bits) are routed classically, and not quantumly, as is done in the quantum routing model. However, the $\tilde{O}(n)$ message complexity that they achieve does not substantially improve over the classical message complexity of fault-tolerant consensus in complete networks. This is in sharp contrast to the results we give here.

\smallskip

\noindent {\bf Quantum walks.}
We use a framework of electric networks for designing efficient quantum algorithms using quantum walks ~\cite{belovs2013,BelovsCJKM13}. 
The concept of quantum walks was introduced for path graphs~\cite{ADZ93} and regular graphs~\cite{Watrous01}. Later on, the quantum analogue of any Markov chain on any graph was defined~\cite{Szegedy04}, with a specific interest in random walks on weighted graphs, that is, reversible Markov chains. Recent works consider much more elaborate walks that we do not consider here, such as high-dimensional quantum walks~\cite{JefferyZ25}.

Inspired by the classical notion of hitting times, quantum walks were used to search for some specific marked elements in a graph by doing a local search guided by the walk. They can be interpreted as a generalization of the Grover search algorithm. Where a Grover search corresponds to a search without any local constraints, that is a walk on the complete graph, a quantum walk can, for instance, search highly constrained spatial structures such as the boolean hypercube~\cite{SKW03}, the 2D-grid~\cite{AKR05} and even ad-hoc structures such as the welded tree~\cite{CCDFGS03,LLL24}. The search on the welded tree originally exploited the notion of continuous-time quantum walks~\cite{FG98} that we do not consider in this work.
Finally, a generic way to use quantum walks in order to speed up Grover search has been introduced in~\cite{Szegedy04,mnrs}, leading to general frameworks with several applications. However, one limitation of those models is that they must start with an initial superposition corresponding to the quantum analogue of a stationary distribution, which would require some additional communication costs. 

Later on, the full potential of quantum for local search was unleashed by considering the simpler problem of detecting the presence of any marked elements, instead of finding one of them. This subtlety allows a much more general framework such as the one we are going to present below. In particular, it becomes possible to initiate the quantum walk from a single vertex, rather than an initial superposition corresponding to the quantum analogue of a stationary distribution. 
Moreover, often one can reduce the problem of finding marked elements to that of their detection, 
via a dichotomic search, as we will do. But this could also raise unexpected difficulties and interesting research questions. See for instance~\cite{AGJK20,AGJ21}.

In our work we adapt the above (detection) framework to the distributed setting. By doing so, we design for the first time a general framework for exploiting quantum walks on an unknown network, and we provide new quantum advantage for well-known and studied problems in distributed computing.

Let us also note here that the framework introduced in~\cite{mnrs} has already been exported to design a communication-efficient distributed leader election algorithm for {\em diameter-two networks}~\cite{DMP25}, but with a completely different implementation. The walk was not performed directly on the network structure, but on some virtual structure internal to some selected nodes. The network was only used to perform some other computations, such as preparing the network, or checking that a node is marked. Thus, the walk was never performed over the network itself.
However, recently in~\cite{legall2025} and independently of this work, the quantum solution of~\cite{LLL24} was shown to be implemented distributedly on the welded tree itself, providing an exponential quantum speed-up on the message complexity for the related routing problem, and paving the way for potential speed-ups of quantum walks on the network itself. In this solution, the walk is performed directly on the network itself.
However, the graph structure is known in advance, and only the port numbering remains unknown. Therefore, the generalization of such speed-ups is unclear. In particular, their approach relies on the specific solution of~\cite{LLL24}, leaving open the possibility to adapt any further sequential quantum algorithms based on quantum walks. Furthermore, they use a very rare possibility of~\cite{LLL24}, that is, the walk itself collapses with inverse polynomial probability to the target node. We will not have this situation, and thus we will need to deal with more technical issues, such as implementing distributedly quantum phase estimation, or in fact a weaker but sufficient procedure that we call quantum phase detection.

\smallskip

\noindent {\bf Quantum graph algorithms in the centralized setting.} The work of \cite{DHHM06} gave near-optimal bounds on  quantum {\em query complexity} for graph-theoretic problems such as connectivity,  strong connectivity,
MST, and Single Source Shortest Paths (SSSP). They show that the
query complexity of connectivity and MST is $\Theta(n^{1.5})$ in the adjacency matrix input model, but also that in the adjacency {\em array} model, the query complexities of the two problems differ: that of connectivity is $\Theta(n)$ whereas that of MST is $\Theta(\sqrt{mn})$.\footnote{One may wonder why our generic reduction lemma (cf. Lemma \ref{lem:CentralizedToDistributedLowerBound}) does not imply an $\tilde{\Omega}(\sqrt{mn})$ quantum message complexity lower bound for MST, which would in turn contradict our $\tilde{O}(n)$ message complexity upper bound for that very same problem. The reason is that the $\tilde{\Omega}(\sqrt{mn})$ query complexity lower bound in the adjacency array model arises due to the need to discover the weights of the edges by querying the adjacency array. In contrast, within the distributed setting (and this is a common assumption), nodes initially know the weights of all of their incident edges.} They achieve optimal query complexity upper bounds via Grover-type search algorithms, but these upper bounds do not translate to the distributed setting. For example, among other things, the connectivity upper bound would suffer from a significant slowdown in the distributed setting; for a similar slowdown, see our discussion regarding the locality cost in Section \ref{subsec:high-levelLE}. The corresponding lower bounds are obtained via the adversary method of Ambainis \cite{Ambainis02}, which we also use. However, their $\Omega(n)$ (strong) connectivity query complexity lower bound weakens to an $\Omega(D)$ query complexity lower bound for diameter-$D$ graphs, whereas we seek an $\Omega(n)$ query complexity lower bound for graphs of diameter 3 and more.

\section{Distributed Quantum Walks from Electric Networks} 
\label{sec:distributedQuantumWalk}

We first present the framework of electric networks~\cite{belovs2013,BelovsCJKM13} for designing efficient quantum algorithms using quantum walks, in the sequential setting. In more detail, these quantum walks can be used to efficiently detect the presence of certain marked nodes within some graph, but this is easily converted, at an extra $O(\log n)$ cost, into a quantum algorithm that finds such marked nodes.

Next, we show how to export this framework to the distributed setting, from encoding and executing one quantum walk in some distributed network, to ensuring multiple quantum walks, one per subgraph, can execute concurrently --- that is, without interference --- when the subgraphs are ``near-disjoint''. 

\subsection{Sequential Quantum Walks from Electric Networks}
\label{electric}
\paragraph{Electric Networks.} An electric network $(G,w)$ is a simple undirected graph  $G=(V,E)$ with positive edge-weights $(w_e)_{e\in E}$. We denote by $W$ some given upper bound on its total weight $\sum_{e\in E} w_e$. We also consider a distinguished set of marked vertices $M\subseteq V$ and a vertex $r\in V$ (that could be considered as the root of $G$). Depending on the setting, we also denote such a rooted network by $(G,w,r)$ or $(G,w,r,M)$ in the presence of marked vertices.

Consider any unit flow $(f_e)_{e\in E}$ on $G$ from $r$ to $M$. Formally, the flow is defined on the (oriented) edges of $G$ and satisfies $f_{v,u}=-f_{u,v}$. Moreover, one unit of flow is injected in $r$, that is $\sum_{v:\{r,v\} \in E} f_{r,v}=1$. All flow exits on $M$, that is $\sum_{u\in M,v:\{u,v\}\in E} f_{v,u}=1$. Lastly, all other nodes $u\not\in M\cup\{r\}$ preserve the flow, that is $\sum_{v:\{u,v\}\in E} f_{u,v}=0$. We denote by $R(f)=\sum_{e\in E} f_e^2/w_e$, the \emph{energy} of the flow.
The minimum possible value $\reff(r,M)$ of $R(f)$ is in fact \emph{effective resistance} between $r$ and $M$ on the electric network defined by $G$, where each edge $e$ has conductance $w_e$. When there is no ambiguity we simply write $\reff$.

One can show (see~\cite{belovs2013}) that it is possible to distinguish between $M=\emptyset$ and $M\neq\emptyset$ using $O(\sqrt{RW})$ steps of a quantum walk on $G$, where $R$ is an upper bound on $\reff$ when $M\neq\emptyset$. First, we provide in \Cref{qrw-thm} a specific version of that result. We then adapt it to the distributed setting in \Cref{dqrw-thm}.

\paragraph{Quantum Walk.} For now, we show how to define that quantum walk. Formally, it is the quantum analogue of a random walk on the (oriented) edges of $G$, one vertex indicating the current position, and the other one the next move to perform. 
For technical reasons, quantum walks are often defined on bipartite graphs, by eventually duplicating the original graph. We will not do this, except for the distinguished vertex $r$, for which we will simulate a connection to an artificial sibling node $\bar r$.
Similar variants of this approach have been used for instance in~\cite{apers22,hari25,zur25}.

We warn the reader that in the following definition of quantum walk, and later on in the analysis of the walk, most of the \emph{vectors} are not normalized when they are used to define vector spaces. On the other hand, \emph{quantum states} (of our algorithms) will always be normalized.
\begin{definition}[Quantum walk (adapted from~\cite{belovs2013})]\label{qwalk}
Let $(G,w,r) = (V_r,E_r)$ be some rooted electric network and let $C_1,R\geq 1$ be some constants. Let $\bar r$ be some artificial sibling node to $r$. Then, that electric network $(G,w,r)$ defines the (quantum) \emph{walk space} 
$$H=\Span(\ket{r, \bar r}, \ket{\bar r, r}, \ket{u,v},\ket{v,u}: \{u,v\} \in E_r).$$ 

Next, define $H_r=\Span(\ket{\bar r}, \ket{v}: \{r,v\}\in E_r)$ 
and $H_u=\Span(\ket{v}: \{u,v\} \in E_r)$ for $u\in V_r, u\neq r$.
Moreover, for all $u\in V_r$, define $D_u$ as the orthogonal reflection about $(\ket{\psi_u})^\perp$ in $H_u$,
where vectors $\ket{\psi_u}$ are defined by
$$
\ket{\psi_r}=\frac{1}{\sqrt{C_1R}}\ket{\bar r}+\sum_{v \in V_r:\{r,v\}\in E_r} \sqrt{w_{rv}}\ket{v},
\quad \text{and} \quad
\ket{\psi_u}=\sum_{v \in V_r:\{u,v\}\in E_r} \sqrt{w_{uv}}\ket{v}, \text{ for $u\in V_r, u\neq r$}.$$
Then the \emph{walk operator} $U(R)$ on $H$ is 
$$U(R)=(-\swap) \times D, 
\quad\text{where}\quad \swap\ket{u,v}=\ket{v,u} \text{ and }D=\sum_{u\in V_r} \ketbra{u} \otimes D_u.$$
Since $D$ acts as identity on $\ket{\bar r,r}$, we also define
$D_{\bar r}=\mathrm{Id}_{H_{\bar r}}$ where $H_{\bar r}=\Span(\ket{r})$. 
\end{definition}

Now we define the notion of quantum walk in presence of marked vertices.
\begin{definition}[Quantum walk with marked vertices]\label{qwalkmarked}
For a rooted electric network $(G,w,r,M)$ with marked vertices,
we simply adapt \Cref{qwalk} with
$D_u=\mathrm{Id}_{H_u}$ when $u\in M$,
leading to an operator $U(R,M)$. Observe that  $U(R)=U(R,\emptyset)$.
\end{definition}

One of the main properties of the quantum walk operator $U(R,M)$ relies on the overlap of its $1$-eigenspace with a well-chosen starting state $\ket{\sigma}$. We will take $\ket{\sigma}=\tfrac{1}{\sqrt{2}}(\ket{r,\bar r}-\ket{\bar r,r})$.
In~\cite{belovs2013}, it was observed that this overlap is large when $M\neq\emptyset$ and small when $M=\emptyset$.
(This is formally stated in \Cref{notone,one} of \Cref{analysis}.)

\paragraph{Quantum Phase Detection.}
Distinguishing between these two cases is usually done via Quantum Phase Estimation. However, this procedure requires a quantum control on the number of applications of the operator $U$ (i.e. the number of walk steps), which could be quite hard to implement in the distributed setting.
Hence, to anticipate our distributed implementation, we replace Quantum Phase Estimation with a simpler tool, originally used in~\cite{Szegedy04}.
Indeed, we simply need a procedure to detect whether an eigenvector has eigenvalue $e^{i\alpha}$ with $\alpha\neq 0$. Thus, we design a simpler variant of phase estimation, which uses not only less quantum bits but more importantly, needs only a classical control of the number of iterations of $U$. We call this procedure \emph{Quantum Phase Detection} (QPD), which we now define precisely but we postpone its analysis to \Cref{analysis} (see \Cref{qpd}):
\begin{quote}
\textbf{QPD$(U,T)$}
\begin{enumerate}
    \item Input: Quantum state $\ket{\phi}$ and unitary map $U$ that can be applied to $\ket\phi$
    \item Take $t$ uniformly at random in $\{1,2,\ldots, T\}$
    \item On a fresh qubit $\ket{0}$, called control qubit, apply $H$
    \item Controlled on this qubit, apply $t$ times $U$ on $\ket{\phi}$
    \item Apply $H$ on the control qubit
    \item Measure the controlled qubit and return the resulting bit
\end{enumerate}
\end{quote}

\paragraph{Detecting Marked Elements.} The procedure QPD can then be used to detect whether the starting state has a large or small overlap on the $1$-eigenspace of the walk, and thus to detect whether $M\neq \emptyset$. This lead us to the following statement from~\cite{belovs2013}, that we adapt to our setting and to QPD.
\begin{theorem}\label{qrw-thm}
Let $0<\delta<1$ and $R,W \geq 1$.
Let $(G,w,r,M)$ be a rooted electric network with marked vertices.
Assume that $\sum_{e\in E_r}w_e\leq W$ and $\reff(r,M) \leq R$ when $M\neq\emptyset$, and let $T\geq 80 \sqrt{(1/2)+9 R W}$, $C_1=9$. Then, QPD$(U(R,M),T)$ on input state $\ket{\sigma}=\tfrac{1}{\sqrt{2}}(\ket{r,\bar r}-\ket{\bar r,r})$:
\begin{enumerate}
    \item Outputs $1$ with probability at least $3/10$ when $M=\emptyset$;
    \item Outputs $1$ with probability at most $1/10$ when $M\neq\emptyset$;
    \item Uses at most $T$ steps of the quantum walk $U(P,R,M)$.
\end{enumerate}
\end{theorem}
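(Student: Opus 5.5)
The plan is to compute the output distribution of QPD exactly in terms of the spectral decomposition of $U=U(R,M)$, and then bound the relevant spectral weights of $\ket{\sigma}$ separately in the two cases. Write $\ket{\sigma}=\sum_j a_j\ket{\lambda_j}$ with $U\ket{\lambda_j}=e^{i\alpha_j}\ket{\lambda_j}$ and $\alpha_j\in(-\pi,\pi]$. For a fixed $t$, the circuit $H$, controlled-$U^t$, $H$ outputs $1$ with probability $\tfrac14\|\ket{\sigma}-U^t\ket{\sigma}\|^2=\tfrac12\bigl(1-\mathrm{Re}\bra{\sigma}U^t\ket{\sigma}\bigr)$. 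Averaging over $t$ uniform in $\{1,\dots,T\}$ gives
$$\Pr[1]=\sum_j |a_j|^2\,\tfrac12\Bigl(1-\tfrac1T\textstyle\sum_{t=1}^T\cos(t\alpha_j)\Bigr).$$
Each summand factor lies in $[0,1]$ and vanishes when $\alpha_j=0$. For $|\alpha_j|\ge\Theta$, the geometric-sum bound $\bigl|\tfrac1T\sum_t e^{it\alpha}\bigr|\le \tfrac{1}{T|\sin(\alpha/2)|}\le \tfrac{\pi}{T|\alpha|}$ shows that the factor is at least $\tfrac12(1-\tfrac{\pi}{T\Theta})$. Item 3 is then immediate, since at most $T$ applications of $U$ are used.

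\textbf{Case $M\neq\emptyset$ (item 2).} Take an optimal unit flow $f$ from $r$ to $M$, so $R(f)=\reff(r,M)\le R$. Following Belovs, define
$$\ket{\phi}=\sqrt{C_1R}\,\bigl(\ket{r,\bar r}-\ket{\bar r,r}\bigr)+\sum_{(u,v)}\tfrac{f_{uv}}{\sqrt{w_{uv}}}\bigl(\ket{u,v}-\ket{v,u}\bigr),$$
with the sum over oriented edges taken once per edge. The steps are:
\begin{enumerate}
\item Show that for every unmarked $u$ the $u$-block of $\ket{\phi}$ is orthogonal to $\ket{\psi_u}$. For $u\neq r$ this is flow conservation. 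For $u=r$, the unit injection $\sum_v f_{rv}=1$ cancels against the $\ket{\bar r}$ coefficient $\sqrt{C_1R}\cdot 1/\sqrt{C_1R}$, with the signs to be checked.
\item Deduce that $D$ fixes $\ket{\phi}$. On marked vertices and on $\bar r$, $D$ is the identity anyway.
\item Note that $\ket{\phi}$ is antisymmetric, so $-\swap$ fixes it, and hence $U\ket{\phi}=\ket{\phi}$.
\item Compute $\|\phi\|^2=2C_1R+2R(f)\le 2C_1R+2R$ and $\langle\sigma|\phi\rangle=\sqrt{2C_1R}$. This gives weight at least $C_1/(C_1+1)$ of $\ket\sigma$ on the $1$-eigenspace.
\end{enumerate}
Therefore $\Pr[1]\le 1/(C_1+1)=1/10$ with $C_1=9$.

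\textbf{Case $M=\emptyset$ (items 1 and 3).} I would use the effective spectral gap lemma: write $U=-\swap\cdot D$ as a product of two reflections, $-\swap=2\Pi_A-I$ with $\Pi_A$ the projection onto antisymmetric vectors, and $D=I-2\Pi_B$ with $\Pi_B$ the projection onto $\Span\{\ket{u}\ket{\psi_u}\}$. Up to sign conventions, the lemma states that if $\Pi_B\ket{w}=0$ then the projection $P_\Theta$ onto eigenphases $|\alpha|<\Theta$ satisfies $\|P_\Theta\Pi_A\ket{w}\|\le\tfrac{\Theta}{2}\|w\|$. I would apply it to a witness built from the $\psi$'s: take $\ket{w}=\sum_u\ket u\ket{\psi_u}$, which lies in the range of $\Pi_B$, and then either pass to a complementary form of the lemma or apply the lemma to the adjoint/ordering of the reflections so that the hypothesis is met. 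Antisymmetrizing $\ket{w}$ cancels every genuine edge term, because each $\sqrt{w_{uv}}$ appears symmetrically. Only the $\bar r$ term survives, and it is proportional to $\tfrac{1}{\sqrt{C_1R}}\ket{\sigma}$. Meanwhile $\|w\|^2=2\sum_e w_e+1/(C_1R)\le 2W+1$. This yields
$$\|P_\Theta\sigma\|\le\tfrac{\Theta}{2}\sqrt{C_1R(2W+1)}\cdot\sqrt2=\Theta\sqrt{1/2+9RW}\,\cdot c$$
for an absolute constant $c$, which explains the form of $T$. Choosing $\Theta=\kappa/\sqrt{1/2+9RW}$ and $T\ge 80\sqrt{1/2+9RW}$ then gives $\Pr[1]\ge(1-\kappa^2c^2)\cdot\tfrac12(1-\tfrac{\pi}{80\kappa})$. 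A suitable $\kappa$, say around $1/4$, should make this at least $3/10$.

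The main obstacle is this second case. The effective spectral gap lemma has to be matched carefully with the sign conventions of $D$ (reflection about $\psi_u^\perp$), of $-\swap$, and of the extra $\bar r$ edge, and one must verify that the witness vector satisfies the lemma's hypothesis exactly, not just up to the $\bar r$ correction. The numeric constants ($80$, $9$, $3/10$) also have to be checked to close. I would first do the case analysis with general constants and only then plug in $C_1=9$.
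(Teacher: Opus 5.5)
Your proposal follows the paper's proof essentially step for step: the Belovs flow vector serves as a $1$-eigenvector of $U(R,M)$ when $M\neq\emptyset$; when $M=\emptyset$, the effective spectral gap lemma is applied to the witness $\sum_u\ket{u}\ket{\psi_u}$, whose antisymmetric part is the $\ket{\sigma}$ term; and QPD is handled by the averaged $\sin^2(t\alpha/2)$ analysis, with your $\kappa=1/4$ playing the role of the paper's $C_2=4$, so the constants close. The two sign issues you flag resolve as in the paper: first, the flow term in $\ket{\phi}$ must enter with a minus sign, since with your plus sign the $r$-block pairs with $\ket{\psi_r}$ to give $1+1=2$ rather than $0$; second, since in your notation $D=\mathrm{Id}-2\Pi_B=2(\mathrm{Id}-\Pi_B)-\mathrm{Id}$, the lemma's projector for the first reflection is $\mathrm{Id}-\Pi_B$, whose kernel is exactly the range of $\Pi_B$ and so contains $\ket{w}$, hence the hypothesis holds directly for $U=(-\swap)D$ in the given order, with no adjoint or complementary form needed.
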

\begin{proof}
We use in the proof \Cref{one,notone,qpd}, stated and proven in \Cref{analysis}. 

If $M\neq \emptyset$, by \Cref{one}, the overlap of $\ket{\sigma}$ with the $1$-eigenspace of $U(R,M)$ is at least $\sqrt{9/10}$. Moreover, by \Cref{qpd}, QPD outputs $1$ with probability $0$ for $1$-eigenvectors. Therefore, on input $\ket{\sigma}$, QPD outputs $1$ with probability at most $1-(\sqrt{9/10})^2=1/10$.

If $M=\emptyset$, by \Cref{notone},  the overlap of $\ket{\sigma}$ with the span of $e^{i\alpha}$-eigenvectors for $|\alpha|\leq \theta$, where we set $\theta=1/(C_2\sqrt{(1/2)+C_1RW})$ and $C_2 = 4$, is at most $1/4$.
So, the overlap with the span $e^{i\alpha}$-eigenvectors  for $|\alpha|> \theta$ is at least
$1-(1/4)^2> 9/10$. Since $T\geq 80 \sqrt{(1/2)+9 R W}=20/\theta$, by \Cref{qpd}, QPD outputs $1$ with probability at least $2/5$ for $e^{i\alpha}$-eigenvectors such that $|\alpha|\geq \theta$. 
Therefore, on input $\ket{\sigma}$, QPD outputs $1$ with probability at least $(9/10)(2/5)>3/10$.
\end{proof}

\subsection{Distributed Implementation of a Quantum Walk Step}
\label{subsec:distQuantumWalkImplementation}

We show, in this part, how to implement the sequential quantum walk described in Section \ref{electric} within (some part only of) a distributed network. More precisely, we first describe how nodes participating in some quantum walk know their local view of that walk's defining electric network. Next, we show how to perform one step of the corresponding quantum walk.

\paragraph{Implicit electric network}
From now on, the electric networks we consider are some subgraphs $G(r,\tk) = (V_r,E_r)$  
of the communication graph $G = (V,E)$ augmented with an edge-weight function over $E_r$, a designated root vertex $r \in V_r$ and a subset of marked vertices $M(\tk) \subseteq V_r$.  

Such an electric network can be distributedly known by nodes of $V$ as follows. The special node $r\in V_r$ has the knowledge of being distinguished, and $r$ owns a special token value $\tk$ of size $\mathrm{polylog}(|V|)$ bits. (Such tokens take values in some subset $\mathcal{T}$ among all possible messages.) In practice, this token is simply a unique identifier (randomly) generated by $r$.
In addition, each node $v \in V$ is given some local information such that, together with the token value $\tk$, it can compute its own status as marked (i.e., whether $v \in M(\tk)$) or unmarked. (In practice, the local information contains some unique identifier, and $v$ is marked if say, that identifier is different from the token's identifier.) Additionally, when $v$ is unmarked, its local information together with $\tk$ allows $v$ to compute the local structure of $G(r,\tk)$, that is, which incident edges are in $E_r$ as well as their weights. (On the other hand, a marked node need not know its local structure, as it will only reply to received messages.)  
The resulting electric network is denoted $G(r,\tk)$, and called the \emph{implicit electric network rooted in $r$ with token $\tk$}.

\begin{definition}[Implicit electric network]
An \emph{implicit electric network} on the communication graph $G$ with token in $\mathcal{T}$ is a collection of 
Boolean functions $(M_u)_{u\in V}$, defined on input $\tk \in \mathcal{T}$, and indicating which nodes are marked, along with a collection of non-negative weight functions $(w_u)_{u\in V}$, defined on inputs $(\tk,p) \in \mathcal{T} \times [\deg(u)]$.  
Furthermore, these weight functions must satisfy $w_u(\tk,p)=w_v(\tk,q)$, for every token $\tk\in \mathcal{T}$ and emission ports $p=u\to v, q=v\to u$.

Given any $r \in V, \tk \in \mathcal{T}$, we denote by $G(r,\tk) = (V_r,E_r)$ the underlying electric circuit network 
restricted to edges $\{u,v\}$ with weight  $w_{uv}=w_u(\tk,u\to v)>0$, and furthermore to the connected component of $r$, with marked vertices $M(\tk)=\{v\in V_r:M_v(\tk)=\mathrm{true}\}$.
\end{definition}
\textbf{Remark.} Let $u\in M(\tk)$. In that case $D_u=\mathrm{Id}_{H_u}$. Therefore the weight functions $w_u$, for $u\in M(\tk)$, are useless in order to simulate the quantum walk $U(R,M(\tk))$ on $G(r,\tk)$. Thus, we can relax the above definition so that the weights are only given to the non marked vertices.

\paragraph{Encoding walk states in distributed networks}
Consider some quantum walk corresponding to some rooted implicit electric network $G(r,\tk) = (V_r,E_r)$. By definition, the states of that sequential walk are $H=\Span(\ket{r, \bar r}, \ket{\bar r, r}, \ket{u,v},\ket{v,u}: \{u,v\} \in E_r)$. They represent the walk being on some (oriented) edge of $E_r \cup \{r,\bar r\}$.

Within a distributed network, as independently proposed in~\cite{legall2025}, we can implement states $\ket{u,v}$ and $\ket{v,u}$ for any edge $\{u,v\} \in E_r$ by having, respectively, the token $\tk$ in the port $u\to v$ of $u$ or in the port $v \to u$ of $v$. As for states $\ket{r,\bar r}$ or $\ket{\bar r,r}$, these can be implemented via virtual port registers local to node $r$ (since $\{r,\bar r\}$ is, after all, a virtual edge not in the communication graph).

More formally, let node $r$ contain four extra local registers, named $r \to \bar r$, $r \gets \bar r$, $\bar r \to r$, and $\bar r \gets r$, and that are all initialized to $\ket{\vac}$. These correspond to (virtual) emission and reception port registers for communicating over $\{r,\bar r\}$, which $r$ will use to locally simulate $\bar r$. Given these registers, we can encode $\ket{u,v}$, for any edge $\{u,v\} \in E_r \cup \{r,\bar r\}$, in the distributed network as: 
$$\ket{u,v}'= \ket{\tk}_{u\to v}\otimes\ket{\vac}_{u\gets v} \otimes \ket{\vac}_{v\to u}\otimes\ket{\vac}_{v\gets u}
\bigotimes_{\{x,y\}\in E_r \cup \{r,\bar r\} \setminus \{u,v\}} \ket{\vac}_{x\to y}\otimes\ket{\vac}_{x\gets y} \otimes \ket{\vac}_{y\to x}\otimes\ket{\vac}_{y\gets x},$$
where $\tk$ is the special token value, initially computed by $r$. Naturally, this induces the \emph{distributed walk space} $H' = \Span(\ket{r,\bar r}', \ket{\bar r,r}', \ket{u,v}',\ket{v,u}' : \{u,v\} \in E_r)$.

Note that $H'$ consists of the superposition of configurations where only one node is going to send a message, and that message is $\ket{\tk}$. In particular, $\ket{r,\bar r}'$ and $\ket{\bar r,r}'$ are configurations where that token lies in the corresponding virtual emission port registers (respectively, $\ket{\tk}_{r\to \bar r}$ and $\ket{\tk}_{\bar r \to r}$) local to $r$. The associated communication is fictive and locally simulated by $r$. In fact, node $r$ locally simulates $\bar r$, but as can be seen from \Cref{qwalkmarked}, this amounts to the identity outside of the communication operations.

\paragraph{Realizing one step of the walk}
We can now provide a simulation of the quantum walk operator $U(R,M)$ in space $H'$, as stated by the lemma below.
\begin{lemma}\label{dqw}  
Let $G=(V,E)$ be a distributed network with local inputs.
Let $G(r,\tk)$ be an implicit electric network rooted in some $r\in V$, with token $\tk$ and subset of marked elements $M$, and let $U(R,M)$ be the corresponding quantum walk operator.
Then, one step of $U(R,M)$ can be simulated on $G$ with one message and one round of communication.
\end{lemma}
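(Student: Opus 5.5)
The plan is to split $U(R,M)=(-\swap)\times D$ into its two factors, realize $D$ by purely local unitaries on the encoded space $H'$, and realize $-\swap$ by a single application of the global emission operator $\send$ followed by local relabeling. The basis state $\ket{u,v}'$ has the token $\tk$ in emission register $u\to v$ and $\vac$ elsewhere, so the walker's current vertex $u$ is the unique node holding the token in one of its (possibly virtual) emission registers. The register it occupies encodes the outgoing direction $\ket{v}\in H_u$.

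\emph{Step 1 (diffusion).} Each node $u$ applies a local unitary $D'_u$ on its emission registers. $D'_u$ acts as the identity on the all-$\vac$ configuration. On the subspace $\Span(\ket{\tk}_{u\to v}\otimes\bigotimes_{w\neq v}\ket{\vac}_{u\to w})$, identified with $H_u$ via $v$, it acts as $D_u$; it is extended arbitrarily, e.g.\ as the identity, on other configurations. The node $u$ computes $D_u$ locally. It reads $\tk$ coherently, which is allowed since the token value is the same in every branch. From $\tk$ and its local information it decides whether $u\in M(\tk)$, in which case $D'_u=\mathrm{Id}$. Otherwise it computes the weights $w_u(\tk,p)$ of its incident ports, and hence $\ket{\psi_u}$ and the reflection about $\ket{\psi_u}^\perp$. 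Node $r$ also treats the virtual register $r\to\bar r$ as a port of weight $1/(C_1R)$. It also handles $\bar r\to r$ as $D_{\bar r}=\mathrm{Id}$. Since in each basis state of $H'$ exactly one node holds the token, $\bigotimes_u D'_u$ restricted to $H'$ equals $D$ transported to $H'$. This step costs no messages.

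\emph{Step 2 (swap).} All nodes apply $\send$. This sends $\ket{\tk}_{u\to v}$ to $\ket{\tk}_{v\gets u}$, and $r$ simulates the same swap on its four virtual registers locally. Each node $v$ then locally moves the contents of $v\gets u$ into its emission register $v\to u$ and multiplies by $-1$ conditioned on holding a token. Both operations are controlled on the register being non-$\vac$, so they are unitary. The result maps $\ket{u,v}'\mapsto-\ket{v,u}'$, which is exactly $-\swap$. In every branch only one non-empty message is in transit, so by the definition of quantum message complexity this round costs one message and one round.

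The main obstacle is making Step 1 rigorous as a unitary on the full register space. Two issues must be handled: that $D'_u$ is well defined on configurations outside $H'$, and that the coherent dependence on the token value and the marked status does not entangle or corrupt anything. The intended approach is to define $D'_u$ block-diagonally: it is controlled on the token value and on the occupancy pattern of $u$'s emission registers, acts as $D_u$ on the single-token block, and acts as the identity elsewhere. One then checks that $\bigotimes_u D'_u$ preserves $H'$ and agrees with $D$ there. The symmetric-weight condition $w_u(\tk,u\to v)=w_v(\tk,v\to u)$ guarantees that the locally computed $\ket{\psi_u}$ coincide with those of the sequential walk on $G(r,\tk)$. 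Restricting to the connected component of $r$ is harmless, since the token never leaves $E_r\cup\{r,\bar r\}$: the weight-zero directions carry zero amplitude in $\ket{\psi_u}$, so $D_u$ acts trivially on them.
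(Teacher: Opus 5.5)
Your proposal is correct and follows the paper's proof essentially step for step. The paper realizes $-\swap$ as $\flip\times\reverse\times\virtualsend\times\send$, with at most one real message per branch. It realizes $D$ by communication-free local unitaries, and it makes your block-diagonal $D'_u$ concrete: work registers $X,Y$ and exchange operators $L_p$ pull the token out of the occupied emission register into $Y$ (recording the port in $X$), $D''_u(r,\tk)$ is applied controlled on $Y$, and the token is then pushed back. One small fix: justify unitarity of your ``move'' by defining it as an exchange of the registers $v\gets u$ and $v\to u$, as the paper's $\reverse$ does, rather than as a move controlled on non-emptiness.
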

\begin{proof}
We want to implement $U(R,M)= (-\swap)\times D$ on the distributed network.
The operator $D$ is going to be communication-free, whereas $\swap$ will consume one message. 

We start by $\swap$, since it is the simplest operation. Implementing that operation distributedly, on the space of (distributed) walk states, boils down to executing: $\ket{u,v}' \stackrel{\swap'}{\mapsto} \ket{v,u}' $.
This can be implemented easily in the distributed network, via
\begin{enumerate}
    \item A communication step (realized by operator $\send$);
    \item A virtual communication step (realized by some operator $\virtualsend$), where $r$ locally simulates $\send$ on its virtual port registers;
    \item A local computation step (realized by some operator $\reverse$), where we reverse the direction of each communication port. 
\end{enumerate}

More precisely, recall first that the $\send$ operator (executed by the distributed network every round) ensures that, for any emission register $u \to v$, its contents are exchanged with that of the reception register $v \gets u$. As for the operator $\virtualsend$, it exchanges the contents of the virtual emission register $r \to \bar r$ with the virtual reception register $\bar r \gets r$, and that of the virtual emission register $\bar r \to r$ with the virtual reception register $r \gets \bar r$. Finally, the operator $\reverse$ is defined so that for any reception register $u \gets v$, its contents are exchanged with the emission register $u \to v$. Note that we can locally implement $\reverse$ since any node $u$ knows which of its emission and reception registers correspond to the same node (i.e., here to $v$), even if $u$ cannot identify that node. Indeed, $u$ even initially assigns the same port number (in 1 to $\deg(u)$) to these two registers. 

Thus, the above 3-step process for $\swap$, where we only explicitly write out the port registers $u\to v$, $v\gets u$ and $v\to u$ (the other ones remain in state $\ket{\perp}$), leads to:
\begin{align*}
    \ket{u,v}'=\ket{\tk}_{u\to v}\ket{\vac}_{v\gets u}\ket{\vac}_{v\to u} &\stackrel{\scriptsize \eqmakebox[a][c]{$\virtualsend \times \send$}}{\mapsto} \ket{\vac}_{u\to v}\ket{\tk}_{v\gets u}\ket{\vac}_{v\to u} \\
    &\stackrel{\scriptsize \eqmakebox[a][c]{$\reverse$}}{\mapsto} \ket{\vac}_{u\to v}\ket{\vac}_{v\gets u}\ket{\tk}_{v\to u}=\ket{v,u}'
\end{align*}

Next, we implement $-\swap$ in our distributed setting, or in other words, we execute $\ket{u,v}' \stackrel{(-\swap)'}{\mapsto} -\ket{v,u}' $. To do so, it suffices to apply, after the above 3-step process, the operator $\flip$. That operator realizes, on each (real and virtual) emission register $u\to v$, the following: for any token message $\tk \neq \vac$, $\flip\ket{\tk}_{u \to v}=-\ket{\tk}_{u \to v}$, and otherwise (i.e., on orthogonal states including $\ket{\vac}$), $\flip$ is the identity. 
In summary, we get $$(-\swap)'=\flip\times \reverse\times\send.$$

Finally, we focus on operator $D'=\left(\sum_{u\in V} \ketbra{u}\otimes D_u\right)'$. Recall that the $\ketbra{u}\otimes D_u$ operators work on orthogonal subspaces of $H$, and thus that their distributed implementations, $D'_u=(\ketbra{u}\otimes D_u)'$, act on the orthogonal direct sum of subspaces $(\ket{u}\otimes H_u)'$ of $H'$.
Hence, to implement $D'$, intuitively each node $u$ only needs to simulate $D'_u$ when it holds the token, and the identity otherwise. It is not too difficult to check that this is a unitary map, even though $D'_u$ depends also on the received token. For completeness, we decompose this operator $D'_u$ into simple unitary steps.

Consider any node $u \in V$, and we shall need the following to describe $D'_u$. In addition to the emission and reception registers, assume $u$ holds two additional work registers, named $X$ and $Y$. Moreover, let $\deg'(u)$ be the degree of $u$ whenever $u \neq r$, and $\deg(u)+1$ otherwise. Then, for any node $u$, the space of the $X$-register is $\Span(\ket{p} : 1 \leq p \leq \deg'(u))$, where emission ports have been identified with their port numbers (and port $\deg'(r)$ corresponds to $r$'s virtual port to $\bar r$). Let $H''_u(r,\tk)$ be its subspace restricted to port numbers corresponding to the edges incident to $u$ that are in $G(r,\tk)$ (and also to port $\deg'(r)$ if $u = r$), and let $D_u''(r,\tk)$ be the operator similar to $D_u$, but that applies to $H_u''(r,\tk)$. The space of the second register is the vector space generated by all possible token messages. Finally, for any port number $1 \leq p \leq \deg'(u)$, we define a memory-register exchange operator $L_p$ such that $L_p:\ket{0}_X\ket{\vac}_Y\ket{\tk}_{u\to v}\mapsto \ket{p}_X\ket{\tk}_{Y}\ket{\vac}_{u\to v}$, \ \  $L_p:\ket{p}_X\ket{\tk}_{Y}\ket{\vac}_{u\to v} \mapsto\ket{0}_X\ket{\vac}_Y\ket{\tk}_{u\to v}$,\ \  
and $L_p$ extends arbitrarily (for instance as identity) on the orthogonal space (which includes, in particular, when an emission register holds a non-token message), in such a way that $L_p$ remains unitary.
Then, $u$ implements $D'_u$ as follows:
\begin{enumerate}
    \item Assume registers $X$ and $Y$ are at initial value $\ket{0}_X$ and $\ket{\vac}_Y$ .
    \item \label{step2a} Sequentially, for each (real or virtual) emission register $u\to v$ with port numbering $1\leq p \leq \deg'(u)$, apply $L_p$ (to the work, emission and reception registers held by $u$).
    \item\label{step2b} If the register $Y$ holds some $\tk\neq\vac$: Apply $D_u''(r,\tk)$ on space $H_u''(r,\tk)$ of register $X$.\\
     \emph{Note that $D_u''(r,\tk)$ and $H_u''(r,\tk)$ can be computed locally using $\tk$.}
    \item Apply again Step~\ref{step2a}.\\
    \emph{Registers $X,Y$ are back to their initial values $\ket{0}_X$ and $\ket{\vac}_Y$.}
\end{enumerate}
Observe that Step~\ref{step2b} is performed just for one value of the $X$-register, or in other words, for one port, but in superposition. Indeed, there is a single token traveling through the network (in superposition). Moreover, note that on walk state $\ket{\bar r, r}'$, node $r$ performs the identity operator for the diffusion.
\end{proof}

\subsection{Distributed Detection of Marked Elements}

With our distributed implicit implementation of quantum walks, we can now give a distributed adaptation of \Cref{qrw-thm}.
\begin{theorem}\label{dqrw-thm}  
Let $0<\delta<1$ and $R,W \geq 1$.
Let $G=(V,E)$ be a distributed network with local inputs.
Let $G(r,\tk)$ be an implicit electric network rooted in some $r\in V$ with token $\tk$ and let $M$ be some set of marked elements such that
(1) $G(r,\tk)$ has total weight at most $W$;
(2) either $M=\emptyset$ or $\reff(r,M) \leq R$.

There is a quantum distributed algorithm such that $r$ can decide whether $M=\emptyset$ 
with bounded error $\delta$ and message complexity $O(\log (1/\delta) \times \sqrt{RW})$.
\end{theorem}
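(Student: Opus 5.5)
The plan is to run the sequential detection procedure of \Cref{qrw-thm} inside the distributed network, using \Cref{dqw} for each walk step, and then amplify the constant-gap test to error $\delta$ by repetition. Set $C_1=9$ and $T=\lceil 80\sqrt{1/2+9RW}\rceil=O(\sqrt{RW})$, the latter because $R,W\geq 1$. One run of QPD$(U(R,M),T)$ on $\ket{\sigma}$ then outputs $1$ with probability at least $3/10$ when $M=\emptyset$, and with probability at most $1/10$ when $M\neq\emptyset$. The remaining work is to show that one run can be carried out with $O(T)$ messages, with all control held at $r$.

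\emph{Distributed implementation of one QPD run.} Node $r$ samples $t\in\{1,\dots,T\}$ uniformly and holds the control qubit, to which it applies $H$. It then prepares $\ket{\sigma}'=\tfrac{1}{\sqrt2}(\ket{r,\bar r}'-\ket{\bar r,r}')$ locally in its virtual port registers, which requires no communication. The delicate step is applying $U$ $t$ times \emph{controlled} on the qubit at $r$, since the token may have spread across the network. The approach is to observe that the control need not be attached to the whole walk step. When the control is $0$, we leave the token at $\ket{r,\bar r}'$ or $\ket{\bar r,r}'$, which live entirely at $r$. We then let $r$ realize the controlled behaviour by applying each local step of $U$ to its virtual registers only when the control is $1$; in the control-$0$ branch $r$ instead keeps the token frozen inside a register it does not send. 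All other nodes apply $D'$ and $(-\swap)'$ exactly as in \Cref{dqw}. Those operations act as the identity on configurations in which they hold no token, so in the control-$0$ branch they do nothing, and no node other than $r$ has to know the control.

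Since $t$ is classical and known to all nodes through a fixed, synchronized schedule, everyone simply executes $T'=t$ rounds of the walk. One subtlety is that all nodes must agree on when the procedure stops, so $r$ should effectively run $T$ synchronized rounds with no-ops after step $t$. A no-op round sends no message, because in those rounds the token's registers are left untouched. Every round sends at most one message in each branch of the superposition, so the message complexity of a run is at most $t\leq T$. Afterwards, $r$ applies $H$ to the control qubit and measures it. I also need to check that the emission from $r$'s virtual port really is local: by \Cref{dqw}, the step on $\ket{r,\bar r}'$ and $\ket{\bar r,r}'$ is a $\virtualsend$, so in the controlled-off branch the state is literally unchanged.

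\emph{Amplification.} Repeat the run $k=O(\log(1/\delta))$ times with fresh registers, which is possible because after measurement the token can be discarded or uncomputed. Node $r$ declares $M=\emptyset$ iff the fraction of $1$ outputs exceeds $1/5$. A Chernoff bound with gap $3/10$ versus $1/10$ gives error at most $\delta$, and the total message count is $O(\log(1/\delta)\sqrt{RW})$. I expect the main obstacle to be justifying the controlled walk rigorously: one must verify that ``controlled-$U^t$'' equals the described protocol on the span of $\ket{0}\ket{\sigma}'$ and $\ket{1}\ket{\sigma}'$. This reduces to checking that $\ket{\sigma}'$ is fixed by the control-$0$ branch and that in the control-$1$ branch the operation coincides with $U(R,M)'^{\,t}$ on $H'$.
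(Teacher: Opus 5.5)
Your architecture is the paper's. Node $r$ runs QPD$(U(R,M),T)$ with the control qubit kept at $r$, and each application of $U$ is one synchronized round as in \Cref{dqw}. Nothing needs to be brought back to $r$, because Steps 5 and 6 of QPD touch only the control qubit. Your control-$0$ ``freeze at $r$'' mechanism and the $O(\log(1/\delta))$-fold repetition with threshold $1/5$ are correct elaborations of points the paper leaves implicit.

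\textbf{The gap: stopping the walk after exactly $t$ steps.} You assert that $t$ ``is classical and known to all nodes through a fixed, synchronized schedule''. But $t$ is drawn from $r$'s private randomness, and the model has no shared randomness, so no other node knows it. Your fallback, having $r$ perform no-ops after step $t$, only freezes $r$'s own registers. In the control-$1$ branch the token is in superposition over edges of $G(r,\tk)$ far from $r$, and the nodes holding it have no basis for stopping. They would either keep walking until round $T$, or they would have to be told $t$. Walking until round $T$ implements $U^T$ instead of $U^t$. For an eigenvector this gives output probability $\sin^2(T\alpha/2)$, which can be arbitrarily close to $0$, so the averaging over $t$ in \Cref{qpd} and its $2/5$ bound are lost.

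Telling them $t$ by broadcasting it beforehand does not fit the budget in general. Flooding $G(r,\tk)$ costs on the order of $|E_r|$ messages, which need not be $O(\sqrt{RW})$: in the application of \Cref{subsec:clusterPrimitive}, $|E_i|$ can be $\Theta(n_i^2)$ while $\sqrt{R_iW_i}=O(n_i\sqrt{\log n_i})$. The theorem also assumes no spanning structure that would make a broadcast cheap.

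\textbf{The fix, which is exactly the paper's step.} Put $t$ into the token itself. The token has polylogarithmic size and $t\le T$ needs only $O(\log T)$ bits. In every branch, whichever node holds the token can compare $t$ with the globally synchronized round counter. After the $t$-th step it stops applying $D'$ and $(-\swap)'$ and sends nothing further.

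The same device also justifies your ``discard'' step between repetitions. After step $t$, the holder can locally move the finished token into a private garbage register, which costs no message. A new run with a fresh token then does not collide with the stale one in the port registers. With this change your argument goes through with the claimed $O(\log(1/\delta)\sqrt{RW})$ message bound.
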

\begin{proof}
The proof relies on \Cref{qrw-thm}. The only difference (with \Cref{qrw-thm}) lies in the implementation of QPD, which must be executed on the distributed network, and within which, the simulation on $G$ of $U(R,M)$ is done distributedly as explained in~\Cref{dqw}.

More precisely, root $r$ is in charge of simulating the centralized algorithm $QPD(U(R,M),T)$.
Within which, when it comes down to applying $U$ (and only for that), the network is used to simulate (a step of) the walk as described in \Cref{dqw}, while the control qubit remains in $r$.
Each step of the quantum walk is executed synchronously by the full network (on $G(r,\tk)$) for $t$ steps.

In order to stop the propagation of the walk, we formally include the value of $t$ into the token, 
so that the network can stop the walk after exactly $t$ steps. The remarkable property of QPD$(U(R,M),T)$ is that the root $r$ does not need to take back the control on the walk state. Indeed, the last steps of QPD (Steps~5 and~6) only use the control qubit, which remains in $r$.
\end{proof}

We now consider the possibility of several quantum walks being implemented in parallel, and possibly concurrently over the same parts of the network.
Let us first consider the case that the different concurrent walks are edge-disjoint, but not necessarily vertex-disjoint.
\begin{corollary}\label{cdqrw-cor}  
Let $G=(V,E)$ be a distributed network with local inputs, with some subset $A\subseteq V$ of \emph{distinguished vertices} and a set of possible tokens $\mathcal{T}$.
Let $0<\delta<1$.
For each $r\in A$,
let $G_r=G(r,\tk_r)$ be some implicit electric network rooted in $r$ with token $\tk_r\in\mathcal{T}$, and $M_r$ be a set of marked elements such that
(1) $G_r$ has total weight at most $W_r$;
(2) either $M_r=\emptyset$ or $\reff(r,M_r) \leq R_r$;
(3) $(G_r)_{r\in A}$ are pairwise edge-disjoint;
(4) tokens $\tk_r$ are pairwise distinct.

There is a quantum distributed algorithm such that each $r\in A$  decides whether $M_r=\emptyset$ with bounded error $\delta$. Moreover the total message complexity is $O(\log (1/\delta) \times \sum_{r\in A} \sqrt{R_rW_r})$, and the total
 round complexity is $O(\log (1/\delta) \times \max_{r\in A} \sqrt{R_rW_r})$.
\end{corollary}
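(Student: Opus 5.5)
The plan is to run, for every $r\in A$ simultaneously, the algorithm of \Cref{dqrw-thm}. Then I will argue that the joint evolution factorizes as a tensor product of the individual walks, so that each root sees exactly the output distribution of \Cref{qrw-thm}. Concretely, each $r\in A$ samples its own $t_r\in\{1,\dots,T_r\}$ with $T_r=\lceil 80\sqrt{1/2+9R_rW_r}\rceil$. It keeps its own control qubit and injects the token $\tk_r$ (carrying $t_r$) into its virtual port registers in state $\ket{\sigma_r}$. All nodes then apply, round after round, the global operator $\flip\times\reverse\times\send\times\prod_u D'_u$ of \Cref{dqw}, with $D'_u$ generalized to act on each token independently. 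The whole procedure is repeated $O(\log(1/\delta))$ times with majority vote.

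The key step is to show that the global round operator, restricted to the reachable states, equals $\bigotimes_{r\in A} U_r'$, where $U_r'$ is the distributed version of $U(R_r,M_r)$ on $G_r$. Since the $G_r$ are pairwise edge-disjoint, at any time each oriented edge (port register) carries at most one token in every branch of the superposition. Hence $\send$, $\reverse$ and $\flip$ act port-wise and commute with the decomposition, with no collision on any register.

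The delicate part is the diffusion at a node $u$ belonging to several $G_r$. Here I would modify the local procedure of \Cref{dqw}: instead of a single pair of work registers $(X,Y)$, $u$ processes tokens by type. For each $r$, the port set $H''_u(r,\tk_r)$ is determined by $\tk_r$, and these port sets are disjoint across $r$ by edge-disjointness. So $u$ can apply, for each token value found in its reception/emission registers, the reflection $D''_u(r,\tk)$ on the ports of $G(r,\tk)$ only, using one work register pair per port block. Distinctness of tokens (condition (4)) ensures that $u$ identifies which walk a token belongs to, and hence which block to reflect. If $u$ is marked for walk $r$, the diffusion is the identity on that block. The resulting local operator is a product of commuting unitaries on disjoint port blocks, so it equals $\bigotimes_r D'_{u,r}$. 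I expect this to be the main obstacle: checking unitarity and commutation when the uncompute step (reapplying the $L_p$ swaps) must cleanly restore work registers while several tokens sit at $u$. Indexing the work registers by port block rather than by token is what makes this go through.

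Two points remain. Walks stop at different times $t_r$: a token whose counter reaches zero is simply left in place, and all other operators act trivially on it, which again commutes. Then, by the factorization, the reduced state of root $r$'s control qubit is exactly that of QPD$(U(R_r,M_r),T_r)$ on $\ket{\sigma_r}$. \Cref{qrw-thm} then gives the $1/10$ versus $3/10$ gap, which amplification boosts to error $\delta$. For messages, each of the at most $t_r\le T_r$ steps of walk $r$ sends one message per branch, so a round costs at most $\sum_r 1$ message-branches. This gives a total of $O(\log(1/\delta)\sum_r\sqrt{R_rW_r})$. The number of rounds is $\max_r T_r$ per repetition, which gives the stated round bound.
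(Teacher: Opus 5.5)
Your proposal follows the paper's proof. The joint step factors as $\bigotimes_{r\in A} U'_{\tk_r}$; the communication part $\flip\times\reverse\times\virtualsend\times\send$ acts port-wise and, by edge-disjointness, equals $\bigotimes_{r\in A}(-\swap_{\tk_r})'$; the diffusion at a node shared by several walks is done token by token, with finished tokens left alone; and each root's control qubit then sees exactly the statistics of \Cref{qrw-thm}.

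The one place you are vaguer than the paper is the step you flag as the main obstacle. A node $u$ does not know in advance which walks, and hence which port blocks, will visit it. The paper therefore makes the local operator data-independent by an outer loop over all possible $\tk\in\mathcal{T}\setminus\{\vac\}$, using token-specific exchanges $L_{p,\tk}$ that act only when port $p$ holds exactly $\tk$. A single $(X,Y)$ pair then suffices, because each token value occurs at most once per branch and the pair is restored after every iteration, so per-block work registers are not needed.
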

\begin{proof}
The proof consists in adapting the proof of \Cref{dqrw-thm}, so that applying the quantum walk operator $U$ once now executes, for each root $r \in A$ and token $\tk_r \in \mathcal{T}$, one step of the corresponding walk, and such that $U$ can be (efficiently) simulated in a distributed fashion. (Note that the different tokens may execute walks of various lengths. Dealing with this is simple: any node holding a token finished with its walk simply does not execute the below operators on that token.)
The simulation of $U$ does not directly follow from that given within the proof of \Cref{dqw}. The main difference --- and one that could lead to multiple walks interfering with each other --- is that now two (or more) tokens could be held by the same vertex, say $u \in V$. Hence, we need to modify how $u$ implements the walks' diffusion operators. However, intuitively, these operators do not interfere with each other since different tokens can never use the same emission ports of $u$. 

To show this intuitive statement, we need the following preliminaries. 
First, observe that each quantum walk's electric network is completely determined by its token's content, $\tk$, as we can always put the (randomly chosen) unique identifier of the network's root $r$ within $\tk$, and it remains the case (with high probability) that these extended tokens are pairwise distinct for any two roots in $A$. Hence, we denote that electric network by $G(\tk) = (V_\tk,E_ \tk)$, in which case the space of the sequential walk on $G(\tk)$ is $H_\tk= \Span(\ket{r,\bar r}, \ket{\bar r,r}, \ket{u,v}, \ket{v,u}: \{u,v\} \in E_\tk)$ (where $r$ is given by the value of $\tk$). 
Moreover, Definition \ref{qwalk} applied to $G(\tk)$ gives an operator $U_\tk =  (-\swap_\tk) \times D_\tk$ for the walk, where $\swap_\tk$ represents the $\swap$ operator limited to edges in $G(\tk)$, and $D_\tk$ the diffusion operator on $G(\tk)$. 
Now, if we observe the quantum walks associated to roots in $A$, then since the implicit electric networks $(G_{\tk_r})_{r \in A} = (G_r)_{r \in A}$ are pairwise edge-disjoint, the $(H_{\tk_r})_{r \in A}$ are disjoint vector spaces and the multiple concurrent quantum walks execute on $\bigotimes_{r \in A} H_{\tk_r}$. Furthermore, it holds that $U = \bigotimes_{r \in A} U_{\tk_r} =  \bigotimes_{r \in A} \left( (-\swap_{\tk_r}) \times D_{\tk_r}\right) = \bigotimes_{r \in A} (-\swap_{\tk_r}) \times \bigotimes_{r \in A} D_{\tk_r}$.

Next, note that with multiple quantum walks in the distributed network, the corresponding distributed encoding --- which must now capture the positions of all tokens --- necessarily changes compared to the case of a single quantum walk. Hence, we redefine the distributed encoding, albeit with some abuse of notations. More precisely, for any token message $\tk$ and for any edge $\{u,v\} \in E_\tk$, we define the distributed walk state $\ket{u,v}'$ as:
$$\ket{u,v}'= \ket{\tk}_{u\to v}\otimes\ket{\vac}_{u\gets v} \otimes \ket{\vac}_{v\to u}\otimes\ket{\vac}_{v\gets u}
\bigotimes_{\{x,y\}\in E_\tk \cup \{r,\bar r\} \setminus \{u,v\}} \ket{\vac}_{x\to y}\otimes\ket{\vac}_{x\gets y} \otimes \ket{\vac}_{y\to x}\otimes\ket{\vac}_{y\gets x},$$

where $r$ is determined by the value of $\tk$, and has virtual port registers associated to the virtual edge $\{r, \bar r\}$.
The distributed walk state captures the position, within the distributed network, of the token corresponding to $G(\tk)$ only. As such, we can also define $H_{\tk}' = \Span(\ket{r,\bar r}',\ket{\bar r,r}',\ket{u,v}', \ket{v,u}' : \{u,v\} \in E_\tk)$, but also some operators $U_\tk'$, $(-\swap_\tk)'$ and $D_\tk'$ on that space, in the natural sense, from $U_\tk$, $\swap_\tk$ and $D_\tk$.
From these individual distributed walk spaces, we obtain the ``global'' walk space via a tensor product: $\bigotimes_{r \in A} H_{\tk_r}'$. In other words, states in $\bigotimes_{r \in A} H_{\tk_r}'$ capture the positions of the $|A|$ distributed quantum walks within the distributed network. 

How do we simulate $U$ in a distributed network, or in other words, how do we implement $\bigotimes_{r \in A} U_{\tk_r}'$? First, observe that if we take $\virtualsend$, $\reverse$ and $\flip$ as defined in the proof of \Cref{dqw} (and where $\virtualsend$ applies to the virtual port registers of any distinguished vertex $r \in A$), then the operator $\flip \times \reverse \times \virtualsend \times \send$ is, in fact, the distributed implementation of $\bigotimes_{r \in A} (-\swap_{\tk_r})$, that is, $\flip \times \reverse \times \virtualsend \times\send=\bigotimes_{r \in A} (-\swap_{\tk_r})'$. Hence, it remains to implement $\bigotimes_{r \in A} D_{\tk_r}'$.

To do that, note first that for any token message $\tk$, $D_\tk = \sum_{u \in V_\tk} \ketbra{u}{u} D_{\tk,u}$, where Definition \ref{qwalk} applied on $G(\tk)$ gives for any $u \in V_\tk$, subspace $H_{\tk,u}$, state $\ket{\psi_{\tk,u}}$ and operator $D_{\tk,u}$ --- these respectively capture the neighbors of $u$ within $G(\tk)$, a (edge) weighted quantum superposition over these neighbors, and an orthogonal reflection about $(\ket{\psi_{\tk,u}})^\perp$ in $H_{\tk,u}$. Then, for any token message $\tk$, the operators $(\ketbra{u}{u} \otimes D_{\tk,u})_{u \in V_\tk}$ work on orthogonal subspaces of $H_\tk$, hence if we let the distributed implementation be $D_{\tk,u}' = (\ketbra{u}{u} \otimes D_{\tk,u})'$ then the operators $(D_{\tk,u}')_{u \in V_\tk}$ acts on the orthogonal direct sum of subspaces $(\ket{u}\otimes H_{\tk,u})'$ of $H_\tk'$. Hence, to implement $D_\tk'$, intuitively each node $u$ need only simulate $D_{\tk,u}'$ when $u$ holds token $\tk$, and the identity otherwise.

To extend such a simulation to that of $\bigotimes_{r \in A} D_{\tk_r}'$, then each node $u \in V$ need only only simulate $\bigotimes_{r \in A} C_{\tk_r,u}$, where $C_{\tk,u} = D_{\tk,u}'$ when $u$ holds token $\tk$, and the identity otherwise. We next show how this can be done, for any node $u \in V$. In short, we adapt the description given in the proof of \Cref{dqw} for the similar operator, but with one major change: an outer loop iterating over all possible token messages. This, along with the fact that token are defined on pairwise edge-disjoint electric networks, ensures that when a node holds multiple walks' tokens, the corresponding quantum walks do not interfere with each other.  

More precisely, in addition to the emission and reception registers, assume $u$ holds two additional work registers, named $X$ and $Y$. Moreover, let $\deg'(u)$ be the degree of $u$ whenever $u \notin A$, and $\deg(u)+1$ otherwise.
The space of the $X$-register is $\Span(\ket{p} : 1 \leq p \leq \deg'(u))$, where emission ports have been identified with their port numbers. For any token message $\tk$, let $H_{\tk,u}''$ be the $X$-register's subspace restricted to port numbers corresponding to the edges incident to $u$ that are in $G(\tk)$ (and also to port $\deg'(u)$ if $u \in A$), and let $D_{\tk,u}''$ be the operator similar to $D_{\tk,u}$, but that applies to $H_{\tk,u}''$. The space of the second register is the vector space generated by all possible token messages. Finally, for any token $\tk \in \mathcal{T}$ and any port number $1 \leq p \leq \deg'(u)$, we define a memory-register exchange operator $L_{p,\tk}$ such that $L_{p,\tk}:\ket{0}_X\ket{\vac}_Y\ket{\tk}_{u\to v}\mapsto \ket{p}_X\ket{\tk}_{Y}\ket{\vac}_{u\to v}$,   \ \  $L_{p,\tk}:\ket{p}_X\ket{\tk}_{Y}\ket{\vac}_{u\to v }\mapsto\ket{0}_X\ket{\vac}_Y\ket{\tk}_{u\to v}$,\ \  
and $L_{p,\tk}$ extends arbitrarily (for instance as identity) on the orthogonal space (which includes, in particular, when an emission register holds any token message other than $\tk$, or also a non-token message), in such a way that $L_{p,\tk}$ remains unitary.
Then, $u$ implements $\bigotimes_{r \in A} C_{\tk_r,u}$ as follows.

\begin{enumerate}
    \item Assume registers $X$ and $Y$ are at initial value $\ket{0}_X$ and $\ket{\vac}_Y$ .
    \item Sequentially, for each possible value of $\tk \in \mathcal{T} \setminus \{\vac\}$, do:
    \begin{enumerate}
    \item \label{step2aa} Sequentially, for each (real or virtual) emission register $u\to v$ numbered by $1\leq p \leq \deg'(u)$ such that $p$ encodes an edge for $\tk$ --- in other words, an edge of $G(r,\tk)$ where $r$ follows from the value of $\tk$, or the virtual edge $\{r, \bar r\}$ --- apply $L_{p,\tk}$ (to the work, emission and reception registers held by $u$).
    \item\label{step2bb} If the register $Y$ holds  $\tk$: Apply $D_{\tk,u}''$ on space $H_{\tk,u}''$ of register $X$.\\
     \emph{Note that $D_{\tk,u}''$ and $H_{\tk,u}''$ are computed locally using $\tk$.} 
    \item\label{step2cc} Apply again Step~\ref{step2aa}.\\
    \emph{Registers $X,Y$ are back to their initial values $\ket{0}_X$ and $\ket{\vac}_Y$.}
\end{enumerate}
\end{enumerate}
\end{proof}

Now we would like to go a bit further by considering concurrent walks that could share some edges. In a classical (i.e. non-quantum) setting, one can handle the resulting congestion by augmenting either the number of rounds or the bandwidth (or the number of communicating ports per edge). But, in the quantum setting, we would face a critical synchronization issue: making messages with different arrival order interfere with each other.
Indeed, with a phase structure, the messages corresponding to some $i$th step of the quantum walk may arrive (from different neighbors) in different rounds of the same phase $i$, in which case they will not interfere with each other in the way described by the quantum walk framework.
However, putting these messages back in the same order is not a reversible operation, since it would ``erase" the initial order.

In the following corollary, we circumvent this issue for the specific case of concurrent edges leading to marked vertices, with a carefully designed synchronization process.
\begin{corollary}\label{cdqrw-cor2}  
Let $G=(V,E)$ be a distributed network with local inputs, 
with some subset $A\subseteq V$ of \emph{distinguished vertices}, and let $0<\delta<1$. 
For each $r\in A$,
let $G_r=G(r,\tk_r)$ be some implicit electric network rooted in $r$ 
with token $\tk_r$, and $M_r$ be a set of marked elements such that
(1) $G_r$ has total weight at most $W_r$;
(2) either $M_r=\emptyset$ or $\reff(r,M_r) \leq R_r$;
(3) $G_r,G_{r'}$ are edge-disjoint for each $r'\neq r$, except potentially on edges $\{u,v\}$ such that 
$\{u,v\}\cap(M_r\cup M_{r'})\neq\emptyset$;
(4) tokens $\tk_r$ are pairwise distinct.

There is a quantum distributed algorithm such that each $r\in A$  decides whether $M_r=\emptyset$ with bounded error $\delta$. Moreover the total message complexity is $O(\log (1/\delta) \times \sum_{r\in A} \sqrt{R_rW_r})$, and the total
 round complexity is $O(\log (1/\delta) \times \max_{r\in A} \sqrt{R_rW_r})$.
\end{corollary}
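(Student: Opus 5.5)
The plan is to reduce \Cref{cdqrw-cor2} to the proof of \Cref{cdqrw-cor}. The only place where that proof used edge-disjointness is in the simulation of the global operator $U=\bigotimes_{r\in A}U_{\tk_r}$: disjointness was needed so that two tokens never occupy the same port register at the same time. I therefore first isolate exactly where shared edges can occur. By hypothesis (3), any edge $\{u,v\}$ shared by $G_r$ and $G_{r'}$ has an endpoint, say $u$, in $M_r\cup M_{r'}$.

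The key observation is that the diffusion at a marked vertex is the identity (\Cref{qwalkmarked}). Consequently, whenever token $\tk_r$ arrives at a vertex $u\in M_r$ through port $v\to u$, one step of $U_{\tk_r}$ simply sends it back over the same edge, with a sign flip. So the only interaction on a shared edge is a ``bounce'' at a marked endpoint.

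I will modify the encoding so that bounces never contend for a register. The token is never actually stored at a marked vertex. Concretely, when $u$ receives $\tk_r$ in $u\gets v$ and $u\in M(\tk_r)$ (which $u$ computes from $M_u(\tk_r)$), the configuration $\ket{u,v}'$ is represented \emph{virtually}: the token is held at $v$ in a dedicated per-token ``bounced'' register. The next step then returns it to $v\gets u$ with sign $-1$ without any message being sent. Implementing this requires two things. First, the sending node $v$ must know in advance whether the neighbor behind a port is marked for $\tk$. I would achieve this by a preliminary exchange, or more simply by letting the message go through and be reflected back. The reflected version costs one extra message and one extra round per step, which keeps the bounds within a constant factor. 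Second, registers must be indexed by token, i.e.\ there is one reception slot per (port, token) pair, using the outer loop over $\tk\in\mathcal T$ exactly as in the $L_{p,\tk}$ construction. With per-token slots, two tokens on the same edge live in distinct registers, and the tensor decomposition $\bigotimes_r H'_{\tk_r}$ is preserved.

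The main obstacle is synchronization. On a shared edge, several tokens may need to traverse the same port in the same round. Either the bandwidth must accommodate them, or they must be serialized, and serialization would change arrival order irreversibly. My plan is to use a fixed two-round schedule per walk step: the first round carries forward moves, and the second round carries the reflections off marked vertices. Each forward move along an edge into $u$ is sent only by the unmarked walk whose edge this is. When a marked vertex sends a reflection, it sends all received tokens back as a single message per port, containing a superposition of per-token slots; in each branch these are separate messages, one per token. Since the schedule is fixed and independent of the quantum state, no order information is created, and the operation is unitary.

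I then check that the resulting global step equals $\bigotimes_r U_{\tk_r}'$ on the relevant subspace. Given that, \Cref{qrw-thm} applies walk by walk, as in \Cref{dqrw-thm}, and yields the claimed bounds. Each token still costs $O(1)$ messages per step, so the total is $O(\log(1/\delta)\sum_r\sqrt{R_rW_r})$ messages. Rounds are a constant factor times the longest walk, which gives the stated round complexity.
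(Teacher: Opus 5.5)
Your core mechanism matches the paper's. Since the diffusion is the identity at a marked vertex, the unmarked endpoint can simulate the marked one locally, and a constant number of extra rounds per step suffices. The gap is in how you handle several tokens that want the same port.

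\textbf{Why per-token slots do not work.} ``Per-token slots'' are not available in the model. Each port has a single emission register and a single reception register, so each direction of an edge carries one $O(\log n)$-qubit message per round. Giving every token its own slot multiplies an edge's bandwidth by the number of walks overlapping on it. The claimed round bound $O(\log(1/\delta)\max_r\sqrt{R_rW_r})$ cannot absorb that, and serializing brings back the ordering problem you raise yourself. A telling sign: with such slots the walks would live on pairwise disjoint registers, and the proof of Corollary~\ref{cdqrw-cor} would go through verbatim for arbitrarily overlapping walks. Hypothesis (3) and the special treatment of marked vertices would both be unnecessary, and indeed your argument never really uses (3).

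Your send-and-reflect schedule needs these slots intrinsically. An edge $\{u,v\}$ may lie in $G_r$ with $u\in M_r$ and in $G_{r'}$ with $u,v\notin M_{r'}$. In one branch of the product state both tokens sit at $v$ bound for $u$, and your first round puts both on the real port $v\to u$. The write-up is also inconsistent here. You say the token is never stored at a marked vertex, and that forward moves are sent only by ``the walk whose edge this is''. Yet your reflection requires the bouncing token to cross to $u$.

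\textbf{The missing idea.} You need to separate the markedness test from the movement of the walk token, so that (3) can do its job. In the paper each step takes three rounds:
\begin{enumerate}
\item In two rounds, the unmarked holder queries the far endpoint ($\ket{\tk_r,0}$ out, $\ket{\tk_r,b_v}$ back) while its token stays put.
\item In the third round, the token goes over the real port only if that endpoint is unmarked.
\item Otherwise the token goes into a virtually duplicated port at the holder. There the marked endpoint (identity diffusion) and the return trip are simulated locally.
\end{enumerate}
Consequently, a real port register only ever holds the token of a walk in which both endpoints of that edge are unmarked. Hypothesis (3) says that at most one walk sharing the edge has this property. This is what keeps the global walk space $\bigotimes_r H'_{\tk_r}$ realizable with one register per port, at a factor $3$ in rounds.

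Your reflection can replace the query, and is even one round cheaper, provided two things hold:
\begin{itemize}
\item the returning token is parked in such a local virtual register, never back in a real port register;
\item you invoke (3) to show that genuine tokens never share a real register.
\end{itemize}

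One caveat applies to both versions. The short test messages (your forward moves, the paper's queries) may be needed over one port by several walks whose far endpoint is marked. The paper's sketch assumes one test message per port and round. That assumption holds in the \findany{} application, where at most two walks share an edge and each directed port is crossed in a given round by at most one of them. It is not justified by giving each token its own channel.
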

\begin{proof}
We adapt the proof of \Cref{cdqrw-cor} by incorporating a separate procedure to deal with marked vertices. To do so, we have all nodes virtually duplicate the ports toward marked nodes, and via these, locally simulate the quantum walk stepping over to some marked node (and back, by definition). As a result, we ensure that the distributed walk space remains the tensor product of disjoint vector spaces.
Via this procedure we blow up the number of rounds required to simulate one quantum walk step by a small constant (i.e., by 3).

Indeed, observe that for any edge $\{u,v\}$ shared between two graphs $G_r$ and $G_{r'}$, the condition $\{u,v\}\cap(M_r\cup M_{r'})\neq\emptyset$ means that at least one of the endpoints is connected to a marked element in either graph $G_r$ or $G_{r'}$. Assume w.l.o.g that the edge $\{u,v\}$ of $G_r$ leads to a marked node $v$ from $u$. Intuitively, such a marked node $v$ will be simulated locally in $u$ for the walk on $G_r$. 
However, how does $u$ know whether any port $p$ leads to a marked node $v$? Node $u$ can use two rounds to communicate with the other endpoint of a port $p$, and learn whether the other endpoint is marked. Formally, node $u$ ``queries'' $v$ by sending the value $\ket{\tk_r,0}$ to $v$, and $v$ responds with $\ket{\tk_r,b_v}$ where $b_v = 1$ if and only if $v$ is marked. (Although this querying sends the token content over edge $\{u,v\}$, the walk ``remains'' on $u$.) Now, while $u$ sends a single message over $u\to v$, $v$ may get several queries, but can answer to each of them, without interference, on the corresponding port in the next round.
After which, the last (third) round can be used to send the token over if the other endpoint is not marked, or to simulate that communication locally otherwise.

We now sketch the modified simulation. Each quantum walk step is no longer implemented in the distributed network by a single round, but by a phase of 3 rounds now. 

First, consider the case that the token $\tk$ is in some emission register of some non-marked node $u$ (this includes virtual port registers leaving from $u$). Then, $u$ first locally executes Steps~\ref{step2aa} and~\ref{step2bb}. Then, prior to executing Step~\ref{step2cc} and sending $\tk$ over some port $p=u \to v$, $u$ uses the first two rounds of the phase to communicate with node $v$ and learn whether $v$ is marked, as explained above. If $v$ is marked, then $v$ is simulated locally by $u$, and a virtual port is used (for Step~\ref{step2cc}) instead of $p$. The last and third round of the phase implements the (actual or virtual) token communication. (Observe that both endpoints of an edge could do the same procedure. But since the edges are full duplex, the messages will not be in conflict: they will use the different directions $u\to v$ and $v \to u$, and thus different channels.) 

Next, consider the case that the token $\tk$ is in some (virtual) emission register $u \to v$, for some marked node $u$. Note that $u$ must be simulated by $v$, and thus that virtual emission register is local to $v$. Moreover, $D_u=\mathrm{Id}$ since $u$ is marked. Therefore, the simulation consists in taking back (virtually) the token to $v$. Step~\ref{step2aa} is just performed in $v$ for a virtual node $u$ and a virtual port leading back to $v$. Then, Step~\ref{step2bb} does not involve any new vertex since $D_u=\mathrm{Id}$. So, Step~\ref{step2cc} is also performed locally and the token goes back to $v$.
\end{proof}

\subsection{Deferred Analysis for Sequential Quantum Walks}\label{analysis}
Below, the Effective Spectral Gap Lemma~\cite{LeeMRSS11} will allow us to study the spectral properties of the quantum walks we define in \Cref{electric}.
\begin{lemma}[Effective Spectral Gap Lemma~\cite{LeeMRSS11}]\label{effective-lemma}
Let $\Pi_A$ and $\Pi_B$ be two orthogonal projectors in the same vector space, and $R_A = 2\Pi_A-\mathrm{Id}$ and $R_B = 2\Pi_B-\mathrm{Id}$ be the reflections about their images.
Assume $P_\theta$, where $\theta \geq 0$, is the orthogonal projector on the span of the eigenvectors of $R_B\times R_A$ with eigenvalues $e^{i\alpha}$ such that $|\alpha| \leq \theta$. 
Then, for any vector $w$ in the kernel of $\Pi_A$, we have $\norm {P_\theta\Pi_B w} \leq \theta  \norm{w}/2$.
\end{lemma}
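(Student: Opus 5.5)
The plan is to reduce to the two-dimensional invariant subspaces given by Jordan's lemma. The operator $R_B R_A$ is a product of two reflections, so the whole space decomposes into an orthogonal direct sum of subspaces of dimension at most two. Each of these is invariant under $\Pi_A$, $\Pi_B$, and $R_BR_A$. In a two-dimensional block, $\Pi_A$ and $\Pi_B$ are rank-one projectors onto unit vectors $a$ and $b$ with $|\langle a,b\rangle|=\cos\varphi$, where $\varphi\in(0,\pi/2)$. There $R_BR_A$ acts as a rotation by angle $2\varphi$, with eigenvalues $e^{\pm 2i\varphi}$. In one-dimensional blocks, $\Pi_A$ and $\Pi_B$ act as scalars in $\{0,1\}$ and $R_BR_A=\pm 1$.

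The key observation is that $P_\theta$ is a spectral projector of $R_BR_A$, so it respects this decomposition. We may therefore write $w=\sum_k w_k$ and bound $\norm{P_\theta \Pi_B w_k}\le (\theta/2)\norm{w_k}$ block by block. Orthogonality of the blocks then gives $\norm{P_\theta\Pi_B w}^2=\sum_k \norm{P_\theta\Pi_B w_k}^2\le (\theta/2)^2\sum_k\norm{w_k}^2$. Note that $w\in\ker\Pi_A$ implies $w_k\in\ker\Pi_A$ for every $k$, since $\Pi_A$ preserves the blocks.

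The one-dimensional blocks come first. If $\Pi_A w_k=0$ and $\Pi_B w_k\neq 0$, then $\Pi_A=0$ and $\Pi_B=1$ on that line. Hence $R_BR_A=(1)(-1)=-1$, whose eigenvalue has phase $\pi$. As long as $\theta<\pi$ we get $P_\theta \Pi_B w_k=0$. If $\theta\ge\pi$, the bound $\theta/2\ge 1\ge\norm{P_\theta\Pi_B}$ is trivial anyway.

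The main computation is the two-dimensional block. Here $w_k$ is a multiple of $a^\perp$, and $\Pi_B a^\perp=\langle b,a^\perp\rangle b$ has norm $\sin\varphi\,\norm{w_k}$. If $2\varphi>\theta$, then $P_\theta$ annihilates the whole block, because both eigenphases $\pm2\varphi$ exceed $\theta$ in absolute value (taking the principal branch, and using the trivial bound when $\theta\ge\pi$). Otherwise $2\varphi\le\theta$, and we bound $\norm{P_\theta\Pi_B w_k}\le\norm{\Pi_B w_k}=\sin\varphi\,\norm{w_k}\le\varphi\norm{w_k}\le(\theta/2)\norm{w_k}$.

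The only delicate point is setting up the Jordan decomposition correctly and checking that the rotation angle is exactly $2\varphi$. With the conventions $R_A=2\Pi_A-\mathrm{Id}$ and $R_B=2\Pi_B-\mathrm{Id}$, this is the standard fact that the product of reflections about two lines at angle $\varphi$ is a rotation by $2\varphi$. Once the decomposition is in place, summing the block bounds gives the claim.
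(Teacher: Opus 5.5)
Your proof is correct. The paper does not prove this lemma itself but imports it from \cite{LeeMRSS11}, and your argument is essentially the standard proof given there: a Jordan decomposition into invariant blocks of dimension at most two, the fact that the spectral projector $P_\theta$ is block diagonal, and the per-block estimate. That estimate is $\norm{P_\theta\Pi_B w_k}\le\sin\varphi\,\norm{w_k}\le\varphi\norm{w_k}\le(\theta/2)\norm{w_k}$ when $2\varphi\le\theta$, with $P_\theta$ vanishing on the block otherwise; your treatment of the one-dimensional blocks and of the trivial regime $\theta\ge\pi$ is also sound.
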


We can now state the two lemmas from~\cite{belovs2013} that will allow to use quantum walks $U(R,M)$ in order to decide whether $M\neq\emptyset$. We provide their proof for completeness, since we adapt them slightly for our setting.
\begin{lemma}[\cite{belovs2013}]\label{notone}
Assume that $M= \emptyset$. Set $\theta=1/(C_2\sqrt{(1/2)+C_1RW})$.
Then state $\ket{\sigma}=\frac{1}{\sqrt{2}}(\ket{r,\bar r}-\ket{\bar r,r})$ has a component of size at most $1/C_2$ on the span of eigenvectors of $U(R)$ with phase at most $\theta$.
\end{lemma}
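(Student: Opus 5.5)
The plan is to apply the Effective Spectral Gap Lemma (\Cref{effective-lemma}) directly. This requires writing $U(R)$ as a product of two reflections and exhibiting a vector $w$ in the kernel of the first projector whose image under the second projector is exactly $\ket{\sigma}$, with a norm we can control.

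\textbf{Step 1: write $U(R)=R_B R_A$.}
\begin{itemize}
\item For each $u\in V_r$, let $\ket{\tilde\psi_u}=\ket{u}\otimes\ket{\psi_u}\in H$. These vectors are pairwise orthogonal, since they live on different "first coordinates".
\item Each $D_u$ is the reflection about $(\ket{\psi_u})^\perp$ inside $H_u$. Moreover, $D$ is the identity on $\ket{\bar r,r}$. Hence $D=2\Pi_A-\mathrm{Id}$, where $\Pi_A$ is the orthogonal projector onto the orthogonal complement of $\Span(\ket{\tilde\psi_u}:u\in V_r)$.
\item Since $\swap$ has eigenvalue $+1$ on symmetric vectors and $-1$ on antisymmetric ones, we get $-\swap=2\Pi_B-\mathrm{Id}$. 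Here $\Pi_B$ projects onto the antisymmetric subspace, spanned by the vectors $\ket{u,v}-\ket{v,u}$, together with $\ket{r,\bar r}-\ket{\bar r,r}$.
\end{itemize}
Thus $U(R)=R_B R_A$ fits the lemma, and $P_\theta$ is the projector onto the eigenvectors of $U(R)$ with phase at most $\theta$.

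\textbf{Step 2: construct $w$.} We take
$$w=\sqrt{2C_1R}\sum_{u\in V_r}\ket{u}\otimes\ket{\psi_u}.$$
It lies in $\Span(\ket{\tilde\psi_u})$, hence in $\ker\Pi_A$. Expanding the definition of the $\ket{\psi_u}$, every edge $\{u,v\}\in E_r$ contributes $\sqrt{w_{uv}}(\ket{u,v}+\ket{v,u})$. This is symmetric, so $\Pi_B$ kills it. The only other term is $\frac{1}{\sqrt{C_1R}}\ket{r,\bar r}$, whose antisymmetric part is $\frac{1}{2\sqrt{C_1R}}(\ket{r,\bar r}-\ket{\bar r,r})$. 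Therefore
$$\Pi_B w=\sqrt{2C_1R}\cdot\frac{1}{\sqrt{2}\sqrt{C_1R}}\ket{\sigma}=\ket{\sigma}.$$
For the norm, the same expansion gives
$$\norm{w}^2=2C_1R\Big(2\sum_{e\in E_r}w_e+\frac{1}{C_1R}\Big)\le 4C_1RW+2=4\Big(\tfrac12+C_1RW\Big).$$

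\textbf{Step 3: conclude.} \Cref{effective-lemma} now yields
$$\norm{P_\theta\ket{\sigma}}=\norm{P_\theta\Pi_B w}\le \frac{\theta}{2}\norm{w}\le \theta\sqrt{\tfrac12+C_1RW}=\frac{1}{C_2},$$
using the chosen value of $\theta$. This is exactly the claim.

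Note that $M=\emptyset$ is used only in Step 1: every $u\in V_r$ has a genuine reflection $D_u$, so $\ker\Pi_A$ contains all the $\ket{\tilde\psi_u}$. The main care point is getting the sign conventions right: $-\swap$ must be $+1$ on antisymmetric vectors, and $D$ must be the reflection fixing $(\ket{\psi_u})^\perp$. Both have to hold so that the lemma applies with $\ker\Pi_A$ equal to the span of the $\ket{\tilde\psi_u}$. The remaining work is routine bookkeeping of the $\bar r$ term.
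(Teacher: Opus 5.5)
Your proof is correct and matches the paper's: both write $U(R)=(-\swap)\,D$ as a product of two reflections, use the witness $\sum_{u}\ket{u}\otimes\ket{\psi_u}$ (which lies in $\ker\Pi_A$ and whose symmetric edge part is killed by $\Pi_B$), and apply the Effective Spectral Gap Lemma. The only difference is normalization: the paper scales the witness by $\sqrt{C_1R}$, obtaining $\Pi_B w=\ket{\sigma}/\sqrt{2}$ and $\norm{w}^2= 1+2C_1RW$, while your $\sqrt{2C_1R}$ scaling gives $\Pi_B w=\ket{\sigma}$ directly and the same final bound $1/C_2$.
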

\begin{proof}
Let $P_{\theta}$ be the orthogonal projector on the span of eigenvectors of $U(R)$ with phase at most $\theta/C_2$. 
We will bound $\norm{P_\theta \ket{\sigma}}$
using \Cref{effective-lemma}, with $R_A=D$ and $R_B=(-\swap)$.
Let $\Pi_A$ and $\Pi_B$ be the corresponding projectors 
such that $R_A=2\Pi_A-\mathrm{Id}$ and $R_B=2\Pi_B-\mathrm{Id}$.

Let  $\ket{w}$ be the vector
$$\ket{w}=\sqrt{C_1R}\left(\sum_{u\in V_r}\ket{u}\otimes\ket{\psi_u}\right).$$

Then, first observe that $D\ket{w}=-\ket{w}$ (since there are no marked elements) and thus $\Pi_A\ket{w}=0$. 
Next, we can expand $\ket{w}$ to
$$\ket{w}=\ket{r, \overline r}
+\sqrt{C_1R}\sum_{\{u,v\}\in E_r}(\sqrt{w_{uv}}\ket{u,v} + \sqrt{w_{vu}}\ket{v,u}).$$

In which case, since the second term of $\ket{w}$ is symmetric (as $w_{uv} = w_{vu}$), that second term is in the kernel of $\Pi_B$.
Therefore
$$\Pi_B \ket{w}=\Pi_B \ket{r,\bar r}=\frac{1}{2}(\ket{r,\bar r}-\ket{\bar r,r})=\frac{1}{\sqrt{2}}\ket{\sigma}.$$
Thus, since $\norm{\ket{w}}^2=1+2C_1RW$, we apply \Cref{effective-lemma} and conclude that
$$\norm{P_\theta \ket{\sigma}}=\sqrt{2}\norm{P_\theta \Pi_B\ket{w}}
\leq \sqrt{2}\theta \norm{w}/2= \theta\sqrt{(1/2)+C_1RW}= 1/C_2.$$
\end{proof}

\begin{lemma}[\cite{belovs2013}]\label{one}
Assume that $M\neq \emptyset$, $r\not\in M$,
and $R\geq \reff $.
Then the state $\ket{\sigma}=\frac{1}{\sqrt{2}}(\ket{r,\bar r}-\ket{\bar r,r})$
has a component of size at least $\sqrt{\frac{C_1}{1+C_1}}$ on the $1$-eigenspace of $U(R,M)$.
\end{lemma}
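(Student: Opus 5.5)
Let $f$ be an optimal unit flow from $r$ to $M$, so $R(f)=\reff(r,M)\le R$. Following Belovs, I will build an explicit $1$-eigenvector $\ket{\phi}$ of $U(R,M)$ from $f$ and show that its normalized overlap with $\ket{\sigma}$ is large. Since the component of $\ket{\sigma}$ on the $1$-eigenspace is at least $|\langle\phi|\sigma\rangle|/\norm{\phi}$ for any $1$-eigenvector $\phi$, a single good candidate suffices.

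The candidate is
$$\ket{\phi}=\ket{r,\bar r}-\ket{\bar r,r}+\sum_{\{u,v\}\in E_r}\frac{f_{u,v}}{\sqrt{w_{uv}}}\bigl(\ket{u,v}-\ket{v,u}\bigr),$$
up to an overall scaling and a sign convention I will fix once the $\bar r$ term is checked. I will show that $D\ket{\phi}=\ket{\phi}$ and $(-\swap)\ket{\phi}=\ket{\phi}$.

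The swap part is immediate: every term has the antisymmetric form $c(\ket{x,y}-\ket{y,x})$, which $-\swap$ fixes. This includes the $r,\bar r$ term.

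For the diffusion part, I decompose $\ket{\phi}=\sum_u\ket{u}\otimes\ket{\phi_u}$ and check $D_u\ket{\phi_u}=\ket{\phi_u}$ for each vertex $u$.
\begin{itemize}
\item For $u\in M$, $D_u$ is the identity, so there is nothing to check. This is exactly why the flow is allowed to exit at $M$.
\item For $u=\bar r$, $D_{\bar r}$ is the identity.
\item For $u\notin M\cup\{r\}$, $D_u$ fixes $\ket{\phi_u}$ if and only if $\ket{\phi_u}\perp\ket{\psi_u}$. Here $\ket{\phi_u}=\sum_v (f_{u,v}/\sqrt{w_{uv}})\ket{v}$, so $\langle\psi_u|\phi_u\rangle=\sum_v f_{u,v}=0$ by flow conservation.
\item For $u=r$, $\ket{\phi_r}=\ket{\bar r}+\sum_v (f_{r,v}/\sqrt{w_{rv}})\ket{v}$. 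Then $\langle\psi_r|\phi_r\rangle=1/\sqrt{C_1R}+\sum_v f_{r,v}=1/\sqrt{C_1R}+1$, which is not $0$.
\end{itemize}
The root case needs fixing. I will rescale the coefficient of the $r\bar r$ term to $-\sqrt{C_1R}$, giving $\ket{\phi_r}=-\sqrt{C_1R}\,\ket{\bar r}+\sum_v (f_{r,v}/\sqrt{w_{rv}})\ket{v}$. Now $\langle\psi_r|\phi_r\rangle=-1+1=0$. The antisymmetric structure is preserved, so the swap check still goes through. The hypothesis $r\notin M$ is used here: it guarantees that $r$ is a genuine source with a nontrivial $D_r$.

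With the rescaled vector,
$$\norm{\phi}^2=2C_1R+2\sum_e f_e^2/w_e=2C_1R+2R(f)\le 2C_1R+2R.$$
The overlap is $|\langle\sigma|\phi\rangle|=\sqrt2\cdot\sqrt{C_1R}$. Hence
$$\frac{|\langle\sigma|\phi\rangle|^2}{\norm{\phi}^2}\ge\frac{2C_1R}{2C_1R+2R}=\frac{C_1}{1+C_1},$$
and taking square roots gives the claimed bound $\sqrt{C_1/(1+C_1)}$.

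The main obstacle is getting the signs and orientation conventions right. I need $-\swap$ to fix antisymmetric vectors (it does: $-\swap(\ket{x,y}-\ket{y,x})=\ket{x,y}-\ket{y,x}$). I also need the weight on $\ket{\bar r}$ to cancel the injected unit of flow in $\langle\psi_r|\phi_r\rangle$, which requires orienting $f_{r,v}$ outward in $\ket{\phi_r}$.
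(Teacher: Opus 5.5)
Your proposal is correct and is essentially the paper's proof. After your rescaling, your vector is exactly $-1$ times the paper's $1$-eigenvector $\sqrt{C_1R}(\ket{r,\bar r}-\ket{\bar r,r})-\sum_{(u,v)}\frac{f_{u,v}}{\sqrt{w_{uv}}}\ket{u,v}$ (the sum over ordered pairs yields your antisymmetric edge terms since $f_{v,u}=-f_{u,v}$), and your checks match the paper's: orthogonality to $\ket{\psi_u}$ via flow conservation, the $\bar r$ coefficient cancelling the injected unit of flow at $r$, antisymmetry for $-\swap$, and $\norm{\phi}^2=2C_1R+2R(f)$.
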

\begin{proof}
Let $f$ be some flow from $r$ to $M$ with energy $R(f)\leq R$.
Define the vector $$\ket{\phi}=\sqrt{C_1R}(\ket{r,\bar r}-\ket{\bar r,r})-\sum_{(u,v):\{u,v\}\in E_r}\frac{f_{u,v}}{\sqrt{w_{uv}}}\ket{u,v}.$$ 
Then the component of $\ket{\sigma}$ on the $\Span(\ket{\phi})$ satisfies
$$\frac{1}{\norm{\ket{\phi}}}\braket{\phi}{\sigma}
=\sqrt{\frac{2C_1 R}{2C_1R+2R(f)}}
\geq \sqrt{\frac{C_1}{1+C_1}}
.$$

We now show that $\ket{\phi}$ is a $1$-eigenvector of $U(R,M)$. 
In order to compute $D\ket{\phi}$, let us first compute $\braket{\psi_u,u}{\phi}$ for $u\not\in M$:
$$\braket{\psi_u, u}{\phi}=
\begin{cases} -\sum_{v:\{u,v\}\in E_r} f_{u,v} = 0,& \text{ when $u\neq r$};\\
1-\sum_{v :\{r,v\}\in E_r} f_{r,v} = 0,& \text{ when $u= r$}.
\end{cases}$$
Where the above equalities follow from the properties of the flow $f$.
Thus, $\ketbra{u}\otimes D_u$ acts as the identity on $\ket{\phi}$ when $u\not\in M$.
But by definition, it also acts as the identity when $u\in M$. 
Therefore $D \ket{\phi}=\ket{\phi}$.
Observe also that by construction $\swap\ket{\phi}=-\ket{\phi}$, since $f_{u,v}=-f_{v,u}$.
Hence, we conclude that  $U(R,M)\ket{\phi}=\ket{\phi}$.
\end{proof}

The final missing piece is the following property of QPD which will allow us to discriminate $1$-eigenvectors
from those corresponding to eigenvalues with a large eigenvalue gap from $1$.
\begin{lemma}[Quantum Phase Detection]\label{qpd}
Let $U$ be some unitary with given controlled access to it,
and $\ket{\phi}$ be an eigenvector of $U$ with eigenvalue $e^{i\alpha}$, with $\alpha\in(-\pi,\pi]$.
Algorithm QPD$(U,T)$ on input state $\ket{\phi}$ outputs $1$ with probability $0$ if $\alpha=0$,
and otherwise with probability
$$p=\frac{1}{2}-\frac{1}{2T}\cdot\frac{\sin\!\big(\tfrac{T\alpha}{2}\big)\,\cos\!\big(\tfrac{(T+1)\alpha}{2}\big)}{\sin\!\big(\tfrac{\alpha}{2}\big)}.
$$
In particular, 
$p> 2/5$ when $T\geq 20/|\alpha|$.
\end{lemma}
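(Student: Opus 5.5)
Since $\ket{\phi}$ is an eigenvector, the control qubit and the walk register never become entangled. So the whole computation reduces to a single-qubit calculation, averaged over the classical random choice of $t$.

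\textbf{Reduction to one qubit.} Fix $t$. Starting from $\ket{0}\ket{\phi}$, the first $H$ gives $\tfrac{1}{\sqrt2}(\ket0+\ket1)\ket\phi$. Applying $U^t$ controlled on the qubit gives $\tfrac{1}{\sqrt2}(\ket0+e^{it\alpha}\ket1)\ket\phi$. The second $H$ yields $\tfrac12\big((1+e^{it\alpha})\ket0+(1-e^{it\alpha})\ket1\big)\ket\phi$. Hence
\[
\Pr[1\mid t]=\tfrac14|1-e^{it\alpha}|^2=\tfrac{1-\cos(t\alpha)}{2}.
\]
When $\alpha=0$ this is $0$ for every $t$, which gives the first claim. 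Averaging over $t$ uniform in $\{1,\dots,T\}$ gives $p=\tfrac12-\tfrac{1}{2T}\sum_{t=1}^T\cos(t\alpha)$.

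\textbf{Closed form.} I will evaluate the sum with the standard Dirichlet-type identity
\[
\sum_{t=1}^T\cos(t\alpha)=\frac{\sin(T\alpha/2)\cos((T+1)\alpha/2)}{\sin(\alpha/2)},
\]
valid for $\alpha\neq 0$ with $\alpha\in(-\pi,\pi]$, so that $\sin(\alpha/2)\neq0$. It follows by taking the real part of the geometric sum $\sum_t e^{it\alpha}=e^{i(T+1)\alpha/2}\,\frac{\sin(T\alpha/2)}{\sin(\alpha/2)}$. Substituting gives the stated formula for $p$.

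\textbf{Quantitative bound.} This is the only step with any content, and I expect it to be mild. The numerator has absolute value at most $1$, so
\[
p\ge \tfrac12-\tfrac{1}{2T|\sin(\alpha/2)|}.
\]
For $|\alpha|\le\pi$ we have $|\alpha/2|\le\pi/2$, so concavity of $\sin$ on $[0,\pi/2]$ gives $|\sin(\alpha/2)|\ge \tfrac{2}{\pi}\cdot\tfrac{|\alpha|}{2}=|\alpha|/\pi$. Therefore
\[
p\ge \tfrac12-\tfrac{\pi}{2T|\alpha|}\ge\tfrac12-\tfrac{\pi}{40}>\tfrac12-\tfrac1{10}=\tfrac25,
\]
using $T|\alpha|\ge20$ and $\pi/40<0.0786$. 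The only care needed is the range restriction $\alpha\in(-\pi,\pi]$, which keeps $\sin(\alpha/2)$ bounded away from $0$ by a linear function of $|\alpha|$.
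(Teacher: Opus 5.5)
Your proposal is correct and follows the paper's proof essentially step by step: the same single-qubit reduction giving output probability $\sin^2(t\alpha/2)$ for each fixed $t$, the same closed form for the averaged cosine sum (which you derive from the geometric series where the paper cites Lagrange's identity), and the same final estimate that bounds the numerator by $1$ and uses $\sin(|\alpha|/2)\geq |\alpha|/\pi$. Your version works with $|\alpha|$ directly and invokes $T|\alpha|\geq 20$; this is slightly cleaner than the paper's write-up, which reduces to $\alpha>0$ and misstates the hypothesis as $|\alpha|\geq 4/T$ before actually using $T\alpha\geq 20$.
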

\begin{proof}
For a fixed $t$, the final state is
$$\frac{1}{2}\left((1+e^{it\alpha})\ket{0}+(1-e^{it\alpha})\ket{1}\right)\ket{\phi}.
$$
The probability to output $1$ is therefore 
$p(t)=\sin(t\alpha/2)^2$, which is always $0$ when $\alpha=0$.

We continue assuming that $\alpha\neq 0$. Then,
$$p=\E_{t\in\set{1,2,\ldots,T}} \sin^2(t\alpha/2)=\frac{1}{2}-\frac{1}{2}\E_{t\in\set{1,2,\ldots,T}} \cos (t\alpha),$$
which leads (via Lagrange's trigonometric identities) to the next and second part of the lemma statement: $p=\frac{1}{2}-\frac{1}{2T}\cdot\frac{\sin\!\big(\tfrac{T\alpha}{2}\big)\,\cos\!\big(\tfrac{(T+1)\alpha}{2}\big)}{\sin\!\big(\tfrac{\alpha}{2}\big)}
$.

We now proceed to the last part of the lemma statement. Since this formula does not depend on the sign of $\alpha$, 
and using the assumption that $|\alpha|\geq 4/T$, we can assume that $\alpha\in [4/T,\pi]$. 
Using $\sin x\geq (2/\pi) x$ when $0\leq x \leq\pi/2$, we get 
$$\sin(\alpha/2)\geq (2/\pi)(\alpha/2)\geq \alpha/\pi.$$
We then upper bound the other trigonometric terms by $1$, leading to
$$p\geq \frac{1}{2} -\frac{1}{2T} \times \frac{\pi}{\alpha} \geq \frac{1}{2}-\frac{\pi}{40}
> \frac{2}{5}.$$
\end{proof}

\section{Communication-Optimal MST and Leader Election}
\label{sec:LE}

In the classical setting, any distributed algorithm solving leader election or minimum spanning tree (MST) construction must send $\Omega(m)$ message~\cite{Kutten_2015_JACM}. In this section, we show that there exists a major quantum communication advantage for these two fundamental distributed computing problems. More precisely, $\tilde{O}(n)$ quantum messages suffices to solve leader election and MST, and thus for dense communication graphs we can obtain a quadratic quantum communication advantage. Moreover, these quantum communication upper bounds are tight (up to polylogarithmic factors) due to the lower bound shown in \Cref{sec:lb}.

We start by showing, in \Cref{subsec:clusterPrimitive}, how to leverage distributed quantum walks based on electric networks (whose framework is described in \Cref{sec:distributedQuantumWalk}) to solve a fundamental building block at the heart of leader election --- outgoing edge finding --- using only $\tilde{O}(n)$ quantum communication. Then, in \Cref{subsec:LEMST}, we leverage this more communication-efficient primitive to obtain distributed algorithms for MST and leader election that use only $\tilde{O}(n)$ quantum communication. 

\subsection{Quantum Outgoing Edge Finding}
\label{subsec:clusterPrimitive}

Recall that $G$ denotes the communication graph.
We assume that the edges of $G$ are weighted and that each edge's weight is known to its two endpoints. (If no non-trivial edge weight function is given, we simply assume that all edges have weight 1.) We also assume that $G$ is clustered into some node-disjoint subsets $V_1,\ldots,V_k$, and that each subset $V_i$ is spanned by some tree $T_i$, such that nodes know which incident edge is part of $T_i$, and whether the edge leads to a parent or child in the tree. 

In such a setting, a fundamental distributed primitive --- which we call \findany{} --- is to compute, for each cluster, an \emph{outgoing edge} (i.e., leading to some node outside the cluster) if one exists. We provide a quantum, communication-optimal version of \findany{}. Moreover, this primitive can be easily modified to find an outgoing edge within a certain edge weight range. Thus, via a simple binary search and at the cost of an $O(\log n)$ blowup in message and round complexities, we can also obtain a quantum, communication-optimal version for the weighted variant --- which we call \findmin{} --- that consists in computing, for each cluster, the \emph{minimum weight outgoing edge} if one exists.

\paragraph{Preliminaries.} 
For any cluster $V_i$, we assume that each node in $V_i$ knows the cluster size $|V_i|$, from now on denoted by $n_i$, since $n_i$ can be obtained via a simple convergecast and broadcast over $T_i$. Additionally, each node knows some estimate $n^*$ of the network size, which we use to abort the primitive for large clusters (say, cluster $V_i$ with $n_i > n^*$) for whom the primitive would take too many rounds to successfully terminate. Finally, recall that each node knows a polynomial upper bound $N$ on $n$. Thus, they can choose a large enough ID of $O(\log n)$ bits uniformly at random, such that these IDs are unique with high probability.

Furthermore, for any cluster $V_i$, we call a \emph{good node} any node in $V_i$ that is incident to some outgoing edge. Since at most $n_i$ edges incident to a given node in $V_i$ may lead back to $V_i$, any good node need only send (up to) $n_i+1$ messages to find an incident outgoing edge. Note that as such, any \emph{high degree node} in $V_i$ --- i.e., whose degree is strictly greater than $n_i$ --- is good. 

\paragraph{Description of \findany{}.}
We describe our quantum \findany{} primitive for a single cluster $V_i$ with spanning tree $T_i$. 
The primitive takes $O(n^* \log^{5/2} n)$ rounds, separated into two phases. To start with, if the estimate $n^*$ is bad --- that is, if $n^* < n_i$ --- then the cluster $V_i$ does nothing for these two phases (but nodes in $V_i$ may reply to messages from nodes outside $V_i$). Otherwise, the first phase deals with the easy case --- when the cluster $V_i$ contains a high degree node --- using the first $O(n^*)$ rounds. After which, the second phase takes the remainder of the rounds to deal with the other, harder case --- searching through the at most $n_i^2$ edges incident to $V_i$ --- using only $O(n_i)$ messages, by leveraging distributed quantum walks.

\textbf{In the first phase}, the cluster $V_i$ searches for a high degree (and thus good) node. More precisely, if any such high degree node exists, then the root finds and informs the one with the minimum ID; the high degree node in turn finds an outgoing edge using few messages. Otherwise, if no such high degree node exists, then all nodes in $V_i$ are aware upon termination of this first phase. 

This is done as follows. First, the root broadcast its ID--- which we call the cluster's ID --- over $T_i$. (Doing so also informs each node of its depth in $T_i$, which we use during the second phase.) After which, nodes in $V_i$ perform a convergecast over $T_i$, for which the high degree nodes' messages are their own ID whereas the other nodes' messages are empty. During this convergecast, any node in $T_i$ waits until it receives one message from each of its children. After receiving all such messages, any non-root node sends the \emph{aggregate message} --- that is, the minimum ID among all received IDs, and otherwise the empty message --- to its parent, whereas the root node broadcasts the aggregate message over $T_i$. Doing so informs the high degree node with minimum ID, if it exists. In which case, that high degree node in turn sends a message containing its cluster's ID over $n_i+1$ of its incident edges (and the other endpoints of these edges reply with their cluster's ID). Since only $n_i$ of these edges may lead back to $V_i$, that high degree node finds at least one outgoing edge within these $n_i+1$ edges. 
 
\textbf{In the second phase}, if no high degree node was found previously, the cluster $V_i$ runs $O(\log n)$ distributed quantum walks (one after another) to find a (low degree) good node and compute an outgoing edge, if any exists. (If a high degree node was found in the first phase, then nodes in $V_i$ do nothing except reply to messages coming from outside $V_i$.) In more detail, each distributed quantum walk allows $T_i$'s root to detect w.h.p. whether there exists a good node (i.e., incident to some outgoing edge) whose ID lies in some given range of the ID space. By executing a binary search over the (polynomial in $n$) ID space, the root finds w.h.p. the good node with minimum ID in $V_i$. That node in turn sends a message containing its cluster's ID over its (at most $n_i$) incident edges, and if it is good, at least one of these must be an outgoing edge.

It remains to describe the distributed quantum walk for cluster $V_i$ and for a given range $\mathcal{I}$ of the ID space --- see Section \ref{subsec:distQuantumWalkImplementation} for the generic framework --- but note that distributed quantum walks from different clusters might execute concurrently. (In fact, they might even walk on the same node, but subject to the constraints from Corollary \ref{cdqrw-cor2}.) To do so, we define a implicit electric network (corresponding to $V_i$) that defines the distributed quantum walk --- the corresponding token message contains the cluster's ID (i.e., the randomly chosen ID of that cluster's root) as well as the range $\mathcal{I}$ (i.e., the first and last ID).

In more detail, this implicit electric network is a weighted version of the subgraph $G_i = (U_i, E_i)$, where if we let $V_i(\mathcal{I})$ be the nodes in $V_i$ whose ID is within $\mathcal{I}$, then $U_i = V_i \cup N(V_i(\mathcal{I}))$ and $E_i$ consists of the edges of $T_i$ together with all edges incident to $V_i(\mathcal{I})$. As for the weights, they are defined as follows: for any edge in $T_i$, the weight is the highest depth (in $T_i$) between the edge's two endpoints, whereas any other edge of $E_i$ has weight 1. Moreover, the electric network is rooted in $T_i$'s root, and the set of marked elements $M_i$ consists of all nodes in $N(V_i(\mathcal{I})) \setminus V_i$.
Note that, as required by the distributed quantum walk primitive, each node with the token can compute, using the token's content, whether they are in $M_i$ or not, as well as, when they are unmarked, which of their incident edges are in $G_i$ as well as the corresponding edge weights.

\paragraph{Analysis of \findany{}.} We start with some guarantees for the two phases of our quantum \findany{} primitive --- see \Cref{lem:firstPart,lem:secondPart}.

\begin{lemma} 
\label{lem:firstPart}
    The first phase of \findany{} uses $O(n)$ messages. Moreover, for any cluster $V_i$ such that $n_i \leq n^*$, the cluster finds at least one outgoing edge if and only if there exists a high degree node in that cluster.
\end{lemma}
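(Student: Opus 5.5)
The proof splits into a message count summed over all clusters and a correctness argument for a fixed cluster $V_i$ with $n_i \le n^*$.

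\textbf{Message count.} For each cluster $V_i$ I would count the messages of each step of the first phase.
The root's broadcast of its ID over $T_i$ sends one message per tree edge, so $n_i - 1$ messages.
The convergecast also sends one message per tree edge: each non-root node sends exactly one aggregate message to its parent, after hearing from all its children. The root's final broadcast of the aggregate adds another $n_i - 1$ messages.
If a high degree node is informed, it sends its cluster ID over $n_i+1$ incident edges, and each other endpoint replies once, for $2(n_i+1)$ messages. Only the unique minimum-ID high degree node does this, so it happens at most once per cluster.
Clusters with $n_i > n^*$ send nothing of their own; their nodes only reply to queries, and those replies are already charged to the querying cluster.
Hence cluster $V_i$ accounts for $O(n_i)$ messages, and since the clusters are node-disjoint, $\sum_i O(n_i) = O(n)$.

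\textbf{Correctness, ($\Leftarrow$).} Suppose $V_i$ contains a high degree node.
The convergecast computes the minimum ID among high degree nodes: by induction on the height in $T_i$, each node forwards the minimum high-degree ID in its subtree.
The root's broadcast then informs that node $v$, which is unique because IDs are unique w.h.p.
Node $v$ has degree at least $n_i+1$ and queries $n_i+1$ distinct incident edges. At most $n_i - 1 < n_i$ of these can lead into $V_i$, since the graph is simple and $V_i \setminus \{v\}$ has $n_i - 1$ nodes.
So some queried neighbour replies with a cluster ID different from $V_i$'s, which identifies an outgoing edge. Replying with one's cluster ID is well defined because every node learned its cluster ID during the first broadcast.

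\textbf{Correctness, ($\Rightarrow$).} If no high degree node exists, every node sends an empty message in the convergecast, so the root broadcasts the empty aggregate. No node then queries its incident edges, and the cluster finds no outgoing edge in this phase; all its nodes learn this fact from the broadcast.

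\textbf{Main obstacle.} The only delicate point is that all of this finishes within the $O(n^*)$ rounds allotted to the first phase. The broadcast and convergecast each take at most $\mathrm{depth}(T_i) \le n_i \le n^*$ rounds. The $n_i+1$ queries can be sent in parallel, one per port, in a single round, with the replies arriving in the next round. Together with the uniqueness of IDs w.h.p., this is the step I would check carefully.
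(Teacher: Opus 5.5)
Your proof is correct and follows the paper's argument. Each cluster with $n_i \le n^*$ is charged $O(n_i)$ messages for the broadcast, the convergecast and the $n_i+1$ queries (plus replies) of the minimum-ID high degree node; replies from clusters that sit out ($n_j > n^*$) are charged to the querying sender; summing over the node-disjoint clusters gives $O(n)$. Your explicit check of the ``if and only if'', including the sharper bound of at most $n_i-1$ edges back into $V_i$, just spells out what the paper leaves to the algorithm description. One small overstatement: the argument only needs the nodes of $V_i$ to know $V_i$'s ID, not that every node in the network learned its own cluster ID, which need not hold for clusters that do nothing in this phase.
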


\begin{proof}
    Consider any cluster $V_i$ satisfying $n_i \leq n^*$. Then, altogether, its nodes send $O(n_i)$ messages and take at most $O(n_i)= O(n^*)$ rounds to execute the convergecast, broadcast and have the high degree node with minimum ID (if one exists) find an incident outgoing edge. 
    On the other hand, consider any cluster $V_j$ satisfying $n_j > n^*$. Then, nodes in $V_j$ send no messages except for replying to messages sent from outside $V_j$, thus from some cluster $V_i$ satisfying $n_i \leq n^*$. Then, we can simply allocate the cost of any such reply message to the sender in $V_i$, and double its message cost. Hence, letting $I \subseteq [k]$ denote all indices corresponding to some cluster $V_i$ satisfying $n_i \leq n^*$, it follows that the message complexity of the first phase is at most $O(\sum_{i \in I} n_i) = O(n)$.
\end{proof}

\begin{lemma}
\label{lem:secondPart}
    The second phase of \findany{} uses $O(n \log^{5/2} n)$ messages. Moreover, it holds with high probability that, for any cluster $V_i$  such that $n_i \leq n^*$ and which contains no high degree node, the cluster finds at least one outgoing edge if and only if there exists a good node in that cluster.
\end{lemma}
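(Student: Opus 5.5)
The plan is to apply \Cref{cdqrw-cor2} once per step of the binary search, with all clusters running their walks concurrently. There are $O(\log n)$ steps, since the ID space is polynomial in $n$. Within a step, each active cluster $V_i$ has $n_i \leq n^*$, no high degree node, and current range $\mathcal{I}$. It runs its walk on the implicit electric network $G_i=(U_i,E_i)$ rooted at the root of $T_i$, with marked set $M_i = N(V_i(\mathcal{I}))\setminus V_i$. We set $\delta = 1/n^{c}$ for a large constant $c$, so that every detection, in every cluster and every binary search step, is correct with high probability by a union bound.

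It remains to bound $W_i$ and $R_i$ and to check the hypotheses of \Cref{cdqrw-cor2}.

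\textbf{Weight.} Each tree edge has weight at most its depth, which is at most $n_i$. So the tree edges contribute at most $\sum_{j\le n_i} j = O(n_i^2)$. Every node of $V_i$ has degree at most $n_i$, since there are no high degree nodes. Hence the non-tree edges, all incident to $V_i(\mathcal{I})$, number at most $n_i^2$, each of weight 1. Thus $W_i = O(n_i^2)$.

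\textbf{Resistance.} When $M_i\neq\emptyset$, pick a marked node $x$ adjacent to some $u\in V_i(\mathcal{I})$. Send one unit of flow from the root down the tree path to $u$, then along the edge $\{u,x\}$. The tree edge entering depth $j$ has weight $j$, so the energy is at most $\sum_{j=1}^{n_i} 1/j + 1 = O(\log n)$. We therefore take $R_i = O(\log n)$. The root is never marked, since marked nodes lie outside $V_i$.

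\textbf{Hypotheses of \Cref{cdqrw-cor2}.} Distinct tokens follow from the cluster IDs being unique with high probability. For near-disjointness, the clusters are node-disjoint, so tree edges of different clusters are distinct. A non-tree edge of $G_i$ either lies inside $V_i$ (which cannot be in $G_{i'}$, because every edge of $G_{i'}$ has an endpoint in $V_{i'}$), or joins $V_i$ to a node outside $V_i$, which is marked in $M_i$. Therefore any edge shared by $G_i$ and $G_{i'}$ meets $M_i\cup M_{i'}$, as required.

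\textbf{Message count.} Each step costs $O(\log(1/\delta)\sum_i \sqrt{R_iW_i}) = O(\log n \cdot \sum_i n_i\sqrt{\log n}) = O(n\log^{3/2} n)$ messages. Over the $O(\log n)$ steps this gives $O(n\log^{5/2} n)$. The final step adds $O(n_i)$ messages per cluster: the root informs the selected node along $T_i$, and that node queries its at most $n_i$ incident edges, with replies charged to the sender as in \Cref{lem:firstPart}. Clusters with $n_j>n^*$ only reply, and are charged in the same way.

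\textbf{Correctness.} $M_i\neq\emptyset$ exactly when $V_i(\mathcal{I})$ contains a good node. So with high probability the binary search returns the good node of minimum ID if one exists. That node's scan of all its incident edges finds an outgoing edge. If no good node exists, every scanned edge leads back into $V_i$, so no outgoing edge is reported. This gives the "if and only if".

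The main obstacle I expect is the round budget. The walk length $T=\Theta(\sqrt{R_iW_i})$ must be fixed in terms of $n^*$ rather than $n_i$, so that all clusters stay synchronized within the $O(n^*\log^{5/2}n)$ rounds. A larger $T$ only improves the guarantee in \Cref{qrw-thm}, but it could inflate message counts for small clusters. I would handle this by letting each cluster use its own $T_i$ from $n_i$ and simply idle afterwards, as allowed in the proof of \Cref{cdqrw-cor}.
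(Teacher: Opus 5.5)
Your proposal is correct and follows the paper's proof essentially step for step. It uses the same bounds $W_i=O(n_i^2)$ and $R_i=O(\log n_i)$ (via the depth-weighted tree path plus one edge to a marked neighbor), the same invocation of \Cref{cdqrw-cor2} with $\delta=1/\mathrm{poly}(n)$ in each of the $O(\log n)$ binary-search steps, and the same charging of replies to their senders. You are in fact more explicit than the paper in checking the near-disjointness hypothesis (3) and the ``only if'' direction of correctness.
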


\begin{proof}
Consider any cluster $V_j$ satisfying $n_j > n^*$. Then, nodes in $V_j$ send no messages except for replies (to messages from outside $V_j$). Similarly, since any cluster $V_i$ satisfying $n_i \leq n^*$ and containing some high degree node finds an outgoing edge during the first phase by Observation \ref{lem:firstPart}, such a cluster sends no messages during the second phase, except for replies, by the algorithm description. For these two types of clusters, as in the proof of Observation \ref{lem:firstPart}, we allocate the message cost of such replies to the sender node, thus doubling its message cost.
Hence, it suffices to consider the clusters in $I \subseteq [k]$, where $I$ denote all indices corresponding to some cluster $V_i$ satisfying $n_i \leq n^*$ and that contains no high degree node. 

We show next that if the clusters in $I$ each execute a single distributed quantum walk, concurrently, they send at most $O(n \log^{3/2} n)$ messages overall, and with high probability each such cluster finds an outgoing edge in some given range of the ID space, if one exists. 
To do so, we first analyze the electric network defining the distributed quantum walk corresponding to some $V_i$ for $i \in I$. The total weight of that network is at most $W_i=2n_i^2$ since
$\sum_{e\in E_i} w_e \leq  n_i \times\mathrm{depth}(T_i) + n_i^2 \leq W_i$. Next, assume the set of marked nodes (for $V_i$), which we denote by $M_i$, is non empty, and consider any marked node $v \in M_i$. Then, we can define a flow $f$ on the electric network, which is non-zero on all edges except on a single path from $T_i$'s root $r$ to $v$: that path consists of a shortest path within $T$ from $r$ to a neighbor of $v$ in $V_i$, and an additional edge from that neighbor to $v$. The energy of such a flow is:
$$R(f) = \sum_{e\in E_i} \frac{f_e^2}{w_e} \leq 1 + \sum_{i=1}^{\mathrm{depth}(T_i)} \frac{1}{i} \leq 1+\log (\mathrm{depth}(T_i))\leq 1+\log n_i.$$

Thus, the effective resistance of this electric network is at most $R_i = 1+\log n_i$, and we have $\sqrt{R_i W_i} = O( \sqrt{\log n_i} \cdot n_i)$. 
Additionally, any two different electric networks (corresponding to two different clusters) are edge-disjoint except on edges ${u,v}$ where at least one of the node is marked for one of the two clusters, and the various tokens from different clusters are pairwise distinct (as they contain different cluster IDs). Hence, from Corollary~\ref{cdqrw-cor2}, it holds with high probability that the roots of the clusters each detect whether there exists a good node whose ID is within the given range --- this range can differ between clusters --- using in total $O(\log n \cdot \sum_{i=1}^k \sqrt{R_iW_i}) = O(n \log^{3/2} n)$ messages and $O(\log n \cdot \max\limits_{i=1,\ldots,k} \sqrt{R_iW_i}) = O(n^* \cdot  \log^{3/2} n)$ rounds. 

Finally, we take into account the binary search over the ID range, and the cost of finding an outgoing message once a good node with minimum ID is found. First, since the ID range is polynomial in $n$, clusters in $I$ execute at most $O(\log n)$ distributed quantum walks. This takes at most $O(n^* \cdot  \log^{5/2} n)$ rounds, during which they send at most $O(n \log^{5/2} n)$ messages. Moreover, this ensures that with high probability (due to the error probability of the distributed quantum walk), each cluster in $I$ finds a good node with the minimum ID within that cluster if and only if the cluster contains a good node. After which, each such cluster takes an extra $O(1)$ rounds, concurrently, and sends at most $O(n_i)$ messages, to find an incident outgoing edge. Overall, this adds up to an extra $O(n)$ messages, and the lemma statement follows.
\end{proof}

Combined, Observation \ref{lem:firstPart} and Lemma \ref{lem:secondPart} imply the following theorem.

\begin{theorem}
\label{thm:findany}
    \findany{} takes $O(n^* \log^{5/2} n)$ rounds, sends $O(n \log^{5/2} n)$ messages and guarantees with high probability that each cluster $V_i$ such that $|V_i| \leq n^*$ finds an outgoing edge (if any exists).
\end{theorem}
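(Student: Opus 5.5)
The plan is to assemble Theorem~\ref{thm:findany} from Lemma~\ref{lem:firstPart} and Lemma~\ref{lem:secondPart}, adding their resource bounds and combining their correctness guarantees by a case split on clusters.

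\textbf{Rounds.} The first phase uses $O(n^*)$ rounds. A cluster with $n_i \le n^*$ needs $O(n_i)$ rounds for the broadcast, the convergecast and the final $n_i+1$ probes by the high degree node. Every cluster is given the same global $O(n^*)$ budget, which all nodes can compute since they know $n^*$. The second phase consists of $O(\log n)$ sequential binary-search iterations. Each iteration is one concurrent batch of walks, which by Corollary~\ref{cdqrw-cor2} takes $O(\log n\cdot \max_i\sqrt{R_iW_i}) = O(n^*\log^{3/2} n)$ rounds, because only clusters with $n_i\le n^*$ run walks. This gives $O(n^*\log^{5/2}n)$ rounds, plus $O(1)$ rounds for the final probing step. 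Fixing these round budgets globally keeps all clusters synchronized, so their walks run concurrently as the corollary requires.

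\textbf{Messages.} Summing the two lemmas gives $O(n)+O(n\log^{5/2}n)=O(n\log^{5/2}n)$ messages.

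\textbf{Correctness.} Fix a cluster $V_i$ with $n_i\le n^*$.
\begin{itemize}
\item If $V_i$ contains a high degree node, then Lemma~\ref{lem:firstPart} says it finds an outgoing edge in the first phase. Such a node is automatically good, and an outgoing edge does exist.
\item If $V_i$ contains no high degree node, then Lemma~\ref{lem:secondPart} says it finds an outgoing edge w.h.p.\ if and only if it has a good node.
\end{itemize}
It remains to note that an outgoing edge exists if and only if some node of $V_i$ is good, which holds by the definition of a good node. A union bound over the at most $n$ clusters, and over the $O(\log n)$ walk invocations each, keeps the overall success probability high. This requires each walk to run with error $\delta=1/\mathrm{poly}(n)$, and that choice is exactly what produces the $\log(1/\delta)=O(\log n)$ factor already accounted for above.

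The only delicate point is checking that the high-probability guarantees really hold simultaneously for all clusters and all binary-search steps. I expect this to be handled by the $\delta=1/\mathrm{poly}(n)$ choice. The concurrency conditions of Corollary~\ref{cdqrw-cor2} have already been verified within the proof of Lemma~\ref{lem:secondPart}.
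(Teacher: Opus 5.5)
Your proposal is correct and matches the paper, which obtains Theorem~\ref{thm:findany} directly by combining Lemma~\ref{lem:firstPart} and Lemma~\ref{lem:secondPart}: it sums their round and message bounds and splits the clusters with $n_i \le n^*$ according to whether they contain a high degree node. Your added remarks, on synchronization through global round budgets and on the union bound with $\delta = 1/\mathrm{poly}(n)$, only spell out what is already implicit in the proof of Lemma~\ref{lem:secondPart}.
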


\paragraph{Extension to \findmin{}.} To extend to the weighted case, we first modify the \findany{} primitive and ensure that with high probability, for any cluster $V_i$ and any given edge weight range, cluster $V_i$ computes an outgoing edge within that edge range, if any exists. This can be done by having each node of $V_i$ ignore, throughout the \findany{} primitive, any incident edge whose weight is not in the given range nor is part of $T_i$. 

Given such a modified \findany{} primitive, it is straightforward to see that a simple binary search on the edge weights allows each cluster $V_i$ to compute the minimum weight outgoing edge, if any exists, albeit at a $O(\log n)$ blowup in both round and message complexities. We capture this in the following statement.

\begin{theorem}
\label{thm:findmin}
    \findmin{} takes $\tilde{O}(n^*)$ rounds, sends $\tilde{O}(n)$ messages and guarantees with high probability that each cluster $V_i$ such that $|V_i| \leq n^*$ finds the minimum weight outgoing edge (if any exists).
\end{theorem}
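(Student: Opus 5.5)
The plan is to reduce \findmin{} to $O(\log n)$ calls of a range-restricted \findany{} and then invoke \Cref{thm:findany} for each call. First I define the range-restricted primitive: given an interval $J$ of edge weights, every node of every cluster treats each incident edge whose weight is outside $J$ and that is not in $T_i$ as absent. The algorithm and the implicit electric network stay the same otherwise. In the first phase, a node counts as high degree if its number of incident edges with weight in $J$ exceeds $n_i$. In the second phase, $E_i$ is $T_i$ plus the edges of weight in $J$ incident to $V_i(\mathcal{I})$, and the marked set is the set of outside endpoints of those edges. Since the restricted edge sets are subsets of the original ones, all the bounds used in \Cref{lem:firstPart,lem:secondPart} still hold or improve. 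These are the total weight $W_i\le 2n_i^2$, the resistance $R_i\le 1+\log n_i$ via the tree path plus one edge, and the near-edge-disjointness condition (3) of \Cref{cdqrw-cor2}. The token now also carries $J$, which costs $O(\log n)$ extra bits because weights are polynomial in $n$; tokens stay pairwise distinct. Consequently, the guarantee of \Cref{thm:findany} carries over: $O(n^*\log^{5/2}n)$ rounds, $O(n\log^{5/2}n)$ messages, and w.h.p.\ each cluster with $n_i\le n^*$ finds an outgoing edge of weight in $J$ if one exists.

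Next, each cluster with $n_i\le n^*$ runs its own binary search over the weight range $[1,\mathrm{poly}(n)]$, and all clusters proceed in lockstep. The root keeps an interval $[a,b]$ that is guaranteed to contain the minimum outgoing weight if one exists. In each round of the search it runs restricted \findany{} on $[a,\lfloor (a+b)/2\rfloor]$ and keeps the lower half if an edge is found, the upper half otherwise. Different clusters may query different intervals in the same iteration, and this is harmless because each token encodes its own $J$. After $O(\log n)$ iterations the interval is a single weight. Since weights are unique, the edge found in the last call is the minimum weight outgoing edge. If the very first call on the full range finds nothing, the cluster has no outgoing edge. The node that found the edge reports its result to the root by convergecast and broadcast over $T_i$, which costs $O(n_i)$ messages and $O(n_i)$ rounds per iteration and is dominated by the \findany{} cost.

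For correctness, I will make each \findany{} call succeed with probability $1-1/n^{c+1}$ by adjusting the $\log(1/\delta)$ amplification in \Cref{cdqrw-cor2}, which changes only constants. A union bound over the $O(\log n)$ iterations and at most $n$ clusters then gives the high-probability guarantee. Totals are $O(\log n)$ times the \findany{} costs, that is, $O(n^*\log^{7/2}n)=\tilde O(n^*)$ rounds and $O(n\log^{7/2}n)=\tilde O(n)$ messages.

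The step needing the most care is the first one: checking that restricting edges by weight preserves every hypothesis of \Cref{cdqrw-cor2}. In particular, the marked nodes must still be computable locally from the token together with the node's local information. This holds because both endpoints know the weight of their shared edge and the token contains $J$. Edges shared between two clusters' restricted networks must still touch a marked node, and this holds because restriction only deletes edges. I expect these checks to be routine.
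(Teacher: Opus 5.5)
Your proposal is correct and follows the paper's own route: you restrict \findany{} to a weight range by having nodes ignore non-tree incident edges outside the range, then binary-search over the polynomial weight range at an $O(\log n)$ blowup. Your check that the restricted networks still satisfy the hypotheses of Corollary~\ref{cdqrw-cor2} fills in details the paper leaves implicit. One small bookkeeping fix: if the final halving step is an unsuccessful query on the lower half, no edge was found in that call. In that case, finish with one extra restricted \findany{} call on the remaining singleton weight, or broadcast that weight over $T_i$, which identifies the edge since weights are unique.
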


\subsection{Quantum Distributed Algorithm for Minimum Spanning Tree}
\label{subsec:LEMST}

We describe a quantum distributed MST algorithm running in $\tilde{O}(n)$ rounds and messages. This algorithm is a synchronous version of the well-known Gallager-Humbert-Spira (GHS) distributed algorithm for constructing an MST \cite{GHS}, within which we plug in our communication-optimal quantum outgoing edge finding primitive from \Cref{subsec:clusterPrimitive}. 
Naturally, this algorithm also solves the leader election, spanning tree construction, and broadcast problem in $\tilde{O}(n)$ rounds and messages.

\paragraph{Algorithm Description.}
The algorithm works in phases with exponentially increasing estimates of the network size: that is, for phase $i \geq 1$, the estimate is $n^* = 2^i$. The algorithm terminates upon detecting that a single tree spans the whole network. As shown in the analysis, this takes at most $O(\log n)$ phases. (Alternatively, one could run the algorithm for a fixed $O(\log n)$ phases, determined by the analysis, at the cost of an additional inversely polynomial in $n$ error probability for the correctness.)

Each phase is further subdivided into $O(\log n)$ subphases. At the start of each phase, each node is its own \emph{MST fragment} (i.e., a subtree of the MST). Then, during each subphase, we merge MST fragments together by executing the following three steps:
\begin{enumerate}
    \item First, we run \findmin{} (i.e., the quantum minimum weight outgoing edge finding primitive from \Cref{subsec:clusterPrimitive}) with the estimate $n^*$ so that with high probability, any MST fragment with size smaller than or equal to $n^*$ finds its minimum weight outgoing edge. As a result, we can construct a virtual fragment (super)graph whose (super)nodes are the current MST fragments and where there exists an edge between two MST fragments if there exists a minimum weight outgoing edge with an endpoint in each of the two fragments. If one directs an edge within the fragment graph from the cluster having found that edge outwards, then it is known that (due to the properties of minimum weight outgoing edges) each connected component is spanned by two directed trees rooted at two neighboring fragments $F_1$ and $F_2$, where $F_1$ and $F_2$ both chose the same minimum weight outgoing edge (along opposite directions).
    \item Second, each fragment $F_i$ with size smaller than or equal to $n^*$ sends the ID of its root over the minimum weight outgoing edge it found in the previous step --- this is done by broadcasting over the fragment's tree to have the endpoint within $F_i$ of that minimum weight outgoing edge send a message over that edge. If a message comes along that same edge in the opposite direction --- i.e., that endpoint also receives an ID along the same edge --- then that ID is forwarded to the fragment's root. In which case, if the fragment ID is higher, the fragment becomes the root of the connected component.  
    \item Third, we merge each connected component in this virtual graph into a single MST fragment --- this can be done via a well-known distributed computing primitive. In short, the component's root fragment informs its neighboring fragments (via a broadcast over the fragment, and sending messages over incident minimum weight outgoing edges). These neighboring fragments redirect their tree edges, but also inform their neighboring fragments, and so on. To successfully complete, this ``naive'' merge operation may take up to $O(n_t)$ messages and rounds, where $n_t$ is the sum of the size of the fragments within that connected component. Here, to finish (whether successfully or not) the merge within $O(n^*)$ rounds, we abort the merge upon reaching some $O(n^*)$ rounds, and then
    we use an extra $O(n^*)$ rounds to reverse the merge and keep the fragments separate.
\end{enumerate}
After some $O(\log n)$ subphases, each resulting cluster executes the quantum outgoing edge finding primitive again. If the cluster finds an outgoing edge or if its size is strictly greater than $n^*$, then it goes to the next phase, otherwise it terminates.

\paragraph{Analysis.} We start with two basic lemmas -- see \Cref{lem:MSTFragment,lem:conditionalProgress}. 

\begin{lemma}
\label{lem:MSTFragment}
    For any phase and subphase of the MST algorithm, each fragment is a subtree of the MST with high probability.
\end{lemma}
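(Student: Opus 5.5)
We argue by induction over the phases and subphases, conditioning on the high-probability event that every call to \findmin{} is correct (\Cref{thm:findmin}). Since the algorithm makes $O(\log^2 n)$ such calls in total, a union bound keeps this event at probability at least $1-1/n^{c}$. On this event every fragment will be shown to be a subtree of the MST. The base case is immediate: at the start of each phase every node is its own fragment, and a single vertex is trivially a subtree of the MST.

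For the inductive step, assume every fragment at the start of a subphase is a subtree of the MST. By the correctness of \findmin{}, each fragment $F$ with $|F|\le n^*$ that selects an edge selects exactly its minimum weight outgoing edge $e_F$. Edge weights are distinct, so the cut property says $e_F$ belongs to the unique MST. Fragments that are too large, or that find no outgoing edge, select nothing.

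We then check what the third step produces. A merged fragment is the union of the fragments in one connected component of the virtual fragment graph, together with the selected edges joining them. Every edge of this union is an MST edge, either from an old fragment (by induction) or a selected edge (by the cut property). The union is connected by construction, so it is a connected subgraph of the MST, that is, a subtree. The redirection of parent pointers only re-roots the tree and does not change its edge set. The case that deserves care is an aborted merge: there the fragments are reverted to their pre-merge state, which by induction consists of MST subtrees, so the invariant survives. We should also note that a partially executed, then reversed, merge never leaves fragments in an intermediate state. This follows from the description, since the reversal undoes exactly the redirections performed within the $O(n^*)$-round window.

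The main subtlety is that the invariant must hold simultaneously for all fragments across all $O(\log^2 n)$ \findmin{} invocations. That is exactly why we first fix the single global high-probability event and then run a purely deterministic induction inside it. A secondary point is that fragments which skip \findmin{} because they exceed $n^*$ do nothing harmful: they either stay unchanged or are absorbed only through edges chosen by other fragments, and those edges are MST edges by the same cut argument.
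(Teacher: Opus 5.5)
Your proof is correct and follows the same route as the paper's proof: induction over subphases starting from singleton fragments. It relies on the high-probability correctness of \findmin{} and on the fact that merging along minimum weight outgoing edges (your cut-property argument), or reversing an aborted merge, preserves MST fragments. You also make explicit what the paper leaves implicit: the global union bound over the $O(\log^2 n)$ calls, why the merged union is a subtree, and how aborted merges and oversized fragments are handled.
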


\begin{proof}
    Consider any given phase. Then, the lemma statement follows by induction on the subphases. Indeed, the base case is trivially true. As for the induction step, suppose we start with MST fragments. Then, \findmin{} ensures that with high probability, each MST fragment (with size smaller than $n^*$) compute its minimum weight outgoing edge if and only if one exists. After which, we either successfully merge along these edges, or reverse the merges. It is a well-known fact that by merging along minimum weight outgoing edges, we maintain MST fragments and the induction step follows (with high probability).
\end{proof}

\begin{lemma}
\label{lem:conditionalProgress}
    Consider any phase with $n^* \geq n$. Then, for any of its subphase, the number of fragments reduces by half with high probability. Moreover, the merging takes $O(n)$ rounds.
\end{lemma}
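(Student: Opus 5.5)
Fix a phase with $n^* \geq n$ and any subphase within it, and argue conditionally on the high-probability event of \Cref{thm:findmin} and \Cref{lem:MSTFragment}. Since every fragment has size at most $n \leq n^*$, no fragment is excluded from \findmin{}. So, with high probability, every fragment that has an outgoing edge learns its minimum weight outgoing edge. If only one fragment remains, it spans $G$ and there is nothing to prove. Otherwise, because $G$ is connected, every fragment has an outgoing edge, and hence every fragment has an outgoing edge in the virtual fragment graph.

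The halving claim then follows from a counting argument on the fragment graph. Each fragment chooses exactly one edge, so the fragment graph has no isolated vertices. By the standard GHS property recalled in Step~1 (with distinct edge weights, the only cycles are $2$-cycles given by a shared minimum weight outgoing edge), each connected component is a tree. That tree contains at least two fragments and is rooted at the pair $F_1,F_2$ that chose the same edge. Step~2 then elects exactly one root per component, namely the fragment with the higher ID in the pair. Since every component contains at least two fragments, the number of components is at most half the number of fragments. After a successful merge, the number of fragments therefore drops by at least a factor of two.

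It remains to check that the merge succeeds, which also gives the round bound. The merge of a component broadcasts from the root fragment outward: it traverses each fragment's tree and then crosses one chosen edge into each neighbouring fragment. Each node and each chosen edge is handled $O(1)$ times, so the merge completes within $O(n_t) \leq O(n)$ rounds, where $n_t \leq n$ is the total size of the component. The abort threshold in Step~3 is $c \cdot n^*$ rounds for a suitable constant $c$. Since $n^* \geq n$, this threshold is at least the completion time, so no merge is aborted or reversed, and the whole merge takes $O(n)$ rounds. Note that in this phase $n^*$ may be as large as $2n$, but $O(n^*) = O(n)$ still holds if we bound the time actually used rather than the allotted budget.

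The main obstacle is to make the constant in the $O(n_t)$ completion-time bound explicit enough to match the abort threshold of Step~3. This includes the reorientation of tree edges inside each neighbouring fragment: re-rooting a fragment's tree along the path to the entry edge and then broadcasting is a single DFS/BFS-style pass, so it costs $O(|F|)$ rounds. Finally, I would take a union bound over the $O(\log n)$ subphases to keep the failure probability polynomially small.
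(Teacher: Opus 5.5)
Your proof is correct and follows the same route as the paper's. With $n^* \ge n$, \findmin{} succeeds for every fragment w.h.p., so every component of the fragment graph contains at least two fragments, which gives at most half as many components; and each merge finishes in $O(n_t) \le O(n)$ rounds, within the $O(n^*)$ abort budget, so no merge is reversed. You also spell out details the paper leaves implicit, namely the single-fragment edge case, the $2$-cycle/tree structure of components, and choosing the abort constant large enough, which is fine.
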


\begin{proof}
    Consider any such subphase $i \geq 1$. Let $\mathcal{F}_i$ be the set of fragments at the start of the subphase. Then, by definition of $n^*$ and Theorem \ref{thm:findmin}, it holds with high probability that each fragment in $\mathcal{F}_i$ finds a minimum weight outgoing edge. Thus, there are at most $\mathcal{F}_i/2$ connected components in the virtual fragment supergraph. Each such component contains at most $n$ nodes, and thus $O(n^*)$ rounds suffice to successfully merge together the fragments making up that component. In other words, each such component corresponds to exactly one fragment when the subphase ends, and the lemma statement follows.
\end{proof}

Using these lemmas, we can prove this section's main result.

\begin{theorem}
\label{thm:MST}
    There exists a quantum distributed algorithm that takes $\tilde{O}(n)$ messages and rounds to solve MST w.h.p.
\end{theorem}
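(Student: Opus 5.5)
The plan is to analyze the phased GHS-style algorithm of \Cref{subsec:LEMST} in three parts: correctness via \Cref{lem:MSTFragment}, termination via \Cref{lem:conditionalProgress}, and a cost bound obtained by summing over phases.

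\textbf{Correctness.} By \Cref{lem:MSTFragment}, at every subphase each fragment is an MST subtree w.h.p. The algorithm stops only when every remaining cluster has size at most $n^*$ and the final call to \findmin{} (\Cref{thm:findmin}) finds no outgoing edge. W.h.p., that call correctly reports the absence of outgoing edges, so $G$ being connected forces a single fragment. That fragment spans $V$ and is an MST subtree, hence it is the MST. A union bound over the $O(\log^2 n)$ subphases and calls keeps the failure probability inverse polynomial. I also need to check early phases with $n^* < n$: a large cluster never terminates, since it always advances to the next phase, and a small cluster that has no outgoing edge must be all of $V$.

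\textbf{Termination.} Let $i_0 = \lceil \log n \rceil$, so $n^* = 2^{i_0} \ge n$ in phase $i_0$. By \Cref{lem:conditionalProgress}, each subphase of this phase halves the number of fragments w.h.p. and finishes its merges within the $O(n^*)$ budget, so none are aborted. After $O(\log n)$ subphases a single fragment remains, and the algorithm terminates by the end of phase $i_0$. There are therefore $O(\log n)$ phases in total.

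\textbf{Cost.} In phase $i$, each subphase runs \findmin{}, costing $\tilde O(n^*)$ rounds and $\tilde O(n)$ messages by \Cref{thm:findmin}, followed by a root-ID exchange and a merge-or-reverse step bounded by $O(n^*)$ rounds. The message cost of this second step is the part I expect to need the most care. Each fragment broadcasts on its own tree and sends one message across its chosen edge, and a merge (or its reversal) touches each node of the component $O(1)$ times, so it uses $O(n)$ messages summed over all fragments. A component aborted at $O(n^*)$ rounds still has each node sending $O(1)$ messages, and reversal at most doubles this. Fragments larger than $n^*$ stay passive apart from replies, whose cost is charged to the senders as in \Cref{lem:firstPart}. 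A phase thus costs $\tilde O(n)$ messages and $\tilde O(2^i)$ rounds. Summing over $i \le i_0$ gives $\tilde O(n)$ messages, since there are $O(\log n)$ phases, and $\tilde O(\sum_{i \le i_0} 2^i) = \tilde O(n)$ rounds by the geometric series. The result is MST w.h.p. in $\tilde O(n)$ messages and rounds.
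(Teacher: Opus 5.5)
Your proposal is correct and follows essentially the same route as the paper. Termination by phase $\lceil \log n\rceil$ comes from \Cref{lem:conditionalProgress}, correctness from \Cref{lem:MSTFragment} and \Cref{thm:findmin} with a union bound over the $O(\log^2 n)$ subphases, and the complexity bounds from summing over phases.

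One point is in your favor. The paper charges each subphase $O(2^i\log^{7/2} n)$ for both rounds and messages. But \findmin{} sends $\tilde O(n)$ messages regardless of $n^*$; already in phase~1, all $n$ singletons run it. So your accounting of $\tilde O(n)$ messages per phase, summed over $O(\log n)$ phases, is the accurate one, and it costs only one extra logarithmic factor in the total.
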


\begin{proof}
    We first prove termination as well as the complexity upper bounds. To start with, each subphase takes $O(2^i \log^{7/2} n)$ rounds and messages, and thus each phase takes at most $O(2^i \log^{9/2} n)$ rounds and messages. Then, suppose that the algorithm has not terminated prior to phase $i^*=\lceil \log n \rceil$. In which case, during phase $i^*$, the estimate is good: that is, $n^* \geq n$. Then, by Lemma \ref{lem:conditionalProgress}, there can be at most $O(\log n)$ subphases before a single fragment remains, with high probability. By Theorem \ref{thm:findmin}, the fragment finds no outgoing edge. Moreover, $n^*$ is greater than the fragment size, thus all nodes in the fragment terminate in phase $i^*$. Adding up rounds and messages up until that point, we get that the round and message complexities are $O(\sum_{i=1}^{i^*} 2^i \log^{9/2} n) = O(n \log^{9/2} n)$. 

    Finally, we show correctness. By Lemma \ref{lem:MSTFragment}, each fragment is a subtree of the MST with high probability. Moreover, by Theorem \ref{thm:findmin}, any fragment with size smaller than $n^*$ that does not contain all nodes find an outgoing edge with high probability. On the other hand, fragments with a larger size go on to the next phase. In summary, a union bound over the fragments and the $O(\log^2 n)$ subphases (over all phases) implies that with high probability, nodes terminate if and only if their fragment spans the whole graph, or in other words, if and only if the fragment is the MST. 
\end{proof}

\section{Communication-Optimal Breadth First Search}
\label{sec:BFS}

Similarly to leader election and minimum spanning tree, any classical distributed algorithm computing a BFS must send $\Omega(m)$ messages \cite{Kutten_2015_JACM}. In this section, we show that in the quantum setting however, $\tilde{O}(\sqrt{mn})$ messages suffice for BFS. In other words, we show that there exists a major quantum communication advantage for this fundamental distributed computing problem. Moreover, this quantum communication advantage is tight (up to polylogarithmic factors) due to the lower bound shown in \Cref{sec:lb}.

The quantum BFS algorithm we provide is inspired by the classical algorithm from \cite{DPRS25}. First, in \Cref{subsec:groverSearch}, we describe the distributed Grover search framework. Then, in \Cref{subsec:primitivesBFS}, we show how quantum communication can be leveraged to solve low-depth BFS explorations and sparse neighborhood cover computations, in a message-efficient manner. Finally, in \Cref{subsec:BFSAlg}, we use sparse neighborhood covers and distributed Grover searches to obtain our communication-optimal quantum distributed algorithm for BFS.

\subsection{Distributed Grover Search}
\label{subsec:groverSearch}

We describe a quantum subroutine --- distributed Grover search --- run by some node $u$ in $G$. Let $X$ be a finite set, and $f\colon X\to \{0,1\}$ a function that indicates which elements of $X$ are \emph{marked}. Let $t_f=|f^{-1}(1)|$ and $\eps_f=t_f/|X|$ be, respectively, the number and fraction of marked elements in $X$. Then, the subroutine allows node $u$ to find a marked element $x\in X$ (i.e., such that $f(x)=1$), if any such element exists. 

Moreover, let there be a distributed algorithm $\checking$ that enables $u$ to check whether $x$ is marked, or in other words, to compute $f(x)$ for an input $x$ known by $u$. 
(For synchronization constraints, we require an upper bound on the number of rounds used by $\checking$, for any input $x\in X$.)
More formally, let $\ket{\psi}_G$  be the initial state of the network, which may result from some preliminary initialization steps, and we assume without loss of generality that it is a quantum state.
Given $x\in X$ in (the local memory of) $u$, $\checking$ enables $u$ to compute $f(x)$ in time $T_\Cc$ with message complexity $M_\Cc$ as follows, where the subscript $u$ means that the register is local to $u$, and $G$ that it is global to the network:
$$\checking: \ket{x,0}_u \ket{\psi}_G\mapsto \ket{x,f(x)}_u\ket{\phi_x}_G.$$

Then, a classical version of the subroutine can use the following strategy:
node $u$ samples $x\in X$ and then checks whether $f(x)=1$ using $\checking$. In which case, the success probability is $\eps_f$, which can be boosted to $(1-\alpha)$ with $\Theta( \log(\tfrac{1}{\alpha})/{\eps_f})$ iterations. 
When $\eps_f$ is not known but satisfies either $\eps_f=0$ or $\eps_f\geq \eps$, where $0<\eps\leq 1$ is some input parameter, then one can guarantee that after $\Theta({\log(\frac{1}{\alpha})}/{\eps})$ iterations (where each iteration runs $\checking$ once, thus takes $T_\Cc$ rounds and $M_\Cc$ messages), node $u$ can distinguish between the two cases with arbitrarily high success probability.

But quantumly, we can do quadratically better (in $\eps$). The following is a distributed adaptation of the Grover Search algorithm~\cite{Grover96}, in the case of an unknown number of pre-images of $f$~\cite[Lemma 2]{bbht98}, to the distributed setting. 
Such an adaptation was firstly done in~\cite{LM18} for the round complexity.
Here, we also consider the message complexity as in~\cite{DMP25}.

\begin{theorem}[Distributed Grover Search~\cite{DMP25}]\label{dgs}
Let $f,u,\eps_f$, $\checking$, $T_\Cc$ and $M_\Cc$ be defined as above.
Then, for any $\eps,\alpha>0$, there is a quantum distributed algorithm $\mathsf{GroverSearch}(\eps,\alpha)$
such that
\begin{enumerate}
\item $\mathsf{GroverSearch}(\eps,\alpha)$ runs in $O({\log(\frac{1}{\alpha})}\times\frac{T_\Cc}{\sqrt{\eps}})$ rounds with message complexity $O({\log(\frac{1}{\alpha})}\times\frac{M_\Cc}{\sqrt{\eps}})$;
\item $\mathsf{GroverSearch}(\eps,\alpha)$ returns to
$u$ some $x\in X$, which satisfies $f(x)=1$ with probability at least $1-\alpha$ when $\eps_f\geq \eps$.
\end{enumerate}
\end{theorem}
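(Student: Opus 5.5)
The plan is to take the centralized Grover search with an unknown number of solutions (the BBHT algorithm~\cite[Lemma 2]{bbht98}) and simulate it at node $u$, treating the distributed procedure $\checking$ as the oracle. Node $u$ keeps in its local memory a register ranging over $X$. It prepares the uniform superposition $\frac{1}{\sqrt{|X|}}\sum_{x\in X}\ket{x}$ locally, for free. All Grover iterations are then coordinated by $u$. The diffusion (reflection about the uniform superposition) acts only on $u$'s register, so it needs no communication. The only part that uses the network is the phase oracle $O_f:\ket{x}\mapsto(-1)^{f(x)}\ket{x}$.

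To implement $O_f$, $u$ runs $\checking$ coherently on its superposed register, obtaining $\sum_x \ket{x,f(x)}_u\ket{\phi_x}_G$. It applies a $Z$ gate to the answer qubit, then runs $\checking^{-1}$ to uncompute both the answer and the garbage $\ket{\phi_x}_G$, so that the network returns to $\ket{\psi}_G$. This makes the oracle exact: no entanglement with the network remains, so Grover's interference is not disturbed.

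Running $\checking^{-1}$ means executing the rounds of $\checking$ in reverse order with each local unitary inverted. In the quantum routing model, $\send$ is self-inverse up to the register conventions, so each reversed round costs exactly as many messages as the corresponding forward round. Hence one oracle call costs $O(T_\Cc)$ rounds and $O(M_\Cc)$ messages. Synchronization is handled by the assumed uniform upper bound on the rounds of $\checking$: every branch of the superposition is padded to the same number of rounds, so the network acts in lockstep. The message count is fine because message complexity is defined as the maximum over the configurations in superposition.

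Next I would invoke BBHT with the promise $\eps_f\ge\eps$. BBHT picks iteration counts at random from exponentially growing ranges, capped at $O(1/\sqrt{\eps})$ total iterations. With constant probability, measuring $u$'s register then yields a marked $x$. Because the number of iterations is chosen classically by $u$ and can be announced alongside the tokens, no quantum control over the iteration count is needed. To reach success probability $1-\alpha$, I repeat $O(\log(1/\alpha))$ times. After each run, $u$ verifies the measured candidate with one classical call to $\checking$ (cost $T_\Cc$, $M_\Cc$), and returns the first verified candidate, or an arbitrary $x$ if none verifies. Multiplying out gives $O(\log(1/\alpha)\,T_\Cc/\sqrt{\eps})$ rounds and $O(\log(1/\alpha)\,M_\Cc/\sqrt{\eps})$ messages, which is the claimed bound.

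The main obstacle I expect is showing that $\checking$ can be run coherently and exactly reversed in the network. This means checking that each node's local operations are unitary, and that the garbage $\ket{\phi_x}_G$, including intermediate message registers, is fully uncomputed. Otherwise decoherence would destroy the amplitude amplification. I would handle this by requiring $\checking$ to be given as a unitary network protocol, which is how it is specified above, and then applying the mirror-image schedule.
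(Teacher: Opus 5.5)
Your proposal is correct and follows the argument the paper relies on by citing \cite{LM18,DMP25}. Node $u$ runs the BBHT search for an unknown number of solutions, with the phase oracle realized as $\checking$, a local phase flip, and $\checking^{-1}$, and success is boosted to $1-\alpha$ with $O(\log(1/\alpha))$ verified repetitions. One detail to add: each classical verification call should be followed by $\checking^{-1}$ (after recording $f(x)$), so the network is back in $\ket{\psi}_G$ before the next repetition.
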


\subsection{Quantum Low-Depth BFS Explorations and Sparse Neighborhood Covers}
\label{subsec:primitivesBFS}

First, we show how quantum communication enables message-efficient distributed low-depth BFS explorations. 
More precisely, let $S \subseteq V$ be some source nodes, $d \geq 1$ be the exploration depth, and $k$ be the congestion, such that for any node $v \in V$, there are at most $k$ BFS roots within $d$ hops of $v$.\footnote{We assume throughout this section that all nodes know $n$, which they use to synchronize their Grover searches. This assumption can be lifted by using the algorithm from \Cref{sec:LE} at the cost of an extra $\tilde{O}(n)$ rounds and messages; indeed, given a spanning tree, a simple convergecast and broadcast suffice to compute the network size $n$.} Then, to solve the distributed low-depth BFS exploration task consists in computing, for each node $s \in S$, a BFS tree up to depth $d$ rooted in $s$. Upon termination, each node should know which such BFS tree they are part of, and if so, which incident edges are part of that tree and whether they lead to their parent or children (within that tree).

We present a quantum distributed algorithm for low-depth BFS explorations working in phases. These phases each grow all BFS trees by exactly one layer, but we do not have nodes in the ``outermost'' layer, or \emph{frontier}, of the BFS tree contact the nodes that make up the next layer. In other words, we do not grow the BFS tree ``inside out''. Instead, we grow the BFS tree ``outside in'', that is, we have the outside nodes search for any neighbor in the BFS tree and upon finding one, join the next layer. Intuitively, the ``inside out'' approach may lead to multiple nodes within a BFS contacting the same outside node, and thus to a communication inefficiency. On the other hand, the ``outside in'' approach enables us to grow the layers fast (i.e., in a single Grover search) while also using only $\tilde{O}(k\sqrt{n})$ messages per layer. 
 
The algorithm runs for $d$ phases; these phases are synchronized across all nodes, and take $\tilde{\Theta}(k\sqrt{n})$ rounds each.
Each such phase $1 \leq i \leq d$ is further subdivided into $k$ subphases $\tilde{\Theta}(\sqrt{n})$ rounds each. During each subphase $1 \leq j \leq k$ (of some phase $i$), each node $v$ searches its neighborhood (using a distributed Grover search) for any node that is part of a BFS tree $v$ has not joined prior to the subphase. If $v$ finds any such node $u$, then it joins the corresponding BFS tree with $u$ as its parent. (Note that $v$ can be part of multiple BFS trees.)

For more details regarding the distributed Grover search, $v$ runs $\mathsf{GroverSearch}$ with $\eps=1/ \deg(v)$, $\alpha = \Theta(1/n^c)$ for some constant $c$, and the function $f_v: N(V) \rightarrow \{0,1\}$ that assigns $f(w) = 1$ to any node that is in some BFS tree which $v$ has not joined prior to the subphase, and $f(w) = 0$ to any other node in $N(v)$. Note that the corresponding checking function is trivial, with $v$ sending a message to $w$ and $w$ replying, and uses $2$ rounds and messages. 

\begin{theorem}
\label{thm:SmallBFSExploration}
    Let integers $d, k$ be respectively the distance and congestion parameters, and $S \subseteq V$ the BFS roots, such that for any node $v \in V$, there are at most $k$ BFS roots within $d$ hops of $v$. 
    Then, there exists a quantum distributed algorithm that (with high probability) computes $d$-depth BFS trees rooted in all nodes of $S$, in $\tilde{O}(d k \cdot \sqrt{n})$ rounds and $\tilde{O}(dk \cdot \sqrt{mn})$ messages.
\end{theorem}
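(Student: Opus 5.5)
We argue by induction on the phases that, with high probability, after phase $i$ every node $v$ has joined exactly the BFS trees (rooted in $S$) whose root lies at distance at most $i$ from $v$. It joins each such tree with a parent $u$ at distance exactly $i-1$ from that root. This gives correctness: each tree is a valid BFS tree up to depth $d$, and every node knows its parent in each tree it belongs to. It also knows its children, since a child $v$ contacted its parent $u$ during the checking step; we let $u$ record that port, which costs no extra message. We then bound the rounds and messages by summing the cost of $\mathsf{GroverSearch}$ (Theorem~\ref{dgs}) over nodes, subphases and phases.

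\textbf{Correctness.} Fix phase $i$ and a node $v$ that has not yet joined some tree rooted at $s$ with $\mathrm{dist}(s,v)=i$. We must check that $v$ can only join trees at the correct layer. The checking function $f_v(w)$ should only report trees that $w$ had already joined \emph{before phase $i$}, not trees $w$ joined during an earlier subphase of the same phase. Otherwise layers could be skipped. So each node keeps a snapshot of its memberships taken at the start of the phase, and answers checking queries from that snapshot. This makes the $d$ phases grow exactly one layer each.

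Within phase $i$, let $k_v\le k$ be the number of trees $v$ should join in this phase. These roots lie within $i\le d$ hops of $v$, which is why $k_v\le k$. In each subphase, if some neighbor is in a not-yet-joined tree, then the fraction of marked neighbors is at least $1/\deg(v)=\eps$. Hence, by Theorem~\ref{dgs}, with probability $1-\alpha$, $v$ finds a marked neighbor $w$ and joins one new tree. For determinism, $v$ joins the tree with minimum root ID among those $w$ reports. After $k$ subphases, $v$ has joined all $k_v$ trees. A union bound over $n$ nodes, $dk$ subphases and $\alpha=1/n^c$ gives high probability.

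\textbf{Complexity.} Checking uses $T_\Cc=M_\Cc=O(1)$. Each Grover search by $v$ in one subphase therefore costs $O(\log n\sqrt{\deg(v)})$ messages and $O(\log n\sqrt{\deg(v)})=\tilde O(\sqrt n)$ rounds. This round bound is uniform, which is why subphases of $\tilde\Theta(\sqrt n)$ rounds suffice for synchronization. Summing over nodes and applying Cauchy--Schwarz gives
\[
\sum_v\sqrt{\deg(v)}\le\sqrt{n\cdot 2m},
\]
so each subphase sends $\tilde O(\sqrt{mn})$ messages. Over $dk$ subphases this totals $\tilde O(dk\sqrt{mn})$ messages and $\tilde O(dk\sqrt n)$ rounds.

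\textbf{Main obstacle.} A neighbor $w$ may belong to up to $k$ trees, so its answer to a checking query can exceed $O(\log n)$ qubits. I would handle this in one of two ways. First, $v$ can send $w$ the ID range of the trees it still needs, and $w$ returns one qualifying root ID. Second, alternatively, $w$ can stream its list over $O(k)$ rounds. The second option only costs an extra factor of $k$, which is absorbed if $k$ is polylogarithmic, as it is in the sparse-cover application. I expect the first option to give the stated bound cleanly. The answer must also be computed reversibly, so that the checking unitary uncomputes properly inside Grover search; this should be routine.
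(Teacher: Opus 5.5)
Your core argument is the paper's own: induction over the $d$ phases, one $\mathsf{GroverSearch}$ per subphase with $\eps=1/\deg(v)$ and $\alpha=n^{-c}$, at most $k$ trees for $v$ to join per phase, and a Cauchy--Schwarz bound per subphase multiplied by $dk$ subphases. Your snapshot rule fixes a case the paper's proof passes over. The paper's $f_v$ marks any neighbor currently in a tree $v$ has not joined, including a neighbor that joined earlier in the \emph{same} phase, so $v$ could adopt a same-layer parent and land at the wrong depth. Likewise, the paper simply charges each check as two messages and never discusses message size.

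Two of your additions do not work as written.

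First, $u$ cannot record its child's port during the check. Inside $\mathsf{GroverSearch}$ the check runs in superposition over all neighbors and must leave no trace, as you note yourself when asking for reversible computation. Moreover, $u$ never learns which neighbor $v$ finally measured. Instead, $v$ should send one message to its chosen parent whenever it joins a tree. That is at most $k$ messages per node, so $O(kn)$ in total, which is within the bound since $m\ge n-1$.

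Second, your first option for the message-size issue is not implementable as stated. The trees $v$ still needs do not form an ID interval, and $w$ cannot tell which of its own trees $v$ already belongs to. Your second option, streaming, gives only $\tilde O(dk^2\sqrt{mn})$. That suffices for the cover application, where $k$ is polylogarithmic, but not for the general statement. Here is a version that works for every $k$ with $O(\log n)$-qubit messages.
\begin{itemize}
\item In phase $i$, $w$ answers only about trees in which its own depth is exactly $i-1$. These memberships are all fixed by the end of phase $i-1$, so this also makes the snapshot unnecessary.
\item Node $v$ keeps a threshold $t$. In each subphase it finds the smallest such root ID above $t$ among its neighbors, using quantum minimum finding or a binary search over ID ranges whose checks return one bit.
\item Node $v$ then joins that tree, with the corresponding neighbor as parent, unless it already belongs to it, and sets $t$ to that ID.
\end{itemize}
Every candidate root lies within $i\le d$ hops of $v$, so $k$ subphases per phase suffice. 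Each subphase costs $\tilde O(\sqrt{\deg(v)})$ messages, which recovers the stated $\tilde O(dk\sqrt{mn})$.
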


\begin{proof}
We prove the correctness by induction on phase $i \in \{0,\ldots,d\}$ (where the end of phase $0$ simply means the initial state) that by the end of phase $i$, all nodes that are some $j \leq i$ hops away from some BFS root $r \in S$ have joined the BFS tree of $r$ in layer $j$. The base case is trivial, thus we now consider that the induction hypothesis holds for some $i \in \{0,\ldots,d-1\}$. Consider phase $i+1$ as well as some node $v \in V$ such that $v$ is exactly $i+1$ hops away from some BFS root $r \in S$. Then, $v$ contains at least one neighbor exactly $i$ hops away from $r$, but no neighbor strictly less than $i$ hops away from $r$. By the induction hypothesis, by the end of phase $i$, any neighbor of $v$ exactly $i$ hops away from $r$ have joined (layer $i$ of) the BFS rooted in $r$. Hence, by Theorem \ref{dgs} and the fact that there are at most $k$ BFS roots within $d$ hops of $v$ (by definition of $k$ and $d$), with high probability, $v$ finds such a neighbor within one of the $k$ subphases of phase $i+1$ and joins the BFS tree rooted in $r$ in layer $i+1$. The induction step follows.

As for the round complexity, it follows from the algorithm's description. Finally, we prove the message complexity upper bound. Note that for each subphase, each node $v \in V$ sends $O(\sqrt{\deg(v)} \cdot \log n)$ messages. Hence, each subphase sends overall $O(\sum_{v \in V} \sqrt{\deg(v)} \cdot \log n) = O(\sqrt{m n} \log n)$ messages (by the Cauchy-Schwarz inequality). Since there are at most $dk$ subphases, the claimed message complexity upper bound follows.
\end{proof}

The above quantum distributed primitive for low-depth BFS explorations uses only $o(m)$ quantum communication for a wide range of distance and congestion parameters. However, setting $d=D$ and $S = \{r\}$ to obtain a BFS rooted in some node $r$ leads to a non-optimal message complexity of $\tilde{O}(D \cdot \sqrt{mn})$. 
Instead, our communication-optimal BFS algorithm (see \Cref{subsec:BFSAlg}) hinges on the following primitive: sparse neighborhood cover construction. 
We start by defining sparse neighborhood covers below.  

\begin{definition}\label{def:cover}
A \emph{sparse $(\kappa,W)$-neighborhood cover} of a graph is a collection $\cC$ of trees,
each called a {\em cluster}, with the following properties.
\begin{itemize}
\item \emph{(Depth property)} For each tree $\tau \in \cC$, depth$(\tau) = O(W \cdot \kappa)$.
\item \emph{(Sparsity property)} Each vertex $v$ of the graph appears in $\tilde O(\kappa \cdot n^{1/\kappa})$ different trees $\tau \in \cC$.
\item \emph{(Neighborhood property)} For each vertex $v$  of the graph there exists a tree $\tau \in \cC$ that contains the entire $W$-neighborhood of vertex $v$.
\end{itemize}
\end{definition}

Next, we briefly describe the distributed (randomized) neighborhood cover construction due to Elkin~\cite{Elkin06}. Although this algorithm uses $\Omega(m)$ messages, we show below --- after the classical description --- how it can be made more message-efficient in the quantum setting.
The algorithm runs in $\kappa$ phases, each of $O(\kappa \cdot n^{1/\kappa} \log n \cdot \kappa W)$ rounds. Initially, nodes are uncovered. Then, in each phase $i \in [1,\kappa]$, a well-chosen number of uncovered vertices initiate a BFS exploration of depth $2((\kappa-i)+1) W$, each such exploration forming a cluster of the cover. (The property of the BFS exploration primitive ensures the depth property of the cover, while the judicious choice of the number of sources per phase ensures the sparsity property---see Lemma~\ref{lem:smallMaxDegreeOfCover} below.) Nodes that join such a cluster, within at most $2(\kappa-i) W$ hops of the root, become covered; indeed, their $W$-hop neighborhood is contained in that cluster, and hence in the cover, thus satisfying its neighborhood property.

We obtain a quantum distributed algorithm for constructing sparse neighborhood covers --- which we call \quantumCover{} --- by modifying the above (classical) algorithm as follows: we replace the classical (but message-inefficient in the quantum setting) BFS exploration primitive with a quantum (message-efficient) BFS exploration primitive. It follows that most properties of distributed (randomized) cover construction due to Elkin~\cite{Elkin06} directly translate over to the modified version. In particular, the correctness of \quantumCover{} follows directly (but with an additive polynomially small failure probability). Additionally, it also holds for \quantumCover{} that with high probability, the maximum degree of the constructed cover is small (see Lemma \ref{lem:smallMaxDegreeOfCover}); in other words, each vertex appears in a small number of clusters of the cover, in fact in $O(\kappa \cdot n^{1/\kappa} \log n)$ of them. 

\begin{lemma}[Corollary A.6 in \cite{Elkin06}]
\label{lem:smallMaxDegreeOfCover}
    The sparse neighborhood cover $\cC$ computed by \quantumCover{} satisfies the following: with high probability, for every vertex $v \in V$,
    $|\{C \in \cC \mid v \in C\}| = O(\kappa \cdot n^{1/\kappa} \log n)$.
\end{lemma}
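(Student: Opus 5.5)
The plan is to transfer Elkin's argument (Corollary A.6 in \cite{Elkin06}) directly. The reason this works is that the overlap count of the cover is set entirely by \emph{which} vertices become cluster centers in each phase and \emph{how far} their explorations reach. It does not depend on the mechanism used to carry out the explorations. So the first step is a coupling. Run the classical construction and \quantumCover{} with the same random choices of sources, and condition on the event that every quantum low-depth BFS exploration (Theorem~\ref{thm:SmallBFSExploration}) succeeds. On that event, each cluster built by \quantumCover{} is exactly the depth-$2((\kappa-i)+1)W$ ball around its source, which is also the cluster the classical algorithm builds. The same vertices are marked covered in both runs, so the two covers coincide. Each exploration fails with probability at most $1/\mathrm{poly}(n)$, and there are at most $\mathrm{poly}(n)$ of them. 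A union bound therefore lets me reduce the statement to the classical bound, at the cost of an additive polynomially small failure probability.

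For completeness I would then re-derive the classical bound. Fix a vertex $v$ and a phase $i$. The clusters of phase $i$ that contain $v$ are exactly those whose source is an uncovered vertex within distance $2((\kappa-i)+1)W$ of $v$. In Elkin's construction, each uncovered vertex becomes a source in phase $i$ independently with probability roughly $n^{i/\kappa}/n$, scaled by a $\log n$ factor. The argument has two parts.
\begin{itemize}
\item First I would show a covering invariant. Suppose $v$ remains uncovered after phase $i$. Then with high probability, the number of uncovered vertices within the relevant radius of $v$ at phase $i+1$ is at most about $n^{1-i/\kappa}$. Otherwise, some sampled source would have landed within $2(\kappa-i)W$ of $v$, and $v$ would have been covered. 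This step uses the geometric shrinking of the radii, which guarantees that the ball relevant at phase $i+1$ sits inside the covering radius of phase $i$.
\item Given that invariant, the number of phase-$i$ sources within distance $2((\kappa-i)+1)W$ of $v$ is a sum of independent indicators with mean $O(n^{1/\kappa}\log n)$. A Chernoff bound shows it is $O(n^{1/\kappa}\log n)$ with high probability.
\end{itemize}

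Summing over the $\kappa$ phases, and taking a union bound over all $v$ and all phases, gives $O(\kappa\, n^{1/\kappa}\log n)$ clusters containing $v$.

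I expect the main obstacle to be the invariant in the first bullet. The radii must nest exactly so that "many uncovered vertices nearby" forces coverage of $v$ in the previous phase. The sampling probabilities must also be chosen so that this failure event is polynomially unlikely. Since that invariant is precisely Elkin's lemma and our modification does not touch it, I would cite \cite{Elkin06} for it rather than reprove it. The new content of the proof is the coupling and union-bound argument in the first step.
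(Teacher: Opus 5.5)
Your route is the paper's route. The paper gives no argument beyond citing Elkin's Corollary~A.6 and observing that \quantumCover{} differs from Elkin's construction only in the exploration primitive; your coupling just makes that transfer explicit. However, the step you single out as the new content has a hole. Theorem~\ref{thm:SmallBFSExploration} does not give an unconditional $1/\mathrm{poly}(n)$ failure probability. It guarantees success only under its congestion hypothesis: every node has at most $k$ roots within $d$ hops, where $k$ is the number of subphases the algorithm is run with. For the phase-$i$ call of \quantumCover{}, with depth $2((\kappa-i)+1)W$, that hypothesis says every vertex lies in at most $k$ phase-$i$ clusters. That is precisely the per-phase form of the bound you are proving. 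If the hypothesis fails, a node joins at most $k$ new trees per layer, so clusters can be truncated, the covered sets diverge, and the two runs decouple. Union-bounding over "each exploration fails with probability $1/\mathrm{poly}(n)$" therefore assumes the conclusion.

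The repair is an induction over phases. Run \quantumCover{} with $k=\Theta(n^{1/\kappa}\log n)$, Elkin's per-phase bound (or the total bound, as the paper does). Suppose the two runs coincide through phase $i-1$. Then the uncovered sets coincide, so with shared coins the phase-$i$ sources coincide. Elkin's analysis of the classical process gives, w.h.p., at most $k$ phase-$i$ sources within the phase-$i$ depth of every vertex. On that event Theorem~\ref{thm:SmallBFSExploration} applies, so the phase-$i$ quantum explorations succeed w.h.p. and the runs coincide through phase $i$. A union bound over the $\kappa$ phases finishes the argument.

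Separately, your ``for completeness'' sketch is not a proof and should not be presented as one. The invariant in your first bullet is stated only for a vertex that is still uncovered. In your own description, though, explorations run over the whole graph, so vertices covered in earlier phases still lie in later clusters. For such vertices the ``otherwise $v$ would have been covered'' step is unavailable. Transporting it through a nearby uncovered vertex would need the previous covering radius to be about twice the current cluster radius. Here the radii shrink by $2W$ per phase, which is arithmetic, not the geometric shrinking you invoke. So the count for all $v$ rests entirely on the citation, exactly as in the paper.
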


\begin{lemma}
    There exists a quantum distributed algorithm that constructs (with high probability) a $(\kappa, W)$-sparse neighborhood cover in time $O(\kappa^2 W n^{1/\kappa+1/2})$ and using $O(\kappa^2 W n^{1/\kappa} \sqrt{mn})$ messages.
\end{lemma}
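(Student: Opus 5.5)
The plan is to analyze \quantumCover{} phase by phase, using Theorem~\ref{thm:SmallBFSExploration} as the only communication-heavy subroutine and Elkin's analysis~\cite{Elkin06} for correctness and sparsity. Correctness holds because the quantum low-depth BFS exploration outputs, with high probability, exactly the same depth-$d$ BFS trees as the classical exploration it replaces. Elkin's argument for the depth, neighborhood and sparsity properties therefore carries over verbatim, with a union bound over the polynomially many Grover searches adding only an inverse-polynomial failure probability. Lemma~\ref{lem:smallMaxDegreeOfCover} then gives sparsity with high probability.

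For complexity, fix a phase $i\in[1,\kappa]$. Its explorations have depth $d_i = 2(\kappa-i+1)W = O(\kappa W)$. The congestion parameter $k_i$ must bound, for every $v$, the number of phase-$i$ roots within $d_i$ hops of $v$. Each such root's exploration reaches $v$, so $v$ lies in the corresponding cluster. By Lemma~\ref{lem:smallMaxDegreeOfCover}, or more precisely the per-phase bound in Elkin's analysis from which it is derived, $k_i = O(n^{1/\kappa}\log n)$ with high probability. We may even take the overall bound $O(\kappa n^{1/\kappa}\log n)$ and absorb the extra factor into the $\tilde{O}$ if needed.

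Plugging into Theorem~\ref{thm:SmallBFSExploration}, phase $i$ costs $\tilde{O}(d_i k_i\sqrt n)=\tilde{O}(\kappa W n^{1/\kappa}\sqrt n)$ rounds and $\tilde{O}(\kappa W n^{1/\kappa}\sqrt{mn})$ messages. Summing over the $\kappa$ phases yields $O(\kappa^2 W n^{1/\kappa+1/2})$ rounds and $O(\kappa^2 W n^{1/\kappa}\sqrt{mn})$ messages, up to the polylogarithmic factors hidden as in the rest of the section. Nodes must also learn whether they are covered, but this is local: each node already knows its depth in every tree it joined. This adds no messages beyond those of the explorations.

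The main obstacle is justifying the congestion bound $k_i$ before the explorations run, since the algorithm needs $k$ to fix its subphase schedule. I would address this in two ways. First, the random choice of roots in Elkin's construction (sampling with probability about $n^{i/\kappa}/n$ among uncovered vertices) already gives a high-probability bound on the number of roots within $d_i$ hops of any vertex. Second, I would use that as the fixed $k$, with nodes knowing $n$. If the bound fails, which happens with low probability, we only lose correctness, which is absorbed into the failure probability.
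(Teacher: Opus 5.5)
Your proposal is correct and follows the paper's route: correctness is inherited from Elkin's construction, and the cost comes from plugging depth $O(\kappa W)$ and the congestion bound of Lemma~\ref{lem:smallMaxDegreeOfCover} into Theorem~\ref{thm:SmallBFSExploration}, with polylogarithmic factors hidden just as in the paper's own $\tilde{O}$ conclusion. The paper does this in one line, using $d=2\kappa W$ and the total bound $k=O(\kappa n^{1/\kappa}\log n)$; your phase-by-phase accounting with a per-phase congestion bound is the more careful version, since the $\kappa$ phases run sequentially and reusing the total $k$ in every phase would cost an extra factor of $\kappa$.
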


\begin{proof}
    Note that BFS explorations are done up to only $d= 2\kappa W$ hops in \quantumCover{}. Moreover, by Lemma \ref{lem:smallMaxDegreeOfCover}, each node is contained in at most $k = O(\kappa \cdot n^{1/\kappa} \log n)$ clusters of the cover (over all phases). 
    Hence, by Theorem \ref{thm:SmallBFSExploration}, the round and message complexities of \quantumCover{} are respectively $\tilde{O}(\kappa^2 W n^{1/\kappa+1/2})$ and $\tilde{O}(\kappa^2 W n^{1/\kappa} \sqrt{mn})$.
\end{proof}

\subsection{Quantum Distributed Algorithm for BFS Tree Computation}
\label{subsec:BFSAlg}

Finally, we give our communication-optimal quantum distributed algorithm for BFS. For that algorithm, similarly to the BFS explorations in \Cref{subsec:primitivesBFS}, we grow the BFS tree via the ``outside in'' approach.
However, a major difference with the ``naive'' BFS explorations from \Cref{subsec:primitivesBFS} lies in our use of a sparse neighborhood cover. Using the cover's tree, we can inform nodes as to whether they are close to the frontier, and only at this point do these nodes execute a distributed Grover search. This is crucial for achieving $\tilde{O}(\sqrt{mn})$ message complexity.

We now describe our quantum BFS algorithm, with  some root node $r \in V$ as input.
To start with, as a preprocessing step, we compute a $(\Theta(\log n), 1)$-sparse neighborhood cover.
After which, the algorithm runs for $O(D)$ phases, where each phase $i \leq D$ growing the BFS tree by exactly one layer. (Nothing happens during phases $i > D$, but the algorithm may not have terminated yet.) 
Each phase is further separated into two subphases, the first taking $O(\log^3 n)$ rounds and the second $O(\sqrt{n} \log n)$ rounds. Moreover, every phase $i=2^j$ for some integer $j \geq 1$ contains an additional subphase, which takes $O(2^j)$ rounds. During that subphase, the BFS tree convergecasts to detect whether any node joined in the previous phase, followed by a broadcast over the tree to inform all nodes of the detection's outcome. If no node joined during the previous phase, then all nodes terminate upon receiving that broadcast. 

\textbf{In the first subphase}, nodes compute whether they are close --- i.e., at most $O(\log n)$ hops away --- to the frontier of the BFS tree
via the sparse neighborhood covers. More precisely, any frontier node --- i.e., having joined the BFS tree in the previous phase --- \emph{pings} any tree of the cover it belongs to --- i.e., the node sends a message to its parent, which in turn sends the message to its parent, and so on until it reaches the root. Any pinged cover tree, in turn, broadcasts a message to inform all nodes within that tree that they are close to the frontier. (Note that any node not in the BFS but with a neighbor in the frontier must get informed, due to the neighborhood property of the sparse cover.)

\textbf{In the second subphase}, any informed node $v$ that has not already joined the BFS runs a distributed Grover search on its neighborhood --- just as in the low-depth BFS explorations --- to find whether it has any neighbor in the frontier (i.e, the ``outermost'' layer of the BFS tree). If it does, it joins the BFS with that node as its parent and will be part of the frontier for the next phase. For more details regarding the distributed Grover search, $v$ runs $\mathsf{GroverSearch}$ with $\eps=1/ \deg(v)$, $\alpha = \Theta(1/n^c)$ for some constant $c$ and the function $f_v: N(V) \rightarrow \{0,1\}$ that assigns $f(w) = 1$ to any node that is in the BFS tree. (Once again, the corresponding checking function is trivial, and uses $O(1)$ rounds and messages.) 

\begin{theorem}
    There exists a quantum distributed algorithm that takes $\tilde{O}(\sqrt{mn})$ messages to compute a BFS tree w.h.p. Its round complexity is upper bounded by $\tilde{O}(D\sqrt{n})$ when $n$ is known, and $\tilde{O}(D\sqrt{n}+n)$ otherwise. 
\end{theorem}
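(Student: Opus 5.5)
We prove correctness and complexity separately, organizing the analysis per phase and using the $(\Theta(\log n),1)$-sparse neighborhood cover from \Cref{subsec:primitivesBFS}. With $\kappa=\Theta(\log n)$ and $W=1$, the preceding lemma gives the preprocessing cost: $\tilde{O}(\sqrt{mn})$ messages and $\tilde{O}(\sqrt n)$ rounds, since $n^{1/\kappa}=O(1)$. By \Cref{lem:smallMaxDegreeOfCover}, each vertex lies in $O(\log^2 n)$ cover trees, each of depth $O(\log n)$.

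\emph{Correctness.} We argue by induction on the phase $i$ that, at the end of phase $i$, exactly the nodes at distance $\le i$ from $r$ are in the BFS tree, each at layer equal to its distance. The frontier at phase $i+1$ is layer $i$. A node $v$ at distance $i+1$ has a frontier neighbor $u$. By the neighborhood property, some cover tree contains $N[u]$, which includes both $u$ and $v$. That tree is pinged by $u$, so $v$ gets informed in the first subphase. In the second subphase, \Cref{dgs} with $\eps=1/\deg(v)$ finds a frontier neighbor w.h.p. Conversely, nodes at distance $>i+1$ have no frontier neighbor, and nodes already in the tree do not search, so no node joins at a wrong layer. A union bound over $n$ nodes and $O(D)$ phases gives the result w.h.p. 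For termination, the doubling check at phases $i=2^j$ detects an empty new layer. This happens by phase $O(D)$, after $O(\sum_j 2^j)=O(D)$ extra rounds.

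\emph{Message complexity.} This is the main obstacle: we must avoid paying $\sqrt{\deg(v)}$ in every one of the $D$ phases. The key claim is that each node $v$ is informed in only $O(\log n)$ phases in total, w.h.p. Suppose $v$ is informed in phase $i$. Then it shares a cover tree of depth $O(\log n)$ with some layer-$(i-1)$ node, so $|\mathrm{dist}(r,v)-(i-1)|=O(\log n)$. Hence $v$ can be informed only in the $O(\log n)$ phases $i$ in this window, and $v$ runs Grover searches in only $O(\log n)$ phases. By \Cref{dgs} and Cauchy--Schwarz, the second subphases therefore cost $O(\sum_v \sqrt{\deg(v)}\log^2 n)=\tilde{O}(\sqrt{mn})$ messages overall.

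For the first subphases, we charge each ping and broadcast to frontier membership. A node is a frontier node in exactly one phase. At that phase it pings $O(\log^2 n)$ trees, each along a path of length $O(\log n)$. Each tree broadcast costs the tree size, and a tree is pinged only in phases within an $O(\log n)$ window of distances, because its depth bounds the spread of distances among its members. Moreover, the trees' total size is $\sum_v(\#\text{trees containing } v)=O(n\log^2 n)$. Together these give $\tilde{O}(n)$ messages. The convergecasts at phases $2^j$ cost $O(n)$ each, for $\tilde{O}(n)$ in total. Thus the total is $\tilde{O}(\sqrt{mn}+n)=\tilde{O}(\sqrt{mn})$, since $m\ge n-1$.

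\emph{Round complexity.} There are $O(D)$ phases of $O(\log^3 n+\sqrt n\log n)$ rounds each, plus $O(D)$ rounds for the doubling checks, plus $\tilde{O}(\sqrt n)$ rounds for the cover, giving $\tilde{O}(D\sqrt n)$. This requires knowing $n$ for the Grover synchronization and for choosing $\kappa$. If $n$ is unknown, we first run the algorithm of \Cref{thm:MST} to build a spanning tree and count the nodes. This adds $\tilde{O}(n)$ rounds and messages, giving $\tilde{O}(D\sqrt n+n)$ rounds while keeping the message bound at $\tilde{O}(\sqrt{mn})$.
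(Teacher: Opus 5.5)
Your proof is correct and follows essentially the same route as the paper: induction over phases using the cover's neighborhood property to guarantee that $v$ is informed, the $O(\log n)$-window argument bounding both each node's Grover searches and each cover tree's activations, Cauchy--Schwarz for the $\tilde{O}(\sqrt{mn})$ bound, and the doubling check plus the MST-based computation of $n$ for the round bounds. Your small variations are fine and make the accounting more complete than the paper's. These are applying the neighborhood property to $N[u]$ rather than $N[v]$, proving the ``exactly the nodes at distance $\le i$'' invariant, and explicitly charging the cover-construction cost and the $O(n\log^3 n)$ cover-tree traffic.
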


\begin{proof}
    We prove the correctness by induction on phase $i \in \{0,\ldots,D\}$ (where the end of phase $0$ simply means the initial state) that by the end of phase $i$, all nodes at distance $j \leq i$ from the BFS root $r$ has joined layer $j$ of the BFS tree. The base case is trivial, thus we now consider that the induction hypothesis holds for some $i \in \{0,\ldots,d-1\}$. Consider phase $i+1$ as well as some node $v \in V$ such that $v$ is exactly $i+1$ hops away from $r$. Then, $v$ contains at least one neighbor exactly $i$ hops away from $r$, but no neighbor strictly less than $i$ hops away from $r$. By the induction hypothesis, by the end of phase $i$, any neighbor of $v$ exactly $i$ hops away from $r$ have joined (layer $i$ of) the BFS tree. Consider any such neighbor $u$. Then, in phase $i+1$, $u$ pings the roots of any of the cover's trees that node $u$ belongs to, and these roots in turn inform all nodes in their tree. By the depth and sparsity properties for $\kappa = \Theta(\log n)$ and $W = 1$, the $O(\log^3 n)$ rounds suffice for the ping and subsequent broadcast to successfully terminate (despite the CONGEST communication). Moreover, by the neighborhood property, the neighborhood of $v$ is entirely contained within one tree of the cover, and thus $v$ is informed following $u$'s ping during (the first subphase of) phase $i+1$. Following which, $v$ runs a distributed Grover search. By Theorem \ref{dgs}, with high probability, $v$ finds $u$ or another neighbor in the BFS tree and joins the BFS tree in layer $i+1$. The induction step follows. 

    As for the round complexity, first note that the algorithm runs for at most $O(D)$ phases. Indeed, there exists a phase $i \leq 2(D+1)$ with a third subphase during which the convergecast detects that no node joined the BFS tree in the previous phase. Additionally, each phase takes $\tilde{O}(\sqrt{n})$ rounds by the algorithm description, except for phases $i=2^j$ for some integer $j \geq 1$, which take an extra $O(2^j)$ rounds. Adding up the runtime of all these phases together, we obtain a runtime of $\tilde{O}(D \sqrt{n})$, assuming that $n$ is known. Otherwise, if $n$ is not initially known, then computing a spanning tree as well as its size $n$ takes an extra $\tilde{O}(n)$ rounds, by Theorem \ref{thm:MST}. 
    
    Finally, we prove the message complexity upper bound. First, note that each node $v$ executes at most $O(\log n)$ distributed Grover searches. Indeed, for $v$ to execute a distributed Grover search in some phase $i$ , there must exist a cover tree and a node within that cover tree that joined the BFS tree in phase $i-1$. Since the cover's trees all have depth $O(\log n)$, the nodes within all cover trees that $v$ belongs to are within $O(\log n)$ layers of $v$ in the BFS. Consequently, since each distributed Grover search run by node $v \in V$ sends $O(\sqrt{\deg(v)} \cdot \log n)$ messages, by Theorem \ref{dgs}, then overall, nodes send $O(\sum_{v \in V} \sqrt{\deg(v)} \cdot \log^2 n) = O(\sqrt{m n} \log^2 n)$ messages (by the Cauchy-Schwarz inequality) over all distributed Grover searches. It remains to bound the communication induced by the convergecast and broadcast over the BFS tree, as well as the communication over the cover trees. For the first, note that the convergecast and broadcast over the BFS tree happen at most $O(\log D) = O(\log n)$ times, thus the corresponding communication is at most $O(n \log n)$. As for the communication over the cover trees, note that the pings and subsequent broadcasts can only happen during $O(\log n)$ different phases per cover tree, since nodes within a given cover tree may join the BFS tree within at most $O(\log n)$ different phases. As a result, the communication over the cover trees amounts to $O(n \log n)$ messages in total. Added together, we obtain the claimed message complexity upper bound.
\end{proof}

\section{Quantum Message Lower Bounds}
\label{sec:lb}

We lift quantum query lower bounds on (unweighted) $n$-vertex graphs --- in the centralized setting --- to lower bounds in our quantum distributed setting. By doing so, we show the first non-trivial communication lower bounds (see Theorems \ref{lb:BFS} and \ref{lb:LE}) for quantum distributed computing.\footnote{Note that in~\cite{legall2025}, only a classical communication lower bound is provided, using a different reduction and other techniques.} More concretely, we show an $\Omega(\sqrt{mn})$ quantum communication lower bound for BFS and Single Source Shortest Paths (SSSP), as well as an $\Omega(n)$ quantum communication lower bound for (implicit) leader election, ST and MST. 

In \Cref{subsec:LBTechniques}, we provide some preliminaries, including a straightforward reduction from query complexity (in the sequential model) to quantum communication complexity (in the distributed model). Then, in \Cref{subsec:BFSLB}, we show an $\Omega(\sqrt{mn})$ query complexity lower bound for BFS, and leveraging that, $\Omega(\sqrt{mn})$ quantum communication lower bounds for BFS and SSSP. Finally, in \Cref{subsec:LELB}, we show an $\Omega(n)$ query complexity lower bound for connectivity in low diameter graphs (i.e., graphs of diameter 3 and above), and leveraging that, $\Omega(n)$ quantum communication lower bounds for (implicit) leader election, ST and MST.

\subsection{Model and Lower Bound Techniques}
\label{subsec:LBTechniques}

The distributed setting considered throughout this section is defined in \Cref{sec:qmodel}. For the centralized setting, we consider \new{a slight variation of} the adjacency array model~\cite{DHHM06}, in which the vertices' degrees $d_1,\ldots,d_n$ are initially known and for every vertex $v \in \{1,\ldots,n\}$, we can query its neighbors through an array $f_v : [d_v] \rightarrow [n]$.
We assume that the arrays represent a simple, undirected and unweighted graph. In this adjacency array model, the complexity measure of interest is the quantum query complexity: that is, the number of quantum queries made over the adjacency arrays. Then, formally, a \emph{query} is a pair $i\new{=} (v,p)$ where $1\leq p\leq d_v$.
We then say that the query $i$ corresponds to the (oriented) edge $(v,f_v(p))$.
\new{The \emph{answer} is then not only $u=f_v(p)$, but $(u,q)$ where $q$ encodes the rank of $v$ in $f_u$, \emph{i.e.} $f_u(q)=v$. This additional request was not in the original adjacency array model~\cite{DHHM06}, but it is required for our reduction lemma below. We interpret ranks $p,q$ as ports of $v$ and $u$ in the distributed setting.}

To show lower bounds on this quantum query complexity, we use the adversary method with nonnegative weights~\cite{Ambainis02}, captured by the following theorem,
where by bounded error we mean an error smaller than some arbitrary constant, usually $1/3$. In fact, we consider a slight extension of the original theorem, which was implicit~\cite{Ambainis02} and made explicit in several works, see for instance in~\cite{LM08}.
\new{We will use this theorem with $x,y$ encoding input graphs, $i\new{=} (v,p)$ a query, and
$x_i=(u,q)$ the corresponding answer. Then the condition $x_i \neq y_i$ means that the query $i$ distinguishes between the two input graphs.}
\begin{theorem}[\cite{Ambainis02}]
\label{lowerbound}
Let $F(x_1, \ldots, x_\n)$ be a function of $\n$ variables 
with values from some finite set and $S$ be a set of inputs. 
Let $X,Y\subseteq S$ and $R\subseteq X \times Y$ be such that 
\begin{enumerate}
\item For every $(x,y)\in R$, we have $F(x)\neq F(y)$;
\item For every $x\in X$, there exist at least $\m$ different $y\in Y$ such that
$(x, y)\in R$;
\item
For every $y\in Y$, there exist at least $\mm$ different $x\in X$ such that
$(x, y)\in R$.
\end{enumerate}
Let $l_{x, i}$ be the number of $y\in Y$ such that $(x, y)\in R$ and $x_i\neq y_i$
and $l_{y, i}$ be the number of $x\in X$ such that $(x, y)\in R$ and $x_i\neq y_i$.
Let $\lmax$ be the maximum of $l_{x, i}l_{y, i}$ over all $(x, y)\in R$
and $i\in\{1, \ldots, \n\}$ such that $x_i\neq y_i$.
Then, any quantum algorithm computing $F$ on $S$ with bounded error 
uses  $\Omega(\sqrt{\frac{\m \mm}{\lmax}})$ queries.
\end{theorem}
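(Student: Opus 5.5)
The plan is the standard adversary (progress-measure) argument, keeping track of the three quantities $\m$, $\mm$ and $\lmax$ separately so that we obtain the product form $\sqrt{\m\mm/\lmax}$.

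\textbf{Setup.} Fix a bounded-error quantum query algorithm making $T$ queries. Write it as $U_T O_x U_{T-1}\cdots U_1 O_x U_0$, where $O_x\ket{i,b,z}=\ket{i,b\oplus x_i,z}$, with $\oplus$ taken in a group structure on the finite value set. Let $\ket{\psi_x^t}$ be the state just after the $t$-th query on input $x$. Decompose $\ket{\psi_x^{t}} = \sum_i \alpha_{x,i}\ket{i}\ket{\phi_{x,i}}$ according to the query register, so that $\sum_i|\alpha_{x,i}|^2=1$. Define the progress measure
$$W_t=\sum_{(x,y)\in R}\left|\braket{\psi_x^t}{\psi_y^t}\right|.$$

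\textbf{Initial and final values.} Before any query the state does not depend on the input, so $W_0=|R|$. At the end, condition 1 gives $F(x)\neq F(y)$ for every $(x,y)\in R$. Bounded error $\epsilon<1/2$ then forces $|\braket{\psi_x^T}{\psi_y^T}|\le 2\sqrt{\epsilon(1-\epsilon)}<1$, by the usual distinguishability bound for two states that must be mapped to different outputs. Hence $W_0-W_T\ge c|R|$ for some constant $c>0$.

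\textbf{One query changes $W$ by little.} The input-independent unitaries $U_t$ preserve all inner products. A single query changes $|\braket{\psi_x}{\psi_y}|$ by at most $2\sum_{i:x_i\neq y_i}|\alpha_{x,i}||\alpha_{y,i}|$, since $O_x$ and $O_y$ agree on the subspace of indices $i$ with $x_i=y_i$. So the main task is to bound
$$\Delta=\sum_{(x,y)\in R}\ \sum_{i:x_i\neq y_i}2|\alpha_{x,i}||\alpha_{y,i}|.$$

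\textbf{Main step: weighted AM--GM.} For each triple $(x,y,i)$ with $x_i\neq y_i$, apply $2ab\le \lambda a^2+b^2/\lambda$ with $\lambda=t\sqrt{l_{y,i}/l_{x,i}}$, where $t>0$ is a global parameter fixed below.

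\emph{The $x$-side.} Summing the $\lambda|\alpha_{x,i}|^2$ terms gives $t\sum_x\sum_i|\alpha_{x,i}|^2\sum_{y}\sqrt{l_{y,i}/l_{x,i}}$. The inner sum has $l_{x,i}$ terms, and each satisfies $\sqrt{l_{y,i}/l_{x,i}}=\sqrt{l_{x,i}l_{y,i}}/l_{x,i}\le\sqrt{\lmax}/l_{x,i}$. So the $x$-side is at most $t\sqrt{\lmax}\,|X|$.

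\emph{The $y$-side.} Symmetrically, it is at most $\sqrt{\lmax}\,|Y|/t$.

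\emph{Optimizing $t$.} Choosing $t=\sqrt{|Y|/|X|}$ gives $\Delta\le 2\sqrt{\lmax}\sqrt{|X||Y|}$. Conditions 2 and 3 give $|R|\ge \m|X|$ and $|R|\ge\mm|Y|$, hence $\sqrt{|X||Y|}\le |R|/\sqrt{\m\mm}$. Therefore $\Delta\le 2|R|\sqrt{\lmax/(\m\mm)}$.

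\textbf{Conclusion and main obstacle.} Combining the steps, $c|R|\le W_0-W_T\le T\cdot 2|R|\sqrt{\lmax/(\m\mm)}$, which gives $T=\Omega(\sqrt{\m\mm/\lmax})$. The delicate point is the choice of the pointwise weight $\lambda$. A uniform $\lambda$ only yields a bound in terms of $\max_x l_{x,i}\cdot\max_y l_{y,i}$. To get the statement as given, where $\lmax$ is a maximum of $l_{x,i}l_{y,i}$ only over related pairs with $x_i\neq y_i$, one must let $\lambda$ depend on the pair and the index, as above. One must also check that this still sums correctly when the variables take values in an arbitrary finite set rather than $\{0,1\}$; this only affects the oracle's $\oplus$ and not the inequality.
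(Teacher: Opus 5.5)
Your proposal is correct and is essentially the standard progress-measure proof of \cite{Ambainis02}, which the paper cites rather than reproving. The same ingredients appear there: the potential $\sum_{(x,y)\in R}|\langle\psi_x^t|\psi_y^t\rangle|$, the per-query loss bounded over indices with $x_i\neq y_i$, and the pair-dependent weight $\lambda=t\sqrt{l_{y,i}/l_{x,i}}$ in the AM--GM step, which is what yields $\lmax$ rather than a product of separate maxima. One cosmetic fix: in the per-query bound, the amplitudes $\alpha_{x,i}$ should be those of $U_t\ket{\psi_x^t}$, the state just before the query. This changes nothing, since only $\sum_i|\alpha_{x,i}|^2=1$ is used.
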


We also use a variant of this theorem for relations, which can be easily derived  from~\cite{Ambainis02,LM08}. In order to keep the same formalism, we then denote by $F(x_1,\ldots,x_{\n})$ the set of valid outputs, formally in relation to the input $(x_1,\ldots,x_{\n})$. In that case, we say that a quantum algorithm computing $F$ has bounded error $\delta$ when (the measure of) its outputs is one of the valid outputs with probability at least $1-\delta$. 
\begin{theorem}
\label{lowerbound2}
Let $F(x_1, \ldots, x_\n)$ be a function of $\n$ variables 
with values being subsets of some finite set, and $S$ be a set of inputs.
Let $X,Y\subseteq S$ and $R\subseteq X \times Y$ be such that
\begin{enumerate}
\item For every $(x,y)\in R$, we have $F(x)\cap  F(y)=\emptyset$;
\item
For every $x\in X$, there exist at least $\m$ different $y\in Y$ such that
$(x, y)\in R$;
\item
For every $y\in Y$, there exist at least $\mm$ different $x\in X$ such that
$(x, y)\in R$.
\end{enumerate}
Let $l_{x, i}$ be the number of $y\in Y$ such that $(x, y)\in R$ and $x_i\neq y_i$
and $l_{y, i}$ be the number of $x\in X$ such that $(x, y)\in R$ and $x_i\neq y_i$.
Let $\lmax$ be the maximum of $l_{x, i}l_{y, i}$ over all $(x, y)\in R$
and $i\in\{1, \ldots, \n\}$ such that $x_i\neq y_i$.
Then, any quantum algorithm computing $F$ on $S$ with bounded error for each $z\in S$ uses $\Omega(\sqrt{\frac{\m \mm}{\lmax}})$ queries.
\end{theorem}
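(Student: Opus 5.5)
Theorem~\ref{lowerbound2} is the relational extension of Theorem~\ref{lowerbound}, so my plan is to rerun Ambainis's weighted-adversary argument~\cite{Ambainis02} and change only the last step, where disjointness of the output sets replaces $F(x)\neq F(y)$. Fix a $T$-query algorithm in the standard form of alternating input-independent unitaries and oracle calls $O_z$. Let $\ket{\psi_z^t}$ be its state on input $z$ after $t$ queries. Define the progress measure
$$W^t=\sum_{(x,y)\in R}\frac{1}{\sqrt{|R_x|\,|R_y|}}\,\bigl|\braket{\psi_x^t}{\psi_y^t}\bigr|,$$
where $R_x=\{y:(x,y)\in R\}$ and $R_y=\{x:(x,y)\in R\}$. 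It is convenient to use the normalized weights $w(x,y)=1/\sqrt{|R_x||R_y|}$, as in~\cite{LM08}, rather than $1/\sqrt{\m\mm}$. This choice does no harm, because $|R_x|\geq \m$ and $|R_y|\geq \mm$. At $t=0$ all states coincide, so $W^0=\sum_{(x,y)\in R}w(x,y)$.

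The first step is the per-query bound. Input-independent unitaries leave inner products unchanged, and the oracle can only alter the amplitude on query positions $i$ with $x_i\neq y_i$. This gives
$$\bigl|\braket{\psi_x^{t+1}}{\psi_y^{t+1}}-\braket{\psi_x^{t}}{\psi_y^{t}}\bigr|\leq 2\sum_{i:x_i\neq y_i}\|P_i\psi_x^t\|\,\|P_i\psi_y^t\|,$$
where $P_i$ is the projector onto query register value $i$. Apply $2ab\leq \lambda a^2+b^2/\lambda$ with $\lambda=\sqrt{l_{y,i}/l_{x,i}}\cdot\sqrt{|R_x|/|R_y|}$, chosen separately for each pair and each $i$. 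Summing over $(x,y)\in R$, each term $\|P_i\psi_x^t\|^2$ is counted $l_{x,i}$ times, so it picks up a coefficient of at most $\sqrt{l_{x,i}l_{y,i}}/|R_x|$. Because $\sum_i\|P_i\psi_x\|^2=1$, summing over $x$ gives the bound $\sqrt{\lmax}\sum_x 1/|R_x|\cdot(\ldots)$. The correct accounting, in the form given by~\cite{LM08}, is the relative decrease $W^t-W^{t+1}\leq O\bigl(\sqrt{\lmax/(\m\mm)}\bigr)\,W^0$. I expect this weight bookkeeping to be the main obstacle. It is also exactly the part that the relation setting leaves untouched, so I would cite it as in~\cite{Ambainis02,LM08}, checking that only conditions~2 and~3 and the definition of $\lmax$ are used.

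The second step, where the relational hypothesis enters, bounds the final progress. Let $\Pi_z$ be the projector onto measurement outcomes in $F(z)$. Since the algorithm has bounded error $\delta$, $\|\Pi_z\psi_z^T\|^2\geq 1-\delta$. The sets $F(x)$ and $F(y)$ are disjoint, so $\Pi_x$ and $\Pi_y$ are orthogonal projectors on the output register. Decomposing each state gives
$$|\braket{\psi_x^T}{\psi_y^T}|\leq \|(I-\Pi_x)\psi_x\|+\|(I-\Pi_y)\psi_y\|\leq 2\sqrt{\delta(1-\delta)}\cdot O(1)=:c<1$$
for $\delta\leq 1/3$. The earlier proof relied on $F(x)\neq F(y)$ only to obtain this one inequality, and disjointness supplies it. As a result $W^T\leq cW^0$.

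Combining the two steps, $(1-c)W^0\leq W^0-W^T\leq T\cdot O(\sqrt{\lmax/(\m\mm)})\,W^0$. Therefore $T=\Omega\bigl(\sqrt{\m\mm/\lmax}\bigr)$, which proves the claim.
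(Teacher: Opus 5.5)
\textbf{Two steps fail as written.} Your overall route, rerunning Ambainis's argument and using $F(x)\cap F(y)=\emptyset$ only in the final step, is what the paper intends; it gives no proof beyond ``easily derived from~\cite{Ambainis02,LM08}''.

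\textbf{The normalized weights break the per-query step.} The weights $1/\sqrt{|R_x||R_y|}$ are not harmless. The bound $W^t-W^{t+1}\le O(\sqrt{\lmax/(\m\mm)})\,W^0$ that you defer to the literature holds for Ambainis's uniform weighting. For your $W$ it fails when $R$ is irregular, because normalization can put almost all of $W^0$ on low-degree pairs that a single query separates. Here is an example.
\begin{itemize}
\item Use indices $\{0\}\cup[n]\cup[n]'$ and write strings as (value at $0$, pattern on $[n]$, pattern on $[n]'$).
\item $X$ consists of hubs $(1,1_S,0)$ and leaves $(0,0,1_{T'})$. $Y$ consists of hubs $(0,0,1_{S'})$ and leaves $(1,1_T,0)$. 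Here $|S|=|S'|=s$ and $|T|=|T'|=s+1$.
\item Relate each hub to the opposite-side leaves whose set contains its own. Add any perfect matching between $X$-leaves and $Y$-leaves. Set $F\equiv\{0\}$ on $X$ and $F\equiv\{1\}$ on $Y$.
\end{itemize}
Then $\m=\mm=s+2$ and $\lmax=2$, so the theorem predicts $\Omega(s)$ queries. Under your weights, each matching pair has weight $1/(s+2)$, while each hub pair has weight $1/\sqrt{(n-s)(s+2)}$. For $n\gg s^3$ the matching therefore carries a $1-O(s^{3/2}/\sqrt n)$ fraction of $W^0$. Every matching pair differs at index $0$ and every hub pair agrees there, so one classical query to index $0$ destroys almost all of $W$. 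Your claimed bound allows only an $O(1/s)$ fraction.

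Your bookkeeping is also off: with your $\lambda$, each pair contributes $\sqrt{l_{y,i}/l_{x,i}}/|R_y|$, not $\sqrt{l_{x,i}l_{y,i}}/|R_x|$, to the coefficient of $\|P_i\psi_x\|^2$.

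The fix is Ambainis's uniform measure $W^t=\sum_{(x,y)\in R}|\langle\psi_x^t|\psi_y^t\rangle|$, so $W^0=|R|$.
\begin{itemize}
\item Take $\lambda=\kappa\sqrt{l_{y,i}/l_{x,i}}$. The loss per query is then at most $\sqrt{\lmax}\,(\kappa|X|+|Y|/\kappa)$.
\item Choose $\kappa=\sqrt{|Y|/|X|}$ and use $|R|\ge\m|X|$ and $|R|\ge\mm|Y|$. This gives a loss of at most $2\sqrt{\lmax|X||Y|}\le 2\sqrt{\lmax/(\m\mm)}\,|R|$.
\end{itemize}

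\textbf{The final-state bound does not give $c<1$.} Your chain yields only $\|(I-\Pi_x)\psi_x\|+\|(I-\Pi_y)\psi_y\|\le 2\sqrt{\delta}$. This exceeds $1$ at $\delta=1/3$, and the factor ``$\cdot O(1)$'' hides exactly that. You cannot fix it by first lowering $\delta$. For relations there is no generic success amplification, since majority vote is meaningless when several outputs are valid.

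Instead, split with the single projector $\Pi_x$:
\begin{itemize}
\item $|\langle\psi_x|\psi_y\rangle|\le\|\Pi_x\psi_x\|\,\|\Pi_x\psi_y\|+\|(I-\Pi_x)\psi_x\|\,\|(I-\Pi_x)\psi_y\|=\sqrt{ab}+\sqrt{(1-a)(1-b)}$, where $a=\|\Pi_x\psi_x\|^2\ge 1-\delta$ and $b=\|\Pi_x\psi_y\|^2$.
\item Disjointness gives $\Pi_x\le I-\Pi_y$, so $b\le\delta$.
\item Writing $a=\cos^2\alpha$ and $b=\cos^2\beta$, the bound equals $\cos(\beta-\alpha)\le 2\sqrt{\delta(1-\delta)}<1$ for $\delta<1/2$.
\end{itemize}

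With both repairs, your argument gives $T\ge\tfrac12\bigl(1-2\sqrt{\delta(1-\delta)}\bigr)\sqrt{\m\mm/\lmax}$, as required.
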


As mentioned previously, we can lift lower bounds from the centralized setting to the distributed setting: more precisely, we can lift quantum query complexity lower bounds into quantum communication lower bounds via a simple reduction. This is captured by the following statement.

\begin{lemma}[Query Complexity to Message Complexity (QCMC) Reduction Lemma]
\label{lem:CentralizedToDistributedLowerBound}
    For any quantum distributed algorithm solving some (unweighted) distributed graph problem $\mathcal{P}$ on some communication graph $G$ with communication cost $M$, there exists a quantum algorithm which, given query access to the adjacency array of $G$, uses $M$ queries to solve $\mathcal{P}$ on $G$.
\end{lemma}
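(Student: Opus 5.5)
The plan is a direct simulation: a centralized query algorithm simulates the whole network (all node memories, emission and reception registers) and pays one query per message sent. The only component of the distributed unitary that depends on the graph topology is the global operator $\send=\bigotimes_u \send_u$. Everything else is local: it acts on a node's registers indexed by its port numbers $1,\dots,\deg(v)$, and it depends only on the node's local input (its degree and private randomness, which the simulator can generate itself). Since the degrees $d_1,\dots,d_n$ are known initially in the adjacency array model, the simulator can allocate for each vertex $v$ the memory registers and the $2d_v$ port registers $v\to p$, $v\gets p$, and it can apply every local unitary without any query. So the proof reduces to implementing, for each round, the operator $\send$ using a number of queries equal to the number of messages actually carried in that round.

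I would implement $\send$ as follows. For an emission register $v\to p$ holding a non-$\vac$ message, $\send$ must swap its content with the reception register $u\gets q$, where $(u,q)$ is the answer to the query $(v,p)$; this is exactly why the stronger answer $(u,q)$ with $f_u(q)=v$ is needed. The simulator works controlled on the non-emptiness of emission registers, in superposition over configurations. It computes coherently the list of pairs $(v,p)$ whose emission register is non-empty. It queries the oracle on each listed pair to obtain $(u,q)$ into fresh registers, and applies the controlled swap between $v\to p$ and $u\gets q$. It then uncomputes $(u,q)$ by querying again and uncomputes the list.

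To keep the count at exactly $M$, I would avoid the second query. Instead I would note that after the swap the message sits in $u\gets q$, and the pair $(v,p)$ can be recovered from $(u,q)$ by a query on $(u,q)$ itself, which is symmetric. Alternatively I would accept a factor $2$, which is harmless for the $\Omega$ lower bounds. A cleaner option is to define the count so that a single query per message suffices: use an oracle $O\colon\ket{v,p}\ket{z}\mapsto\ket{v,p}\ket{z\oplus(u,q)}$ and observe that $\send_{v\to p}$ is a unitary that can be written as query, controlled swap, query. Since a constant factor is irrelevant, I would state the lemma up to a factor $2$, or note that $M$ is asymptotic.

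The main obstacle is that the number of non-empty emission registers varies across branches of the superposition, while a query algorithm must make a fixed sequence of queries. I would handle this using the definition of message complexity. In round $t$, every configuration in the superposition carries at most $M_t$ messages, with $\sum_t M_t=M$. The simulator therefore performs exactly $M_t$ query slots in round $t$. In each branch, slot $j$ is applied to the $j$-th non-empty port in a canonical order (for example lexicographic in $(v,p)$), and slots beyond the actual count are applied to a dummy or controlled-off input; a controlled query can be simulated by querying a fixed dummy index and discarding the result coherently. The slot-to-port assignment is computed reversibly from the emission registers, is unchanged by the swaps, and is uncomputed afterward. Summing over rounds gives $2M$ (or $M$) queries. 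The final measurement of the outputs of all nodes then yields a solution to $\mathcal{P}$ on $G$ with the same success probability as the distributed algorithm.
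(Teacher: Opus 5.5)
Your overall route is the paper's: local computation is simulated for free, since the known degrees let the simulator allocate every port register. Each message then costs one modified query $(v,p)\mapsto(u,q)$, followed by a swap of the emission register $v\to p$ with the reception register $u\gets q$. The paper's proof stops at that level. It does not discuss erasing query answers or the fact that the number of messages varies across branches. Your fixed budget of $M_t$ query slots per round, padded with dummy queries, is therefore a useful addition. Note that the $M_t$ have to be fixed independently of $G$, i.e.\ taken as per-round worst-case bounds, since the simulator does not know $G$.

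One step of your bookkeeping fails, though. The list of source pairs $(v,p)$, i.e.\ your slot-to-port assignment, is computed from the occupancy of the emission registers. After the controlled swaps these registers are all $\vac$, so the list can no longer be ``uncomputed afterward''. Left in place, it records the port each message used. That destroys interference between branches in which messages left through different ports, which distributed Grover search, for example, relies on.

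Neither of your routes to exactly $M$ saves a query either. Recovering $(v,p)$ from $(u,q)$ is itself a query, and ``query, controlled swap, query'' is two. That recovery is, however, the right use of the second query. The repaired procedure is:
\begin{enumerate}
\item Compute the destinations $A_j=f(L_j)$.
\item Reversibly reorder the pairs $(L_j,A_j)$ by destination. This is a bijection, since each pair carries both keys.
\item Do the swaps.
\item Erase each source by XOR-ing in $f$ of its destination, using $f(u,q)=(v,p)$.
\item Erase the destination list, now sorted, for free: it is exactly the sorted list of occupied reception registers.
\end{enumerate}
Dummy slots just query a fixed index twice. This gives a correct $2M$-query simulation.

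The paper's proof also charges one query per message without addressing these erasures, so its count $M$ should be read up to a constant factor. Since \Cref{lb:BFS} and \Cref{lb:LE} use the lemma only asymptotically, $O(M)$ queries suffice.
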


\begin{proof}
    The quantum algorithm in the adjacency array model straightforwardly simulates the quantum distributed algorithm. Clearly, it can simulate the memory and local computations of the $n$ vertices in a centralized manner (since we do not bound the memory and computation cost in the centralized setting), but it can also simulate the quantum communication within the distributed algorithm. Indeed, it suffices to execute one quantum query whenever a quantum message is sent in the distributed algorithm. \new{More precisely, the communication within each round of the distributed algorithm gets simulated as follows. First, for each message sent (from some node $v$ over some port $p$ to some neighbor $u$) by the algorithm, a query $i=(v,p)$ is executed by the centralized model, and this \emph{modified query} returns $(u,q)$. Second, the corresponding communication from $v$ to $u$ is simulated by a $\swap$ operator: that is, the message is moved from $v$ to $u$ by exchanging the (simulated) $p$th emission register of $v$ and the (simulated) $q$th reception register of $u$.} (Note that quantum queries are more general since they are not necessarily limited to edges incident to one node, unlike quantum messages.)
    
    In summary, the above described centralized quantum algorithm solves $\mathcal{P}$ since the distributed algorithm does, whereas its quantum query complexity follows from the one-to-one correspondence between quantum messages and quantum queries within the above reduction.
\end{proof}

Note that the above reduction is limited to unweighted distributed graph problems. Indeed, for weighted graphs, the adjacency array model assumes that no edge weight is known initially, and to learn an edge's weight, one must query the corresponding edge. On the other hand, the distributed setting assumes that the two endpoint nodes of any given edge know its weight initially. This explains, in particular, why the MST problem requires $\Omega(\sqrt{mn})$ queries within the adjacency array model \cite{DHHM06} but can be solved using only $\tilde{O}(n)$ messages within our distributed setting (see \Cref{sec:LE}). 

\subsection{BFS Lower Bound}
\label{subsec:BFSLB}

\begin{theorem}
\label{lb:centralizedBFS}
For any positive integers $n$ and $d\leq n-1$, there exists a diameter-$2$ graph $G$ with $\Theta(n)$ vertices and $m = \Theta(nd)$ edges such that given a query access to its adjacency array, the quantum query complexity of computing the BFS of $G$ from a given root node $r$ is $\Omega(n\sqrt{d}) = \Omega(\sqrt{mn})$.
\end{theorem}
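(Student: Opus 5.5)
We will apply the relational adversary bound, \Cref{lowerbound2}, to a family of graphs that share one fixed unlabeled structure but differ in how each node orders its incident edges in its adjacency array. This matches the overview in \Cref{sec:high-levelLB}.

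\textbf{Construction.} Take a root $r$, a set $A=\{a_1,\ldots,a_n\}$ adjacent to $r$, and a set $B=\{b_1,\ldots,b_n\}$. Fix a perfect matching $a_j b_j$ between $A$ and $B$. Make each $b_j$ additionally adjacent to $d-1$ nodes of a set $C$ of $\Theta(n)$ nodes that are all adjacent to $r$. The structure is chosen so that each $b_j$ is at distance exactly $2$ from $r$ and has degree $d$, and so that the degree sequence is fixed across the family. We then arrange the structure so that the only common neighbour of $r$ and $b_j$ is $a_j$. The $d-1$ other edges of $b_j$ lead to $C$, so we must ensure $C$ is not adjacent to $r$; then $C$ sits at distance $2$, through some other connection, and $b_j$ at distance $3$. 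To keep the diameter at $2$ we add a hub adjacent to everything except the $B$ nodes, and tune the construction until the distances come out right. This yields $\Theta(n)$ vertices and $\Theta(nd)$ edges.

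The essential invariant is the following. In every BFS tree rooted at $r$, the parent of $b_j$ must be $a_j$, because that is $b_j$'s unique neighbour at the previous layer. The required output (\Cref{sec:high-levelLB}) includes the position of the parent edge in the child's local ordering. Hence every valid output must reveal the port $p_j\in[d]$ at which $b_j$ stores $a_j$.

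\textbf{Adversary relation.} The inputs are all local orderings in which the port of $a_j$ in $f_{b_j}$ is arbitrary. To encode a hidden bit cleanly, let $X$ be the orderings where some designated ``defect'' index $j$ has $p_j$ in the first half of $[d]$, and let $Y$ be those with $p_j$ in the second half. Relate $x\in X$ and $y\in Y$ when they differ only by a transposition within $f_{b_j}$ that moves $a_j$ from port $p$ to port $p'$, for a single $j$.

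Outputs of related inputs disagree on $b_j$'s parent port, so condition 1 of \Cref{lowerbound2} holds. Each $x$ has $\m=\Theta(nd)$ neighbours ($n$ choices of $j$ times $d/2$ target ports), and symmetrically $\mm=\Theta(nd)$.

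A query $i=(b_j,p)$ distinguishes $x$ from $y$ only if $p$ is one of the two swapped ports. For such a query, $l_{x,i}=O(d)$ if $p=p_j$ and $O(1)$ otherwise, and symmetrically for $l_{y,i}$. We also need the answer $(u,q)$: a query from the other endpoint returns $q$, so the $a_j$-side query $(a_j,\cdot)$ also distinguishes, and for it $l_{x,i},l_{y,i}=O(d)$. Those products would be $d^2$, which is too large.

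\textbf{Main obstacle.} The difficulty is exactly the one just exposed: the modified queries return the reverse port $q$, so querying from $a_j$ reveals $p_j$ in a single query. To defeat this, each $a_j$ must be indistinguishable from $d$ sibling nodes. I would therefore replace the matching with a structure in which $b_j$ has $d$ neighbours $a_{j,1},\ldots,a_{j,d}$ at the previous layer, exactly one of which is ``real''. The relation then moves which neighbour is the parent by swapping adjacency between pairs. The goal is to get $\lmax=O(d)$ (one side $O(d)$, the other $O(1)$) while keeping $\m\mm=\Theta(n^2d^2)$. With those values,
\[
\Omega\bigl(\sqrt{\m\mm/\lmax}\bigr)=\Omega\bigl(\sqrt{n^2d^2/d}\bigr)=\Omega(n\sqrt d)=\Omega(\sqrt{mn}).
\]
The step I expect to be hardest is designing this gadget so that degrees are preserved, the diameter stays $2$, and no single query from either endpoint resolves many related pairs at once. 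I would first try a bipartite ``crossing'' of two edges, in the style of the connectivity lower bound, restricted to the layer-1/layer-2 edges.
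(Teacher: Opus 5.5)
\textbf{There is a genuine gap, at exactly the step you flag.} The graph is never fixed: you first make $C$ adjacent to $r$, then require that it is not, and leave a hub to be ``tuned''. The only relation you analyse is a transposition inside $f_{b_j}$ alone, and, as you observe, it has $\lmax=\Theta(d^2)$. The query from $a_j$ along the forced edge returns $b_j$'s reverse port, so it separates $x$ from all $\Theta(d)$ relatives with the same $j$, on both sides. With $\m=\mm=\Theta(nd)$ this gives only $\Omega(n)$.

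The repair you sketch does not work as stated. If all of $a_{j,1},\dots,a_{j,d}$ lie in the previous layer, each is a legal BFS parent, and condition~1 of \Cref{lowerbound2} fails. If only the ``real'' one does, you are back to a single forced edge with the same leak. More generally, re-encoding a forced edge at only one endpoint leaves the query from the other endpoint heavy in both related graphs. That caps this style of argument at $\Omega(n)$.

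\textbf{The missing idea (the paper's) is to re-encode the forced edge at both endpoints, with independent choices.}
\begin{itemize}
\item Construction: take $r$; $A=\{a_1,\dots,a_n\}$ adjacent to $r$; $B=\{b_1,\dots,b_d\}$ completely joined to $A$; and $C=\{c_1,\dots,c_n\}$ attached by the matching $M=\{a_ic_i\}$ and made $d$-regular internally by $d$ edge-disjoint matchings. The only level-$1$ neighbour of $c_i$ is $a_i$, so $M$ lies in every BFS tree.
\item Relation: relate $G,G'$ if, for some $a$ with partner $c$, some $b\in B$ and some $c'\in N_C(c)$, $G'$ arises by exchanging the entries for $b$ and $c$ in $f_a$, and the entries for $a$ and $c'$ in $f_c$. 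Degrees are unchanged. With your child-port encoding, condition~1 holds: in $G'$ the old port of $c$ for $a$ denotes the edge $\{c,c'\}$ between two level-$2$ nodes, which cannot be a BFS edge.
\item Counting: $\m=\mm=\Theta(nd^2)$, and six queries separate a related pair.
  \begin{itemize}
  \item The two that read $\{a,c\}$ in $G$ (the port of $a$ for $c$, and of $c$ for $a$) have $d^2$ relatives in $G$. In $G'$ they read $\{a,b\}$ and $\{c,c'\}$, which have only $d$ relatives (only $c'$, resp.\ $b$, is free).
  \item The two that read $\{a,b\}$ and $\{c,c'\}$ in $G$ behave symmetrically.
  \item The reverse ports at $b$ and $c'$ have $d$ relatives on both sides.
  \end{itemize}
  Hence $\lmax\le d^3$ and $\sqrt{\m\mm/\lmax}=\sqrt{n^2d^4/d^3}=n\sqrt d$.
\end{itemize}
The second swap is precisely what neutralises the reverse-port leak you identified.

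No hub is needed. The argument only uses that $r$ has eccentricity $2$, which is all this three-level graph provides: for $d=1$ and $n\ge 4$, two unmatched nodes of $C$ are at distance $4$.
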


\begin{proof}
We prepare our setting for applying Theorem~\ref{lowerbound2}. The function $F$ we consider is defined informally as the set of encodings of any valid BFS from some given vertex $r$, where each edge $\{u,v\}$ of that BFS is encoded by one of its endpoints, say $u$, and the corresponding port, say $p$, such that $f_u(p)=v$.

Before constructing our relation set $R$, we first describe the set $S$ of graphs we will restrict to. (In the following, we assume $n$ is even. If $n$ is odd, just replace $n$ by $n+1$.) We consider graphs $G$ with $3$ levels, illustrated in \Cref{fig:bfs}, defined as follows:
\begin{enumerate}
    \item Root $r$;
    \item $n$ nodes $a_1,\ldots,a_{n}\in A$ connected to the root;
    \item $d$ nodes $b_1,\ldots,b_d\in B$ connected by a complete bipartite graph to the nodes at level $1$, together with $n$ additional nodes $c_1,\ldots,c_{n}\in C$ connected to the nodes at level $1$ by a matching $M$,
    and connected together by $d$ edge-disjoint matchings.
\end{enumerate}

\begin{figure}[h]
 \centering
\begin{tikzpicture}[every node/.style={font=\small},root/.style={circle, draw=black, fill=yellow!30, minimum size=8mm},blockA/.style={circle, draw=black, fill=blue!15, minimum size=7mm},blockB/.style={circle, draw=black, fill=cyan!20, minimum size=6.5mm},blockC/.style={circle, draw=black, fill=orange!20, minimum size=6.5mm},labelstyle/.style={font=\footnotesize}]
\def\n{6}\def\d{3}\def\spacing{1.4}
\node[root] (r) at (0,2.2) {$r$};
\foreach \i in {1,...,\n}{\coordinate (tmp\i) at ({(\i-1-(\n-1)/2)*\spacing},0); \node[blockA] (l\i) at (tmp\i) {$a_{\i}$}; \draw[thick](r)--(l\i);}
\foreach \i in {1,...,\d}{\pgfmathsetmacro{\xshift}{(\i-1-(\n-1)/2)*\spacing - 3} \node[blockB] (b\i) at (\xshift,-2) {$b_{\i}$};}
\foreach \i in {1,...,\n}{\pgfmathsetmacro{\xshift}{(\i-1-(\n-1)/2)*\spacing + 3} \node[blockC] (c\i) at (\xshift,-2) {$c_{\i}$};}
\foreach \i in {1,...,\n}{\foreach \j in {1,...,\d}{\draw[blue!70,semithick,opacity=0.8](l\i)--(b\j);}}
\foreach \i in {1,...,\n}{\draw[red,very thick](l\i)--(c\i);}
\pgfmathtruncatemacro{\halfn}{\n/2}       
\pgfmathtruncatemacro{\lastshift}{\d}   
\foreach \shift in {0,...,\lastshift}{
  \foreach \i in {1,...,\halfn}{
    \pgfmathtruncatemacro{\a}{1+mod(\i-1+\shift,\n)}
    \pgfmathtruncatemacro{\b}{1+mod(\n-\i+\shift,\n)}
     \ifnum\a>\b
      \pgfmathtruncatemacro{\temp}{\a}
      \pgfmathtruncatemacro{\a}{\b}
      \pgfmathtruncatemacro{\b}{\temp}
    \fi
    \pgfmathtruncatemacro{\bend}{min(45,8*abs(\b-\a))}
    \draw[green!60!black,thin,opacity=0.8] (c\a) to[bend left=\bend] (c\b);
  }
}
\node[labelstyle] at (-4.5,-2.8){Block $B$ (complete bipartite graph with $A$)}; \node[labelstyle] at (4.5,-2.8){Block $C$ ($1$ matching $M$ with $A$, $d$ matchings in $C$)};
\begin{scope}[xshift=3.5cm,yshift=2.0cm]\draw[blue!70,semithick](0,0)--+(0.6,0) node[right,labelstyle]{Complete bipartite graph edges};\draw[red,very thick](0,-0.5)--+(0.6,0) node[right,labelstyle]{Matching $M$};\draw[green!60!black,thin,opacity=0.8](0,-1.0)..controls(0.3,-0.7)..(0.6,-1.0) node[right,labelstyle]{Matchings in $C$};\draw[thick](0,-1.5)--+(0.6,0) node[right,labelstyle]{Root $\to$ level~1 edges};\end{scope}
\end{tikzpicture}
\caption{BFS hard instance with $n=6$ and $d=3$\label{fig:bfs}}
\end{figure}
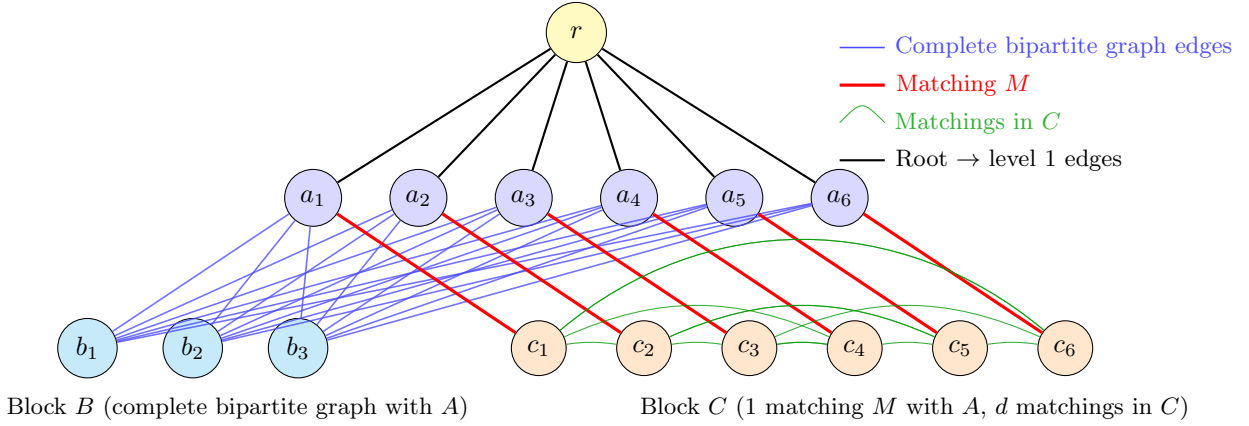

First, note that for any $d \leq n-1$, since $n$ is even, it is indeed possible to connect the $n$ nodes of $C$ together by $d$ edge-disjoint matchings. Moreover, observe that the edges of $M$ must be in the BFS of $G$.
Thus, for each edge $\{a,c\}\in M$, either node $a$ or node $c$ must identify its port corresponding to the edge $\{a,c\}$. 

Now that we have defined $S$, let $X=Y=S$  and the relation set $R\subset S\times S$ be all pairs of such graphs $(G,G')$ such that one can go from $G$ to $G'$ (and conversely) as follows:
\begin{enumerate}
    \item\label{step1} Select two edges $\{a,b\}$, $\{a,c\}$: Permute respectively the ports $p,q$ for $b$ and $c$ in $a$, that is, exchange $f_a(p)=b$ with $f_a(q)=c$; 
    \item\label{step2} Select one edge $\{c,c'\}$: Permute similarly the ports for $a$ and $c'$ in $c$.
\end{enumerate}
\new{More precisely, let $p',q'$ be such that $f_b(p')=a$ and $f_c(q')=a$. Then after Step~\ref{step1}, the answers $(b,p')$ and $(c,q')$ to the queries $(a,p)$ and $(a,q)$ are now exchanged. In return, the answer $(a,p)$ to the query $(b,p')$ is modified to $(a,q)$. Similarly, the answer $(a,q)$ to the query $(c,q')$ becomes $(a,p)$.
Step~\ref{step2} should be expanded similarly.}
Clearly we have $\m=\mm= \Theta(n d^2)$. 

Now let us consider such a pair $(G,G')$. The two graphs differs in 
%$2$ queries in $a$ and $2$ queries in $c$. 
\new{$2$ queries in $a$, $2$ queries in $c$, $1$ query in $b$ and $1$ query in $c'$. Moreover, each one of these queries identifies one of the $3$ edges $\{a,b\}$, $\{a,c\}$ and $\{c,c'\}$.}
We are going to show that $\lmax\leq d^3$, therefore we will get a query complexity lower bound of $\Omega(\sqrt{\frac{\m\mm}{\lmax}}) = \Omega(\sqrt{\frac{n^2 d^4}{d^3}}) = \Omega(n \sqrt{d})$.

Indeed, let us fix one such query $i$ among those \new{$6$} %$4$ 
possible queries. If $i$ corresponds to the edge $(a,b)$ (meaning that $i$ corresponds to querying the port of $a$ that leads to $b$\new{, or conversely}), then the remaining number of possibilities to modify $G$ (for the choice of \new{$c'$, since $c$ is uniquely defined by $a$}) is $d$.
The same argument holds if $i$ corresponds to $(c,c')$. 
The hardest cases are when $i$ corresponds to some edge $(a,c)$, or respectively $(c,a)$. Indeed, in that case, the number of possibilities for the pair $(b,c')$ is $d^2$. But then we observe that, in the corresponding graph $G'$, the query $i$ corresponds to the edge $(b,a)$, or respectively $(c,c')$, for which the number of sibling graphs is then only $d$, as shown in the previous two cases. Hence, $\lmax\leq d^3$.
\end{proof}

\begin{theorem}
\label{lb:BFS}
    Any quantum distributed algorithm solving Breadth First Search must send $\Omega(\sqrt{mn})$ quantum messages.
\end{theorem}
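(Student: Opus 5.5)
The plan is to combine the centralized lower bound of Theorem~\ref{lb:centralizedBFS} with the QCMC Reduction Lemma (Lemma~\ref{lem:CentralizedToDistributedLowerBound}), arguing by contradiction. Fix any $n$ and $m$ with $n-1 \leq m \leq \binom{n}{2}$, and set $d = \Theta(m/n)$, so that $d \leq n-1$. Theorem~\ref{lb:centralizedBFS} then gives a family $S$ of diameter-$2$ graphs with $\Theta(n)$ vertices and $\Theta(nd)=\Theta(m)$ edges. On this family, any bounded-error quantum query algorithm in the (modified) adjacency array model that outputs a BFS from $r$ needs $\Omega(n\sqrt d)=\Omega(\sqrt{mn})$ queries. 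Here the output is encoded by pairs $(u,p)$, where $u$ is an endpoint of a BFS tree edge and $p$ is the port of $u$ carrying that edge.

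Now suppose a quantum distributed algorithm solves BFS on every communication graph in $S$ with bounded error and message complexity $M$. Lemma~\ref{lem:CentralizedToDistributedLowerBound} turns it into a centralized algorithm that makes $M$ queries.

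Two points need checking for this simulation to be valid.

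\begin{enumerate}
\item \emph{The distributed output must be valid for $F$.} In the distributed BFS problem, each node outputs the BFS edges incident to itself. In the KT0 model a node refers to an incident edge only through its local port number. So the collection of local outputs is exactly a set of pairs $(u,p)$ of the form required by $F$ in Theorem~\ref{lb:centralizedBFS}, including the edges of the matching $M$. Any port encoding at either endpoint is accepted, so this is a valid element of $F(G)$.
\item \emph{The simulator needs only what the query model provides.} The simulator must know all degrees, which are given in the adjacency array model. It must know the root $r$, which is part of the input. It must simulate the private randomness and the initial knowledge of a polynomial upper bound on $n$, both of which are free centrally.
\end{enumerate}

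The step needing most care is the claim that a superposed quantum-routed message costs exactly one query. A single message in the routing model is a superposition over the ports $p$ of one sender $v$. The simulator realizes it by one query on the superposed index $(v,p)$, which returns $(u,q)$, followed by the controlled swap of registers described in the reduction proof. A round whose global configuration is a superposition of configurations, each with at most $C$ messages, can be simulated with $C$ queries. To do this, pad every branch to exactly $C$ message slots, processing the $j$-th message slot of each branch with the $j$-th query, and making a dummy query on empty slots.

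With these checks, $M$ is at least the query lower bound, so $M = \Omega(\sqrt{mn})$. Since this holds for every admissible pair $(n,m)$, the distributed message lower bound holds for all densities, as stated.
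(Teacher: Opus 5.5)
Your proposal is correct and follows the paper's proof: you assume a distributed BFS algorithm with few messages, turn it into a query algorithm via the QCMC Reduction Lemma (Lemma~\ref{lem:CentralizedToDistributedLowerBound}), and contradict Theorem~\ref{lb:centralizedBFS} instantiated with $d=\Theta(m/n)$. Your extra checks (output format, free simulation of degrees and randomness, one query per superposed message per round) are details the paper leaves implicit; the only caveat is that a fully garbage-free simulation may need a constant number of queries per message, for instance to uncompute the answer $(u,q)$, which does not affect the $\Omega(\sqrt{mn})$ bound.
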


\begin{proof}
    Assume by contradiction that there exists some quantum distributed BFS algorithm using $o(\sqrt{mn})$ messages. Then, by Lemma \ref{lem:CentralizedToDistributedLowerBound}, there exists a quantum (centralized) algorithm, which, given any input root node $r$ and some query access to some graph $G$, computes a BFS of $G$ rooted in $r$ using $o(\sqrt{mn})$ queries. In particular, this holds for the graph considered in Theorem \ref{lb:centralizedBFS}, which leads to a contradiction. As a result, we obtain the desired distributed quantum communication lower bound for BFS.
\end{proof}
 
A simple corollary of this quantum communication lower bound --- via a trivial reduction in the distributed setting --- is that the Single Source Shortest Paths (SSSP) problem also admits an $\Omega(\sqrt{mn})$ quantum communication lower bound. 
Combined with our BFS algorithm from \Cref{sec:BFS}, we obtain that this communication lower bound is tight for BFS; however, whether the lower bound is also tight for SSSP remains open.

\subsection{Implicit Leader Election Lower Bound}
\label{subsec:LELB}

Deciding connectivity requires $\Theta(n)$ queries on $n$-vertex graphs in the adjacency array model~\cite{DHHM06}. However, this query lower bound weakens to $\Omega(d)$ for diameter-$d$ graphs. 
Since we aim to lift a quantum query complexity lower bound and show a distributed lower bound for graphs of diameter 3 (and higher), we adapt the connectivity lower bound proof from~\cite{DHHM06} and show $\Omega(n)$ queries are required to distinguish between connected diameter-$3$ graphs and graphs made of two disjoint cliques.

\begin{theorem}
\label{lb:centralizedConnectivity}
For any positive integer $n$, there exists a $2n$-vertex $(n-1)$-regular graph $G$, such that given a query access to its adjacency array, 
the quantum query complexity of deciding whether $G$ is connected with diameter $3$, or is the union of two disjoint cliques, is $\Omega(n)$.
\end{theorem}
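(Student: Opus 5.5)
We will apply \Cref{lowerbound} in the slightly strengthened adjacency array model of \Cref{subsec:LBTechniques}, where a query $(v,p)$ returns $(u,q)$ with $f_u(q)=v$. The function $F$ outputs $1$ on connected inputs and $0$ on inputs made of two disjoint cliques. Fix two vertex sets $P=\{p_1,\ldots,p_n\}$ and $Q=\{q_1,\ldots,q_n\}$, and let every vertex have degree $n-1$, so that the known degrees reveal nothing. The set $Y$ consists of all adjacency arrays encoding the disjoint union of the cliques on $P$ and on $Q$, ranging over all port numberings. The set $X$ consists of all arrays obtained from some $y\in Y$ by choosing an edge $\{a,b\}$ inside $P$ and an edge $\{c,d\}$ inside $Q$, and replacing them by the bridge edges $\{a,c\}$ and $\{b,d\}$. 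The new edges reuse the ports: the port of $a$ that led to $b$ now leads to $c$, and similarly for the ports of $b$, $c$ and $d$. Degrees are unchanged, and the graph stays simple and $(n-1)$-regular.

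Next we check that every $x\in X$ is connected with diameter exactly $3$. The edges removed inside each clique leave each side at diameter at most $2$, and the two bridges connect the sides. A shortest path from $u\in P\setminus\{a,b\}$ to $w\in Q\setminus\{c,d\}$ is then $u,a,c,w$, of length $3$. A pair that needs length exactly $3$ exists, since $u$ has no neighbour in $Q$. We put $(x,y)\in R$ whenever $x$ arises from $y$ by one such crossing. This gives $\m=\mm=\Theta(n^4)$: for a fixed $y$ there are $\binom{n}{2}^2\cdot 2$ choices of edge pair and crossing orientation. Conversely, a fixed $x$ has exactly two bridge edges and can only be uncrossed in one way.

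To bound $\lmax$, take a query $i=(v,p)$ with $x_i\neq y_i$ for some related pair $(x,y)$. Then $v\in\{a,b,c,d\}$ and $p$ is the port of a removed or bridge edge.

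\begin{itemize}
\item For fixed $y$, the number $l_{y,i}$ of $x$ differing from $y$ at $i$ is the number of crossings that involve the specific edge $e$ through port $p$ of $v$. This is $O(n^2)$, since the other edge is free and the orientation contributes a constant factor.
\item For fixed $x$, the number $l_{x,i}$ of $y$ is $O(1)$, since there is a unique uncrossing.
\end{itemize}

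So for the natural orientation $\lmax=O(n^2)$, and we get $\Omega(\sqrt{n^4\cdot n^4/n^2})$. That is too large, so the orientation must be the one above: $X$ contains the crossed, connected graphs with $\m=O(1)$. The theorem needs only $\m\mm/\lmax=\Omega(n^2)$, which is worth rechecking. We have $\m=1$ (each $x$ is related to one $y$), $\mm=\Theta(n^4)$, $l_{x,i}=1$ and $l_{y,i}=O(n^2)$. Hence $\lmax=O(n^2)$, and $\sqrt{\m\mm/\lmax}=\sqrt{n^4/n^2}=n$, as required.

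The main obstacle is the careful accounting in the strengthened model. Each query also returns the reverse port $q$, so a crossing changes the answers of exactly the $4$ queries at the ports of $a,b,c,d$ that now carry bridges. No other query changes, because each endpoint keeps its own port number $q$ for the swapped edge. We must verify that $l_{y,i}=O(n^2)$ holds uniformly over these $4$ affected queries and over both crossing orientations, $\{a,c\},\{b,d\}$ versus $\{a,d\},\{b,c\}$.

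A second point is that $R\subseteq X\times Y$ requires $F(x)\neq F(y)$. This holds because every $x$ is connected and every $y$ is disconnected.

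Finally, to meet the ``$2n$-vertex $(n-1)$-regular graph'' wording, we take the hard family on $2n$ vertices; replacing $n$ by $n/2$ changes constants only.
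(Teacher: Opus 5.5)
Your argument is correct and is essentially the paper's proof. Both use the adversary method with the two-clique graph on one side and its $\Theta(n^4)$ bridge crossings on the other, giving $\sqrt{1\cdot n^4/n^2}=n$. The paper calls the disconnected side $X$, fixes a single encoding of it, and puts the crossings in $Y$; your naming is swapped and your disconnected side ranges over all port numberings, which is harmless.

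Do delete the intermediate claim $\m=\mm=\Theta(n^4)$ and the ``natural orientation'' detour. They contradict your own observation that each crossed graph uncrosses uniquely, and your final accounting $\m=1$, $\mm=\Theta(n^4)$, $\lmax=O(n^2)$ is the right one.
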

\begin{proof}

We prepare our setting for applying Theorem~\ref{lowerbound}. The function $F$ we consider is defined by $F(G)=1$ if $G$ is connected and $F(G)=0$ otherwise. We are going to construct $R$ such that $R=  \{G\}\times Y$, thus $X=\{G\}$.
We fix a graph $G$ and its adjacency arrays $f_v:[n]\to[2n]$, for $v\in[2n]$ as follows. Let $G$ be the union of two disjoint $n$-cliques, the first one over vertex set $[n]$, and the second one over $(n,2n]$. Then $(f_v)_{v\in[2n]}$ represents any fixed encoding of $G$. Thus $\mm=l_{y,\new{j}}=1$.

We now consider $\m=n^4$ possible modifications of $G$ (and its adjacency arrays) in order to generate the set $Y$. 
The graphs encoded in $Y$, and illustrated in \Cref{figspan}, are obtained from $G$ with two additional bridge connecting the two cliques, 
and the removal of one edge in each clique in order to preserve the regularity of the resulting graph.
More precisely, for $(a,b)\in[n]^2$ and $(c,d)\in (n,2n]^2$, we construct a graph $G^{(a,b,c,d)}\in Y$
such that edges $\{a,b\}$ and $\{c,d\}$ are deleted from $G$ and replaced by edges $\{a,c\}$ and $\{b,d\}$. 

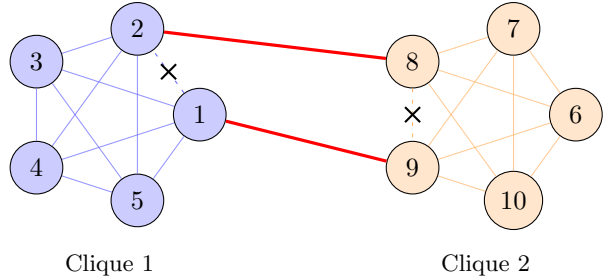
\begin{figure}[h]
\centering
\begin{tikzpicture}[every node/.style={font=\small},
clique1/.style={circle,draw=black,fill=blue!20,minimum size=7mm},
clique2/.style={circle,draw=black,fill=orange!20,minimum size=7mm},
labelstyle/.style={font=\footnotesize}]
\def\n{5}\def\radius{1.2}
\foreach \i in {1,...,\n}{\pgfmathsetmacro{\angle}{360/\n*(\i-1)}\coordinate (L\i) at ({\radius*cos(\angle)},{\radius*sin(\angle)});\node[clique1] (v\i) at (L\i) {$\i$};}
\foreach \i in {1,...,\n}{\pgfmathsetmacro{\angle}{360/\n*(\i-1)}\coordinate (R\i) at ({\radius*cos(\angle)+5},{\radius*sin(\angle)});\node[clique2] (w\i) at (R\i) {$\the\numexpr \i+\n\relax$};}
\foreach \i in {1,...,\n}{\foreach \j in {\i,...,\n}{\ifnum\i<\j \ifnum\i=1 \ifnum\j=2\else \draw[blue!60, thin, opacity=0.7] (v\i)--(v\j); \fi \else \draw[blue!60, thin, opacity=0.7] (v\i)--(v\j); \fi \fi}}
\foreach \i in {1,...,\n}{\foreach \j in {\i,...,\n}{\ifnum\i<\j \ifnum\i=3 \ifnum\j=4\else \draw[orange!60, thin, opacity=0.7] (w\i)--(w\j); \fi \else \draw[orange!60, thin, opacity=0.7] (w\i)--(w\j); \fi \fi}}
\draw[blue!60, thin, dash pattern=on 0.5mm off 1mm] (v1)--(v2);
\draw[orange!60, thin, dash pattern=on 0.5mm off 1mm] (w3)--(w4);
\path (v1) -- (v2) coordinate[midway] (mv);
\draw[black, thick] (mv) ++(-0.1,-0.1) -- ++(0.2,0.2);
\draw[black, thick] (mv) ++(-0.1,0.1) -- ++(0.2,-0.2);
\path (w3) -- (w4) coordinate[midway] (mw);
\draw[black, thick] (mw) ++(-0.1,-0.1) -- ++(0.2,0.2);
\draw[black, thick] (mw) ++(-0.1,0.1) -- ++(0.2,-0.2);
\draw[red,very thick] (v2)--(w3);
\draw[red,very thick] (v1)--(w4);
\node[labelstyle] at (0,-2) {Clique 1};
\node[labelstyle] at (5,-2) {Clique 2};
\end{tikzpicture}
\caption{Spanning-tree hard instance with $n=5$\label{figspan}}
\end{figure}

The corresponding encoding $G^{(a,b,c,d)}$ is obtained by modifying the adjacency arrays $(f_v)_{v\in[2n]}$ of $G$ in $4$ entries. Let 
%$i,j,k,l\in [n]$ be such that $f_{a}(i)=b$, $f_{b}(j)=a$, $f_{c}(k)=d$ and $f_{d}(l)=c$. Then the new lists are given by $f^{(a,b,c,d)}_{a}(i)=c$, $f^{(a,b,c,d)}_{b}(j)=d$, $f^{(a,b,c,d)}_{c}(k)=a$ and $f^{(a,b,c,d)}_{d}(l)=b$.
\new{$p,q,p',q'\in [n]$ be such that $f_{a}(p)=b$, $f_{b}(q)=a$, $f_{c}(p')=d$ and $f_{d}(q')=c$. Then the new lists are given by $f^{(a,b,c,d)}_{a}(p)=c$, $f^{(a,b,c,d)}_{b}(q)=d$, $f^{(a,b,c,d)}_{c}(p')=a$ and $f^{(a,b,c,d)}_{d}(q')=b$.}

We then directly observe that $l_{x,i}\leq n^2$.
%(\new{with $x$ being $G$ and $i$ one of the potential $4$ above queries)}.
Therefore, the query complexity for deciding whether $F(G) = 1$ or $F(G)=0$ (i.e., whether $G$ is connected or not) requires $\Omega(n)$ queries to $G$.
\end{proof}

\begin{theorem}
\label{lb:LE}
    Any quantum distributed algorithm solving Implicit Leader Election on graphs of diameter at least 3 must send $\Omega(n)$ quantum messages.
\end{theorem}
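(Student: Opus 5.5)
We argue by contradiction, mirroring the proof of \Cref{lb:BFS}, with \Cref{lb:centralizedConnectivity} as the query lower bound. The one new ingredient is a reduction from deciding connectivity to implicit leader election. Suppose some quantum distributed algorithm $\mathcal{A}$ solves implicit leader election with bounded error, using $o(n)$ messages on all graphs of diameter at least $3$ (and, as in the classical setting, on the disconnected instances, where each component runs it as a network of its own). We build from it a distributed algorithm that distinguishes the two families of \Cref{lb:centralizedConnectivity}: the connected diameter-$3$ graphs $G^{(a,b,c,d)}$ and the union $G$ of two disjoint $n$-cliques. This algorithm will use only $o(n)$ messages, and \Cref{lem:CentralizedToDistributedLowerBound} will then turn it into an $o(n)$-query centralized algorithm, contradicting the $\Omega(n)$ bound.

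The reduction works as follows. We run $\mathcal{A}$ on the input graph. If the graph is the disjoint union of two cliques, each clique is a separate network and $\mathcal{A}$ behaves independently on each. Since $\mathcal{A}$ must be correct for an $n$-node complete network, each clique elects exactly one leader with constant probability, giving two leaders overall. If the graph is connected, exactly one node is elected. In the centralized simulation we can read every node's final status register. We then output ``connected'' iff exactly one node has status $\elected$. With constant probability this distinguishes the two cases, and we amplify by constant repetition.

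The main obstacle is the disconnected case. It is not a legal input graph, and on the cliques themselves $\mathcal{A}$ may use more messages than on diameter-$3$ graphs. The cleaner way to handle this is an indistinguishability argument in the style of \cite{Kutten_2015_JACM}. Take a run of $\mathcal{A}$ on $G^{(a,b,c,d)}$ with $(a,b,c,d)$ chosen uniformly at random. Because the algorithm uses $o(n)$ messages, and by the adversary bound implicit in \Cref{lb:centralizedConnectivity} (the same weights used there), its final state on $G^{(a,b,c,d)}$ is close in trace distance to its state on $G$, where the two cliques evolve as two independent copies. Each copy elects a leader with probability bounded away from $0$, since from inside a clique $G$ and $G^{(a,b,c,d)}$ are locally identical until a bridge edge is traversed. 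The elected status vector on $G$ would then contain two leaders with constant probability, which violates correctness on $G^{(a,b,c,d)}$. Concretely, I would apply the hybrid argument: the total query magnitude on the four modified entries, averaged over $(a,b,c,d)$, is $O(M^2/n^2)$, which is $o(1)$ when $M=o(n)$. To make the budget uniform, I would also truncate $\mathcal{A}$ after $M$ messages, by counting on a global register in the simulation. This makes the message bound hold on $G$ as well.

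Finally, we verify the parameters. The graphs have $2n$ vertices and are $(n-1)$-regular, so the degrees carry no information, consistent with the model of \Cref{lem:CentralizedToDistributedLowerBound}. The yes-instances have diameter exactly $3$, so the bound $\Omega(n)$ applies to graphs of diameter at least $3$. Larger diameters and sizes follow by standard padding, attaching a path to the construction if needed.
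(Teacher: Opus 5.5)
\textbf{Verdict.} Your skeleton is the paper's: contradiction, the QCMC reduction (Lemma~\ref{lem:CentralizedToDistributedLowerBound}), Theorem~\ref{lb:centralizedConnectivity}, and counting elected nodes. The paper treats the two-clique instance $G$ as your second paragraph does: it simply asserts that each component runs an independent copy of the algorithm and elects one leader. You are right that the hypothesis does not give this. The components are cliques of diameter $1$, and nothing bounds the number of messages on them. However, the indistinguishability argument you offer as the repair has a genuine gap as written.

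\textbf{The gap.} The last step does not yield a contradiction. Showing that $G$ has two leaders with \emph{some} constant probability does not violate bounded-error correctness on $G^{(a,b,c,d)}$. You need $\Pr_G[\text{not exactly one leader}]>1/3+o(1)$. The claim that each clique elects a leader with probability bounded away from $0$ is also unsupported. ``Locally identical until a bridge edge is traversed'' says nothing about where leaders appear. Independence of the halves alone does not exclude the half $[n]$ always electing exactly one node and the half $(n,2n]$ never electing one. Finally, a single global message counter couples the two cliques. After truncation, the run on $G$ is no longer the product of two independent clique runs, which is exactly what you then invoke.

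\textbf{The missing idea: symmetry, with a truncation that respects it.}
\begin{itemize}
\item Fix the encoding of $G$ so that the clique on $(n,2n]$ is the shifted copy of the clique on $[n]$. The adversary bound in Theorem~\ref{lb:centralizedConnectivity} holds for any fixed encoding.
\item Enforce the message budget separately for messages sent by vertices of $[n]$ and by vertices of $(n,2n]$. The simulator knows vertex labels, and this at most doubles the query count. It is inactive on every $G^{(a,b,c,d)}$, since each side sends no more than the whole network does.
\item On $G$ the two halves then never interact, and nodes are anonymous with private randomness. So the leader counts $L_1,L_2$ of the two halves are i.i.d., giving $\Pr[L_1+L_2=1]=2\Pr[L_1=0]\Pr[L_1=1]\le 1/2$.
\item On every $G^{(a,b,c,d)}$, exactly one leader appears with probability at least $2/3$.
\end{itemize}

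\textbf{Finishing.} Repeat $O(1)$ times and threshold the frequency of ``exactly one leader'' at $7/12$. This gives a bounded-error distinguisher with $O(M)$ queries, so Theorem~\ref{lb:centralizedConnectivity} applies as a black box and the hybrid computation is unnecessary. Your hybrid bound would also finish the argument once symmetry is in place. Note, though, that the averaged total query magnitude on the four modified entries is $O(M/n^2)$. It is the squared average distance that is $O(M^2/n^2)$. The padding step is not needed, since the diameter-$3$ instances already lie in the class covered by the statement.
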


\begin{proof}
    Assume by contradiction that there exists some quantum distributed algorithm solving implicit leader election on graphs of diameter at least 3 in $o(n)$ messages. Note that this also includes disconnected communication graphs, and for such graphs, each connected component essentially runs one independent instance of the leader election algorithm. Hence, the above distributed algorithm
    would also guarantee that exactly one leader is elected per connected component.
    Thus, by Lemma \ref{lem:CentralizedToDistributedLowerBound}, there exists a quantum algorithm which, given query access to any graph $G$ of diameter at least 3, uses only $o(n)$ queries to compute exactly one leader per connected component of $G$. By subsequently counting the number of leaders (which requires no extra query, only a simple computation), this quantum algorithm can distinguish between connected diameter-3 graphs and graphs made of two disjoint cliques in $o(n)$ queries. This contradicts Theorem \ref{lb:centralizedConnectivity}, and thus proves the desired distributed quantum communication lower bound for implicit leader election.
\end{proof}

In fact, the above $\Omega(n)$ quantum communication lower bound applies to any global problem that can be reduced to leader election using sublinear (in $n$) communication in our distributed setting. In particular, the above lower bound also applies to spanning tree (ST) and minimum spanning tree (MST) construction; the corresponding reduction is straightforward: after computing a spanning tree, the root becomes the leader. Combined with our MST algorithm from Section \ref{sec:LE}, we get that this $\Omega(n)$ communication lower bound is tight for these problems. However, note that when the communication graph has diameter 1 or 2, there exist algorithms using $o(n)$ quantum communication~\cite{DMP25}.

\section{Conclusion and Open Problems}
\label{sec:conclusion}

We presented distributed quantum algorithms for various fundamental problems that are nearly optimal in terms of their message complexity in the quantum routing model. We also proved their near optimality by presenting almost matching quantum message lower bounds. 
Our techniques — both algorithmically and lower bound-wise — can be helpful for designing communication-optimal distributed algorithms for other significant problems, such as shortest paths and minimum cuts.

Our work raises important open questions. Perhaps the most fundamental is whether we can improve the round complexity of our communication-optimal algorithms. As it is, our algorithms are only existentially optimal with respect to round complexity (cf. Section \ref{sec:contributions}). We {\em conjecture} that a tradeoff exists between communication and round complexity in the quantum setting. In particular, we conjecture that it may not be possible to improve the round complexity of distributed quantum algorithms for the problems considered here if we require them to be communication-optimal (or near-optimal). Proving or disproving this conjecture
and/or showing round-communication tradeoffs will be of great interest.

\bibliographystyle{alpha}
\bibliography{biblio}

\appendix

\section{Discussion on the quantum routing model and its physical feasibility}

\label{app:formal}

Both fields of quantum distributed computing and quantum networks are highly active today. For some people, these may appear to be hypothetical scenarios; however, the early years of quantum computing were similarly characterized by numerous objections from physicists, and today, the first quantum computers are on the horizon.

Building single-quantum computers that operate with a large number of noise-free qubits has been a major challenge. Indeed, till recently, it has been difficult to build quantum computers with more than a few {\em logical} qubits \cite{googlequantum}.
Quantum computers may, in fact, require distributed architectures to scale, since individual quantum units may be limited to a relatively small number of qubits due to physical constraints.(see, e.g.,\cite{jacinto2025,quantumdatacenter}). 
At the same time, large quantum cryptography systems need to rely on large quantum networks, sometimes called {\em quantum Internet} (see e.g.,
\cite{QIA, julien}).

Although the two distributed notions differ, both have motivated the development of the theoretical foundations of quantum distributed computing.
Most of these studies are from the perspective
of {\em round} complexity, showing that either there is no quantum advantage for some problems or there is for some other problems (cf.
Section \ref{subsec:additionalRelatedWork}).

Regarding {\em message} complexity, the work is more recent \cite{icalp24,DMP25,roget2025, legall2025,robinson2026}. This can be understood from several recent developments in sublinear-message-complexity randomized (and thus classical) protocols, as well as from several physics experiments demonstrating that quantum routing is physically achievable.

The scalability of quantum routing in a superposition of trajectories will face physical implementation challenges, which are beyond the scope of this work. However, such challenges are not dissimilar to those encountered when realizing quantum access to classical RAM. Such a component is indeed a necessary condition for quantum advantage in several applications in classical machine learning. 

Indeed, when it comes to communication, the most promising quantum resources are photons. However, with respect to distributed resources, experiments have shown that a single photon in a coherent superposition of $N$ trajectories is easier to realize than $N$ indistinguishable, synchronized photons. Indeed, an experiment has already been reported in 2010 with $4$ paths entangled within $4$ quantum memories \cite{naturephotons}. There have also been recent experiments demonstrating quantum control for routing photons \cite{opticaphoton}. Only recently has the manipulation of synchronized and identical photons become possible.
One of the world's track records is 12 entangled photons obtained by postselecting 6 sources of photon pairs \cite{prlphoton}. 
In fact, manipulating more photons is such a challenge that some quantum technologies combine both approaches in order to increase the number of encoded qubits \cite{meng2024versatile}.

Such quantum networks would require a perfect setting that relies on frequent synchronization over all the network links. But this part is not computing anything, and we do not take it into account in our complexity measure, where we assume
that a single photon can be sent in superposition
to its neighbors with little overhead. This is similar to several quantum analogues of classical complexity measures, such as {\em quantum query complexity} (cf. Section \ref{subsec:additionalRelatedWork}).
Let us also emphasize that, as in the usual LOCAL/CONGEST models, the local topology of each node is known locally; thus, nodes know their incoming and outgoing links, 
possibly together with the physical implementation parameters (such as the travel time per link for practical implementation considerations).

\end{document}